\documentclass[10pt]{article}

\usepackage{natbib}
\usepackage[pdftex, final]{graphicx}
\DeclareGraphicsExtensions{.png,.pdf,.jpg}
\usepackage{enumerate}
\usepackage{url} % not crucial - just used below for the URL 
\usepackage{soul}
\usepackage[scriptsize]{subfigure}
\usepackage{amsmath}
\usepackage{bm}
\usepackage{bbm}
\usepackage{verbatim}
\usepackage{amssymb,amsthm,amsmath}
\usepackage{dsfont}
\usepackage{mathrsfs}

\usepackage{authblk}
\usepackage{setspace}

% mathds commands
\def\E{\mathds{E}}
\def\G{\mathcal{G}}

\def\P{\mathds{P}}
\def\R{\mathds{R}}

% mathrm commands
\def\de{\mathrm{d}}

%%% BNP PROCESSES 

\def\CRM{\mathrm{CRM}}

\def\ind{\mathds{1}}

%%% DISTRIBUTIONS
\def\Bern{\mathrm{Bernoulli}}
\def\Beta{\mathrm{Beta}}
\def\NegBin{\mathrm{NegativeBinonial}}
\def\Gammad{\mathrm{Gamma}}

% PREDICTOR MACROS
\newcommand{\largestjump}[2]{\Delta_{#1,#2}}

\newcommand{\stablenews}[2]{\gamma_{#1}^{(#2)}}
\newcommand{\stablenewsfreq}[3]{\rho_{#1}^{(#2,#3)}}

\newcommand{\predaccuracy}[2]{v_{#1}^{(#2)}}
\newcommand{\predaccuracyfreq}[3]{v_{#1}^{(#2,#3)}}
\newcommand{\unseen}[2]{U_{#1}^{(#2)}}
\newcommand{\unseenfreq}[3]{U_{#1}^{(#2,#3)}}
 % predictor for number of new variants in #2 samples after #1 initial samples
 % predictor for number of new variants that occur #3 times in #2 samples after #1 initial samples
\newcommand{\estpred}[2]{\hat{U}_{#1}^{(#2)}} % version of predictor for number of new variants in #2 samples after #1 initial samples with all the estimated hyperparameters as input
\newcommand{\estpredfreq}[3]{\hat{U}_{#1}^{(#2, #3)}} % version of predictor for number of new variants in #2 samples after #1 initial samples with all the estimated hyperparameters as 

%\pdfminorversion=4
% NOTE: To produce blinded version, replace "0" with "1" below.

% DON'T change margins - should be 1 inch all around.
\addtolength{\oddsidemargin}{-.5in}%
\addtolength{\evensidemargin}{-1in}%
\addtolength{\textwidth}{1in}%
\addtolength{\textheight}{1.7in}%
\addtolength{\topmargin}{-1in}%

\newtheorem{lemma}{Lemma}
\newtheorem{theorem}{Theorem}
\newtheorem{proposition}{Proposition}
\newtheorem{definition}{Definition}

\newtheorem{remark}{Remark}
\def\simind{\stackrel{\mbox{\scriptsize{ind}}}{\sim}}
\def\simiid{\stackrel{\mbox{\scriptsize{iid}}}{\sim}}

\usepackage{mathtools}
\usepackage[colorlinks=true,linkcolor=blue,citecolor=blue]{hyperref}
\usepackage{cleveref}

\usepackage[top=1.2in,bottom=1.2in,left=1.2in,right=1.2in]{geometry}

\onehalfspacing

\begin{document}

\title{\bf {\Large{ Scaled process priors for Bayesian nonparametric estimation of the unseen genetic variation}}}

\author[,1]{Federico Camerlenghi \thanks{Also affiliated to Collegio Carlo Alberto, Piazza V. Arbarello 8, Torino, and BIDSA, Bocconi University, Milano, Italy; federico.camerlenghi@unimib.it}}
\author[,2]{Stefano Favaro \thanks{Also affiliated to Collegio Carlo Alberto, Piazza V. Arbarello 8, Torino, and IMATI-CNR ``Enrico  Magenes", Milan, Italy; stefano.favaro@unito.it}}
\author[,3]{Lorenzo Masoero \thanks{lom@mit.edu}}
\author[,3]{Tamara Broderick \thanks{tbroderick@csail.mit.edu}}

\affil[1]{Department of Economics, Management and Statistics, University of Milano - Bicocca, Piazza dell'Ateneo Nuovo 1, Milano}
\affil[2]{Department of Economics and Statistics, University of Torino, Corso Unione Sovietica 218/bis, Torino}
\affil[3]{Department of Electrical Engineering and Computer Science, CSAIL, Massachusetts Institute of Technology, Cambridge, Massachusetts, USA}

\date{}
\maketitle
\thispagestyle{empty}

\setcounter{page}{1}

\begin{abstract}
There is a growing interest in the estimation of the number of unseen features, mostly driven by biological applications. A recent work brought out a peculiar property of the popular completely random measures (CRMs) as prior models in Bayesian nonparametric (BNP) inference for the unseen-features problem: for fixed prior's parameters, they all lead to a Poisson posterior distribution for the number of unseen features, which depends on the sampling information only through the sample size. CRMs are thus not a flexible prior model for the unseen-features problem and, while the Poisson posterior distribution may be appealing for analytical tractability and ease of  interpretability, its independence from the sampling information makes the BNP approach a questionable oversimplification, with posterior inferences being completely determined by the estimation of unknown prior's parameters. In this paper, we introduce the stable-Beta scaled process (SB-SP) prior, and we show that it allows to enrich the posterior distribution of the number of unseen features arising under CRM priors, while maintaining its analytical tractability and interpretability. That is, the SB-SP prior leads to a negative Binomial posterior distribution, which depends on the sampling information through the sample size and the number of distinct features, with corresponding estimates being simple, linear in the sampling information and computationally efficient. We apply our BNP approach to synthetic data and to real cancer genomic data, showing that: i) it outperforms the most popular parametric and nonparametric competitors in terms of estimation accuracy; ii) it provides improved coverage for the estimation with respect to a BNP approach under CRM priors.
\end{abstract}

\noindent\textsc{Keywords}: { Bayesian nonparametrics, Beta process prior, completely random measure, genetic variation, predictive distribution, scaled process prior, stable process, unseen-features problem}

%%%%%%%%%%%%%%%%%%%%%%%%%%%%%%%%
%%%%%%%%%%%%%%%%%%%%%%%%%%%%%%%%
%%%%%%%%%%%%%%%%%%%%%%%%%%%%%%%%
%%%%%%%%%%%%%%%%%%%%%%%%%%%%%%%%

\section{Introduction} \label{sec:intro}

The problem of estimating the number of unseen features generalizes the popular unseen-species problem \citep{orlitsky2016optimal}, and its importance has grown dramatically in recent years, driven by applications in biological sciences \citep{ionita2009estimating,gravel2014predicting,zou2016quantifying,chakraborty2019somatic}. Consider a generic population in which each individual is endowed with a finite collection of $\mathds{W}$-valued features, with $\mathds{W}$ possibly being an infinite space, and denote by $p_{i}$ the probability that an individual has feature $w_{i}\in\mathds{W}$ for $i\geq1$. The unseen-features problem assumes $N\geq1$ observable random samples $Z_{1:N}=(Z_{1},\ldots,Z_{N})$ from the population, such that $Z_{n}=(A_{n,i})_{i\geq1}$ are independent Bernoulli random variables with unknown parameters $(p_{i})_{i\geq1}$. Then, the goal is to estimate the number of hitherto unseen features that would be observed if $M\geq1$ additional samples were collected, i.e.
\begin{displaymath}
U=\sum_{i \geq 1} \ind\left(\sum_{n=1}^N A_{n,i} = 0\right)\ind\left(\sum_{m=1}^M A_{N+m,i} > 0\right),
\end{displaymath}
with $\ind$ being the indicator function. The unseen-species problem arises under the assumption that each individual is endowed with only one feature, i.e. a species. A wide range of approaches have been developed to estimate $U$, including Bayesian methods \citep{ionita2009estimating,masoero2019more}, jackknife \citep{gravel2014predicting}, linear programming \citep{zou2016quantifying}, and variations of Good-Toulmin estimators \citep{orlitsky2016optimal,chakraborty2019somatic}. 

In biological sciences, we may think of individuals as organisms and of features as groups to which organisms belong to, with each group being defined by any difference in the genome relative to a reference genome, i.e.\ a (genetic) variant. In human biology, the estimation of $U$ arises in the context of optimal allocation of resources between quantity and quality in genetic experiments: spending resources to sequence a greater number of genomes (quantity), which reveals more about variation across the population, or spending resources to sequence genomes with increased accuracy (quality), which reveals more about individual organisms' genomes. Accurate estimates of $U$ are critical in the experimental pipeline towards the goal of maximizing the usefulness of experiments under the trade-off between quantity and quality \citep{ionita2010optimal,zou2016quantifying}. While in human-biology the cost of  sequencing has decreased in recent years \citep{schwarze2020complete}, the expense remains non-trivial, and it is still critical in fields where scientists work with relatively budgets, e.g.\ non-human and non-model organisms \citep{souza2017efficiency}. Other applications arise in precision medicine \citep{momozawa2020unique}, microbiome analysis \citep{sanders2019optimizing}, single-cell sequencing \citep{zhang2020determining} and wildlife monitoring \citep{johansson2020identification}.

\subsection{Our contributions}

We introduce a Bayesian nonparametric (BNP) approach to the unseen-features problem, which relies on a novel prior distribution for the unknown $(p_{i})_{i\geq1}$. Completely random measures (CRMs) \citep{kingman1992poisson} provide a broad class of nonparametric priors for feature sampling problems, the most popular being the stable-Beta process prior \citep{james2017bayesian,broderick2018posteriors}. In a recent work, \citet{masoero2019more} brought out a peculiar feature of CRM priors in the unseen-features problem: they all lead to a Poisson posterior distribution of $U$, given $Z_{1:N}$ and fixed prior's parameters, which depends on $Z_{1:N}$ only through the sample size $N$. Despite the broadness of the class of CRM priors, such a common Poisson posterior structure makes CRMs not a flexible prior model for the unseen-features problem. While the Poisson posterior distribution may be appealing in principle, making posterior inferences analytically tractable and easy to interpret, its independence from $Z_{1:N}$ makes the BNP approach a questionable oversimplification, with posterior inferences being completely determined by the estimation of the unknown prior's parameters.  A somehow similar scenario occurs in BNP inference for the unseen-species problem under a Dirichlet process (DP) prior \citep{ferguson1973bayesian}, and led to the use of the Pitman-Yor process (PYP) prior \citep{pityor_97} for enriching the posterior distribution of the number of unseen species, while maintaining analytical tractability and interpretability of the DP prior \citep{Lijoi_07}.

We show that scaled process (SP) priors, first introduced in \citet{james2015scaled}, allow to enrich the posterior distribution of $U$ arising under CRM priors. Under SP priors, we characterize the posterior distribution of $U$ as a mixture of Poisson distributions that may include, through the mixing distribution, the whole sampling information in terms of the number of distinct features and their frequencies. While this is appealing in principle, it may be at stake with analytical tractability and interpretability, which are critical for a concrete use of SP priors. Then, we introduce the stable-Beta SP (SB-SP) prior, which provides a sensible trade-off between the amount of sampling information introduced in the posterior distribution of $U$, and analytical tractability and interpretability of the posterior inferences. In particular, we characterize the SB-SP prior as the sole SP prior for which the posterior distribution of $U$, given $Z_{1:N}$ and fixed prior's parameters, depends on $Z_{1:N}$ through the sample size $N$ and the number $K_{N}$ of distinct features; the SB-SP may thus be considered as the natural counterpart of the PYP for the unseen-feature problem. Under the SB-SP prior, the posterior distribution of $U$, as well as of a refinement of $U$ that deals with the number of unseen rare features, is a negative Binomial posterior distributions, whose parameters depend on $N$, $K_{N}$ and the prior's parameters. Corresponding Bayesian estimates of $U$, with respect to a squared loss function, are simple, linear in $K_{N}$ and computationally efficient.

We present an empirical validation of the effectiveness of our BNP methodology, both on synthetic and real data. As for real data, we consider cancer genomic data, where the goal is to estimate the number of new (genomic) variants to be discovered in future unobservable samples. In cancer genomics, accurate estimates of the number of new variants is of particular importance, as it might help practitioners understand the site of origin of cancers, as well as the clonal origin of metastasis, and in turn be a useful tool to develop effective clinical strategies \citep{chakraborty2019somatic,huyghe2019discovery}. We make use of data from the cancer genome atlas (TCGA), and focus on the challenging scenario in which the sample size $N$ is particularly small, and also small with respect to the extrapolation size $M$. Such a scenario is of interest in genomic applications, where only few samples of rare cancer might be available. We show that our BNP methodology outperforms the most popular parametric and nonparametric competitors, both classical (frequentist) and Bayesian, in terms of estimation accuracy of $U$ and a refinement of $U$ for rare features. In addition, with respect to the BNP approach under the stable-Beta process prior \citep{masoero2019more}, our approach provides improved coverage for the estimation. This is an empirical evidence of the effectiveness of replacing the Poisson posterior distribution with the negative Binomial posterior distribution, which allows to better exploit the sampling information.

\subsection{Organization of the paper}

In Section \ref{sec:theory} we show how SP priors allow to enrich the posterior distribution of $U$ arising under CRM priors. In Section \ref{sec:theory1} we introduce and investigate the SB-SP prior in the context of the unseen-features problem: i) we characterize the SB-SP prior in the class of SP priors, providing its predictive distribution; ii) we apply the SB-SP prior to the unseen-features problem, providing the posterior distribution of $U$ and a BNP estimator. Section \ref{sec:exp} contains illustrations of our  method. In Section \ref{sec:discussion} we discuss our approach, a multivariate extension of it, and  future research directions. Proofs and additional experiments are in the Appendix.

%%%%%%%%%%%%%%%%%%%%%%%%%%%%%%%%
%%%%%%%%%%%%%%%%%%%%%%%%%%%%%%%%
%%%%%%%%%%%%%%%%%%%%%%%%%%%%%%%%
%%%%%%%%%%%%%%%%%%%%%%%%%%%%%%%%
% \input{current_version/theory_scaled}

\section{Scaled process priors for feature sampling problems} \label{sec:theory}

For a measurable space of features $\mathds{W}$, we assume $N\geq1$ observable individuals to be modeled as a random sample $Z_{1:N}$ from the $\{0,1\}$-valued stochastic process $Z(w)  = \sum_{i \ge 1} A_{i}\delta_{w_i}(w)$, $w\in\mathds{W}$, where $(w_{i})_{i\geq1}$ are features in $\mathds{W}$ and $(A_{i})_{i\geq1}$ are independent Bernoulli random variables with unknown parameters $(p_{i})_{i\geq1}$, $p_{i}$ being the probability that an individual has feature $w_{i}$, for $i\geq1$. That is, $Z$ is a Bernoulli process with parameter $\zeta=\sum_{i\geq1}p_{i}\delta_{w_{i}}$, denoted as $\text{\rm BeP}(\zeta)$. BNP inference for feature sampling problems relies on the specification of a prior distribution on the discrete measure $\zeta$, leading to the BNP-Bernoulli model,
\begin{align}\label{exch_mod}
Z_n\,|\,\zeta & \quad\simiid\quad \text{\rm BeP}(\zeta) \qquad n=1,\ldots,N,\\
\notag \zeta& \quad\sim\quad\mathscr{Z},
\end{align}
namely $\zeta$ is a discrete random measure on $\mathds{W}$ whose law $\mathscr{Z}$ takes on the interpretation of a prior distribution for the unknown feature's composition of the population. By de Finetti's theorem, the random variables $Z_{n}$'s in \eqref{exch_mod} are exchangeable with directing measure $\mathscr{Z}$ \citep{Aldous_83}.
In this section, we show how SP priors for $\zeta$ \citep{james2015scaled} allow to enrich the posterior distribution of the number of unseen features arising under CRM priors.

\subsection{CRM priors for Bernoulli processes}

CRMs provide a standard tool to define nonparametric prior distributions on the parameter $\zeta$ of the Bernoulli process $Z$. Consider a homogeneous CRM $\mu_{0}$ on $\mathds{W}$, i.e.\ $\mu_{0}~=~\sum_{i\geq1}\rho_{i}\delta_{W_{i}}$, where the $\rho_{i}$'s are $(0,1)$-valued random atoms such that $\sum_{i\geq1}\rho_{i}<+\infty$, while the $W_{i}$'s are i.i.d.\ $\mathds{W}$-valued random locations independent of the $\rho_{i}$'s. The law of $\mu_{0}$ is characterized, through L\'evy-Khintchine formula, by the L\'evy intensity measure $\nu_{0} (\de s , \de w)= \lambda_{0} (s) \de s P (\de w)$ on $(0,1) \times \mathds{W}$, where: i) $\lambda_{0}$ is a measure on $(0,1)$,  which controls the distribution of the $\rho_{i}$'s, and such that $\int_{(0,1)} \min \{ s,1 \} \lambda_{0} (s) \de s < +\infty$; ii) $P$ is a non-atomic measure on $\mathds{W}$, which controls the distribution of the $W_{i}$'s. For short, $\mu_0 \sim \CRM(\nu_{0})$. See Appendix \ref{appendix_crm} for an account on CRMs \citep[Chapter 8]{kingman1992poisson}. Note that, since $P$ is non-atomic, the random atoms $W_{i}$'s are almost surely distinct, that is to say the different features cannot coincide almost surely. The law of $\mu_{0}$ provides a natural prior distribution for the parameter $\zeta$ of the Bernoulli process. The Beta and the stable-Beta processes are popular examples of $\mu_0 \sim \CRM(\nu_{0})$, for suitable specifications of $\nu_{0}$. A comprehensive posterior analysis of CRM priors is presented in  \cite{james2017bayesian}. In the next proposition, we recall the predictive distribution of CRM priors \citep[Proposition 3.2]{james2017bayesian}.

\begin{proposition} \label{prop:predictive_IBP}
Let $Z_{1:N}$ be a random sample from \eqref{exch_mod} with $\zeta \sim \CRM(\nu_{0})$. If $Z_{1:N}$ displays $K_{N}=k$ distinct features $\{W_{1}^{\ast},\ldots,W^{\ast}_{K_{N}}\}$, each feature $W_i^*$ appearing exactly $M_{N,i}=m_i$ times, then the conditional distribution of $Z_{N+1}$, given $Z_{1:N}$, coincides with the distribution of
\begin{equation}\label{eq:predictive_IBP}
Z_{N+1}\,|\, Z_{1:N} \stackrel{d}{=} Z_{N+1}^{\prime}+\sum_{i=1}^{K_N} A_{N+1,i} \delta_{W_{i}^{\ast}},
\end{equation}
where: i) $Z_{N+1}^{\prime} \,|\, \mu_{0}^{\prime}= \sum_{i \geq 1} A_{N+1,i}^{\prime} \delta_{W_{i}^{\prime}}\sim\text{\rm BeP}(\mu_{0}^{\prime})$ and $\mu_{0}^{\prime}\sim\CRM(\nu_{0}^{\prime})$, with $\nu^{\prime}_{0} (\de s , \de w) =(1-s)^N \lambda_0 (s) \de s P(\de w)$; ii) the $A_{N+1,i}$'s are independent Bernoulli random variables with parameters $J_{i}$'s, such that $J_{i}$ is distributed according to the density function $f_{J_i} (s)\propto (1-s)^{N-m_{i}} s^{m_{i}} \lambda_0 (s)$ for $i\geq1$.
\end{proposition}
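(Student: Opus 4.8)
The plan is to obtain the predictive law in two stages: first characterize the posterior distribution of the random measure $\mu_0$ given $Z_{1:N}$, and then integrate the Bernoulli process likelihood for $Z_{N+1}$ against this posterior. The two components appearing in \eqref{eq:predictive_IBP} should emerge directly from the two components of the posterior of $\mu_0$, namely an ``ordinary'' diffuse part carrying the as-yet-unseen atoms and a collection of fixed atoms sitting at the observed features $W_1^*,\ldots,W_{K_N}^*$.

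First I would represent $\mu_0 \sim \CRM(\nu_0)$ through its underlying Poisson random measure on $(0,1)\times\mathds{W}$ with mean intensity $\nu_0(\de s,\de w)=\lambda_0(s)\,\de s\,P(\de w)$, and write out the likelihood of $Z_{1:N}$. Conditionally on $\mu_0$, each atom of mass $s$ is included in each of the $N$ samples independently with probability $s$, so the likelihood factorizes over the atoms, an atom of mass $s$ with total count $c$ across the samples contributing a factor $s^{c}(1-s)^{N-c}$. The key step is then a Palm/disintegration argument for Poisson random measures: the atoms never observed (count $c=0$) survive into the posterior with their intensity thinned by the factor $(1-s)^N$, while each of the $K_N=k$ observed features pins a point of the random measure to a fixed location, its mass reweighted by the likelihood contribution $s^{m_i}(1-s)^{N-m_i}$.

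Carrying out this disintegration gives that, conditionally on $Z_{1:N}$, the measure $\mu_0$ is the independent superposition of $\mu_0'\sim\CRM(\nu_0')$ with $\nu_0'(\de s,\de w)=(1-s)^N\lambda_0(s)\,\de s\,P(\de w)$ and of $k$ fixed atoms at $W_1^*,\ldots,W_{K_N}^*$ with masses $J_1,\ldots,J_{K_N}$ independently distributed as $f_{J_i}(s)\propto(1-s)^{N-m_i}s^{m_i}\lambda_0(s)$; this is exactly the posterior characterization of \citet{james2017bayesian}. Each normalizing constant $\int_{(0,1)}(1-s)^{N-m_i}s^{m_i}\lambda_0(s)\,\de s$ is finite since $m_i\geq1$ forces $(1-s)^{N-m_i}s^{m_i}\leq s$ on $(0,1)$, which is integrable against $\lambda_0$ by the assumption $\int_{(0,1)}\min\{s,1\}\lambda_0(s)\,\de s<+\infty$. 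Sampling $Z_{N+1}\,|\,\mu_0\sim\BeP(\mu_0)$ from this posterior, and using that a Bernoulli process over a superposition of independent measures is the superposition of the corresponding independent Bernoulli processes, the ordinary component yields $Z_{N+1}'$ with $Z_{N+1}'\,|\,\mu_0'\sim\BeP(\mu_0')$, while each fixed atom $J_i$ contributes $W_i^*$ with probability $J_i$, i.e.\ an independent $\Bern(J_i)$ indicator $A_{N+1,i}$. Collecting the terms reproduces \eqref{eq:predictive_IBP}.

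I expect the main obstacle to be the rigorous disintegration step: justifying via Palm calculus the passage from the prior intensity to the posterior split into a $(1-s)^N$-thinned continuum plus finitely many reweighted fixed atoms, while controlling the infinitely many unobserved atoms. The delicate bookkeeping separating the finitely many observed features from the surviving continuum is where the technical care concentrates; once the posterior is in hand, the predictive follows by a routine superposition argument.
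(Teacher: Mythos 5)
Your proposal is correct and follows essentially the same route as the paper, which does not prove this proposition itself but recalls it from \citet[Proposition 3.2]{james2017bayesian}: the argument there (and in the paper's own proof of the scaled-process generalization in Appendix B, which conditions on $\Delta_{1,h}$ and invokes the same result) is exactly your two-stage scheme of first obtaining the posterior of $\mu_{0}$ as the superposition of a $(1-s)^{N}$-thinned CRM and fixed atoms with reweighted jump densities, and then sampling $Z_{N+1}$ by the Bernoulli-process superposition property. Your finiteness check for the normalizing constants and the Palm/disintegration step are the right technical ingredients and match the standard derivation.
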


According to \eqref{eq:predictive_IBP}, $Z_{N+1}$ displays ``new" features $W_{i}^{\prime}$'s, i.e.\ features not appearing in the initial sample $Z_{1:N}$, and ``old" features $W_{i}^{\ast}$'s, i.e.\ features appeared in the initial sample $Z_{1:N}$. The posterior distribution of statistics of ``new" features is determined by the law of $Z_{N+1}^{\prime}$, which depends on $Z_{1:N}$ only through the sample size $N$; the posterior distribution of statistics of ``old" features is determined by the law of $\sum_{1\leq i\leq K_{N}}A_{N+1, i} \delta_{W_{i}^{\ast}}$, which depends on  $Z_{1:N}$ through the sample size $N$, the number $K_{N}$ of distinct features and their frequencies $(M_{N,1},\ldots,M_{N,K_{N}})$. As a corollary of Proposition \ref{prop:predictive_IBP}, the posterior distribution of the number of ``new" features in $(Z_{N+1},\ldots,Z_{N+M})$, given $Z_{1:N}$ and fixed prior's parameters, is a Poisson distribution that depends on $Z_{1:N}$ only through $N$ \citep{masoero2019more}. Such a posterior structure is peculiar to CRM priors, being inherited by the Poisson process formulation of CRMs \citep{kingman1992poisson}. That is, despite the broadness of the class of CRM priors, all CRM priors lead to the same Poisson posterior structure for the number of unseen features, which thus makes them not a flexible prior model for the unseen-features problem. While the Poisson posterior distribution may be appealing in principle, making the posterior inferences analytically tractable and of easy interpretability, its independence from $Z_{1:N}$ makes the BNP approach under CRM priors a questionable oversimplification, with posterior inferences being completely determined by the estimation of unknown prior's parameters.

\begin{remark}
For the sake of mathematical convenience, and in agreement with the work of \cite{james2017bayesian}, in the sequel we maintain the random measure formulation for both the prior model $\mu_{0}$ and the Bernoulli processes $Z_n$. However, we point out that each $Z_n$ is equivalently characterized by means of the Bernoulli variables $(A_{n,i})_{i \geq 1}$ and the random features $(W_i)_{i \geq 1}$. In other terms, there exits a one-to-one correspondence between $Z_n$ and the sequence of points $\{(A_{n,i}, W_i)\}_{i \geq 1}$. Finally, note that, although the values of features' labels $W_i$ are immaterial, the features $W_i$'s are assumed to be random. This is in line with the BNP literature on species sampling models, where the species' labels are assumed to be random \citep{Pitman1996}.
\end{remark}

\subsection{SP priors for Bernoulli processes}  \label{sec:SP_prior}

Consider a homogeneous CRM $\mu=\sum_{i\geq1}\tau_{i}\delta_{W_{i}}$ on $\mathds{W}$, where the $\tau_{i}$'s are non-negative and such that $\sum_{i\geq1}\tau_{i}<+\infty$, and the $W_{i}$'s are i.i.d. and independent of the $\tau_{i}$'s. We denote by $ \nu (\de s , \de w)= \lambda (s) \de s P (\de w)$ on $\R_{+} \times \mathds{W}$, with $\int_{\R_+} \min \{ s,1 \} \lambda (s) \de s < +\infty$, the L\'evy intensity measure of $\mu$. Let $\Delta_{1}>\Delta_{2}>\ldots$ be the decreasingly ordered $\tau_{i}$'s, and consider the discrete random measure
\begin{align*}
\mu_{\Delta_{1}} = \sum_{i\geq 1}\frac{\Delta_{i+1}}{\Delta_{1}}\delta_{W_{i+1}},
\end{align*}
such that $\Delta_{i+1}/\Delta_{1}\in(0,1)$, for $i\geq1$, and $\sum_{i\geq1}\Delta_{i+1}/\Delta_{1}<+\infty$. A SP on $\mathds{W}$ is defined from $\mu_{\Delta_{1}}$ as follows. Let $F_{\Delta_1}(\de a) = \exp\left\{-\int_{a}^{\infty} \lambda(s)\de s\right\} \lambda(a)\de a$ be the distribution of $\Delta_{1}$ \citep[pg. 1636]{FergusonKlass_1972}, and let $G_{a}$ be the conditional distribution of $(\Delta_{i+1}/\Delta_{1})_{i\geq1}$ given $\Delta_{1}=a$. Moreover, let $\Delta_{1,h}$ denote a  random variable whose distribution has a density function $f_{\Delta_{1,h}}(a)=h(a)f_{\Delta_{1}}(a)$, where $h$ is a non-negative function and $f_{\Delta_{1}}$ is the density function of $F_{\Delta_1}$. If $(\rho_{i})_{i\geq1}$ are $(0,1)$-valued random variables with distribution $G_{\Delta_{1,h}}$ then
\begin{equation} \label{eq:SP_random}
\mu_{\Delta_{1,h}} = \sum_{i\geq 1}\rho_{i}\delta_{W_{i+1}}.
\end{equation}
is a SP. For short, $\mu_{\Delta_{1,h}}\sim\text{ \rm SP}(\nu,h)$. The law of $\mu_{\Delta_{1,h}}$ is a prior distribution for the parameter $\zeta$ of the Bernoulli process. The next proposition characterizes the predictive distribution of SP priors. See also \citet[Proposition 2.2]{james2015scaled} for a posterior analysis of SP priors.

\begin{proposition} \label{prop:general_pred}
Let $Z_{1:N}$ be a random sample from \eqref{exch_mod} with $\zeta\sim\text{\rm SP}(\nu,h)$. If $Z_{1:N}$ displays $K_{N}=k$ distinct features $\{W_{1}^{\ast},\ldots,W^{\ast}_{K_{N}}\}$, each feature $W_i^*$ appearing exactly $M_{N,i}=m_i$ times, then the conditional distribution of $\Delta_{1,h}$, given $Z_{1:N}$, has a density function of the form
\begin{equation}\label{eq_mixing1}
  g_{\Delta_{1,h}\,|\, Z_{1:N}} (a) \propto   \frac{\prod_{i=1}^{k}\int_0^1  s^{m_{i}} (1-s)^{N-m_{i}} a\lambda (a s ) \de s}{\exp\left\{\sum_{n=1}^N \int_{0}^{1}s(1-s)^{n-1}a\lambda(as)\de s\right\}}f_{\Delta_{1,h}}(a).
\end{equation}
Moreover, the conditional distribution of $Z_{N+1}$, given $(\Delta_{1,h}, Z_{1:N})$, coincides with the distribution of
\begin{equation} \label{eq:general_predictive1}
      Z_{N+1}\,|\, (\Delta_{1,h}, Z_{1:N})  \overset{d}{=} Z_{N+1}^{\prime} + \sum_{i=1}^{K_N} A_{N+1, i} \delta_{W_i^{\ast}},
\end{equation}
where: i) $Z_{N+1}^{\prime}\,|\, \mu^{\prime}_{\Delta_{1,h}}= \sum_{i \geq 1} A_{N+1, i}^{\prime} \delta_{W_i^{\prime}}\sim\text{\rm BeP}(\mu^{\prime}_{\Delta_{1,h}})$ and $\mu^{\prime}_{\Delta_{1,h}}\,|\,\Delta_{1,h}\sim\CRM(\nu^{\prime}_{\Delta_{1,h}})$, with $\nu^{\prime}_{\Delta_{1,h}}(\de s,\,\de w)=(1-s)^N \Delta_{1,h} \lambda(s \Delta_{1,h}) \ind_{(0,1)} (s) \de sP(\de w )$; ii)  the $A_{N+1, i}$'s are independent Bernoulli random variables with parameters $J_{i}$'s, respectively, such that $J_{i}\,|\,\Delta_{1,h}$ is distributed according to the density function $f_{J_{i}\,|\,\Delta_{1,h}} (s)  \propto  (1-s)^{N-m_{i}} s^{m_{i}} \Delta_{1,h}\lambda (\Delta_{1,h} s) \ind_{(0,1)} (s) \de s$ for $i\geq1$.
\end{proposition}

See Appendix \ref{appendix_scaledpriors} for the proof of Proposition \ref{prop:general_pred}. The marginalization of \eqref{eq:general_predictive1} with respect to \eqref{eq_mixing1} leads to the predictive distribution of SP priors: i) $Z_{N+1}$ displays ``new" features $W_{i}^{\prime}$'s, and the posterior distribution of statistics of ``new" features, given $Z_{1:N}$, is determined by the law of $(\Delta_{1,h},Z^{\prime}_{N+1})$; ii) $Z_{N+1}$ displays ``old" features $W_{i}^{\ast}$'s, and the posterior distribution of statistics of ``old" features, given $Z_{N+1}$, is determined by the law of $(\Delta_{1,h},\sum_{1\leq i\leq K_{N}}A_{N+1, i} \delta_{W_{i}^{\ast}})$. Because of \eqref{eq_mixing1} and \eqref{eq:general_predictive1}, the law of $(\Delta_{1,h},Z^{\prime}_{N+1})$ may include the whole sampling information, depending on the specification of $\nu$ and $h$, and hence the posterior distribution of statistics of ``new" features, given $Z_{1:N}$, also includes such an information. As a corollary of Proposition \ref{prop:general_pred}, the posterior distribution of the number of unseen features, given $Z_{1:N}$ and fixed prior's parameters, is a mixture of Poisson distributions that may include the whole sampling information; in particular, the amount of sampling information in the posterior distribution is uniquely determined by the mixing distribution, namely by the conditional distribution of $\Delta_{1,h}$, given $Z_{1:N}$. SP priors thus allow to enrich the Poisson posterior structure arising from CRM priors, in terms of both a more flexible distribution and the inclusion of more sampling information than the sole sample size $N$, though they may lead to unwieldy posterior inferences due to the marginalization with respect to \eqref{eq_mixing1}. 

The use of the sampling information in the predictive structure of SPs somehow resembles that of Poisson-Kingman (PK) models \citep{Pitman_06}. PK models form a broad class of nonparametric priors for species sampling problems. The DP prior is a PK model whose predictive distribution is such that: i) the conditional probability that the $(N+1)$-th draw is a ``new" species, given $N$ observable samples, depends only on the sample size; ii) the conditional probability that the $(N+1)$-th draw is an ``old" species, given $N$ observable samples, depends on the sample size, the number of distinct species and their frequencies.  Such a behaviour resembles that of CRM priors, i.e. Proposition \ref{prop:predictive_IBP}. PK models allow to include more sampling information in the probability of discovering a ``new species" arising under the DP prior, which typically determines a loss of the analytical tractability of posterior inferences for the number of unseen species \citep{marcoB}. Such a behaviour resembles that of SP priors, i.e. Proposition \ref{prop:general_pred}. The PYP prior is arguably the most popular PK model. It stands out for enriching the probability of discovering a ``new" species arising under the DP prior, by including the sampling information on the number of distinct species, while maintaining the analytical tractability and interpretability of the DP prior. 

%%%%%%%%%%%%%%%%%%%%%%%%%%%%%%%%
%%%%%%%%%%%%%%%%%%%%%%%%%%%%%%%%
%%%%%%%%%%%%%%%%%%%%%%%%%%%%%%%%
%%%%%%%%%%%%%%%%%%%%%%%%%%%%%%%%
% \input{current_version/theory_stable}

\section{Stable-Beta Scaled Process (SB-SP) priors for the unseen-features problem}\label{sec:theory1}

In Section \ref{sec:theory} we showed how SP priors allow to enrich the Poisson posterior structure of the number of unseen features arising under CRM priors, e.g.\ the Beta and the stable-Beta process priors. While this is an appealing property, it may lead to a lack of analytical tractability and interpretability of posterior inferences, thus making SP priors not of practical interest in applications. In this section, we introduce and investigate a peculiar SP prior, which is referred to as the SB-SP prior, and we show that: i) it leads to a negative Binomial posterior distribution for the number of unseen features, which generalizes the Poisson distribution while maintaining its analytical tractability and interpretability; ii) it leads to a posterior distribution for the number of unseen features, which depends on the sampling through the sample size and the number of distinct features. The SB-SP prior thus provides a sensible trade-off between the enrichment of the Poisson posterior structure of the number of unseen features arising under CRM priors and the analytical tractability and interpretability of posterior inferences. In particular, we characterize the SB-SP prior as the sole SP prior for which the posterior distribution of the number of unseen features depends on the observable sample only through the sample size and the number of distinct features. The SB-SP may thus be considered as a natural counterpart of the PYP for the unseen-feature problem.

\subsection{SB-SP priors for Bernoulli processes}\label{sec:stable}

Stable scaled processes (S-SP) \citep{james2015scaled} form a subclass of SPs, and hence their definition follows from Section \ref{sec:theory}. In particular, for any $\sigma\in(0,1)$, let $\mu_{\sigma}$ be the $\sigma$-stable CRM on $\mathds{W}$ \citep{kingman1975random}, which is characterized by the L\'evy intensity measure $ \nu_{\sigma} (\de s , \de w)= \lambda_{\sigma} (s) \de s P (\de w)$ on $\R_{+} \times \mathds{W}$, with $\int_{\R_+} \min \{ s,1 \} \lambda_{\sigma} (s) \de s < +\infty$, where $\lambda_{\sigma}(s)=\sigma s^{-1-\sigma}$. We recall that the largest atom $\Delta_{1}$ of $\mu_{\sigma}$ is distributed according to the density function
\begin{equation}\label{eq:stable_largest_jump}
f_{\Delta_1} (a)= \sigma a^{-1-\sigma} \exp\left\{-a^{-\sigma}\right\}.
\end{equation}
That is, $\Delta_{1}=E^{-1/\sigma}$, where $E$ denotes a negative exponential random variable with parameter $1$. For any non-negative function $h$, a S-SP on $\mathds{W}$ is defined as the SP with law $\text{\rm SP}(\nu_{\sigma},h)$. S-SP priors generalizes the Beta process prior, which is recovered by setting $h$ to be the identity function, and then letting $\sigma\rightarrow0$ \citep{james2015scaled}. The predictive distribution of $\zeta\sim\text{\rm SP}(\nu_{\sigma},h)$ is obtained from Proposition \ref{prop:general_pred}. In the next theorem, we characterize the S-SP priors as the sole SP priors for which the conditional distribution of $\Delta_{1,h}$, given $Z_{1:N}$, depends on $Z_{1:N}$ only through the sample size $N$ and the number $K_{N}$ of distinct features in $Z_{1:N}$.

\begin{theorem}\label{thm:characterization}
Let $Z_{1:N}$ be a random sample from \eqref{exch_mod} with $\zeta\sim\text{\rm SP}(\nu,h)$, and let $Z_{1:N}$ displays $K_{N}$ distinct features with corresponding frequencies $(M_{N,1},\ldots,M_{N,K_{N}})$. Moreover, let $\nu(\de s , \de w)=\lambda(s)\de sP(\de w)$, and let $f_{\Delta_{1,h}}$ be the density function of $\Delta_{1,h}$. If $f_{\Delta_{1,h}}>0$ on $\R_+$ and the functions $\lambda$ and $f_{\Delta_{1,h}}$ are continuously differentiable, then the conditional distribution of $\Delta_{1,h}$, given $Z_{1:N}$, depends on $Z_{1:N}$ only through $N$ and $K_{N}$ if and only if $\nu=\nu_{\sigma}$.
\end{theorem}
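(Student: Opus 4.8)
The plan is to work directly from the mixing density \eqref{eq_mixing}. Writing $I_j(a) := \int_0^1 s^j\, a\lambda(as)\,\de s$ and $\phi_N(m,a):=\int_0^1 s^{m}(1-s)^{N-m}a\lambda(as)\,\de s$, I first observe that in \eqref{eq_mixing} the denominator $\exp\{\sum_{n=1}^N\int_0^1 s(1-s)^{n-1}a\lambda(as)\,\de s\}$ and the factor $f_{\Delta_{1,h}}(a)$ depend on the data only through $N$ and are common to every sample of a given size. Hence, after normalisation, two samples with the same $N$ and $K_N=k$ but different frequency multisets induce the same conditional law for $\Delta_{1,h}$ if and only if the numerators $\prod_{i=1}^{k}\phi_N(m_i,a)$ agree as functions of $a$ up to a multiplicative constant. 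This reduces the whole statement to a functional property of the family $\phi_N(m,a)$; since $h$ enters only through the common, data-independent factor $f_{\Delta_{1,h}}$, it plays no role, which is precisely the ``up to suitable changes of measure'' in the claim.

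For sufficiency, I substitute $\lambda_\sigma(s)=\sigma s^{-1-\sigma}$ and compute $a\lambda_\sigma(as)=\sigma a^{-\sigma}s^{-1-\sigma}$, so that $I_j(a)=\tfrac{\sigma}{j-\sigma}a^{-\sigma}$ and, by the binomial expansion of $(1-s)^{N-m}$, $\phi_N(m,a)=\sigma a^{-\sigma}B(m-\sigma,\,N-m+1)$. Every such quantity factorises into a frequency-dependent constant times the single profile $a^{-\sigma}$; likewise the denominator collapses to $\exp\{\sigma a^{-\sigma}\sum_{n=1}^N B(1-\sigma,n)\}$, a function of $N$ and $a$ only. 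Thus the mixing density depends on the data only through $N$ and $k$, giving one direction.

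The substantive direction is necessity. Specialising the hypothesis to samples with a single distinct feature ($k=1$) of frequency $m$ — realisable for every $m\in\{1,\dots,N\}$, with the remaining samples empty — forces $\phi_N(m,a)=c_{N,m}\Phi_N(a)$ for a single $a$-profile $\Phi_N$. Expanding $\phi_N(m,a)$ as an alternating sum of the $I_j(a)$ and running the resulting triangular system downward from $m=N$ shows, as $N$ ranges over the integers, that all $I_j(a)$, $j\geq1$, are mutually proportional, $I_j(a)=d_j\Phi(a)$. Writing $I_j(a)=a^{-j}T_j(a)$ with $T_j(a)=\int_0^a u^j\lambda(u)\,\de u$, this reads $T_j(a)=d_j a^{j}\Phi(a)$; differentiating and using $T_j'(a)=a^{j}\lambda(a)$ yields the $j$-indexed family of identities $a\lambda(a)=d_j\bigl(j\Phi(a)+a\Phi'(a)\bigr)$, where the assumed continuous differentiability of $\lambda$ and $f_{\Delta_{1,h}}$ licenses the manipulation.

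It remains to extract the power law. Since the left-hand side above is independent of $j$, comparing the indices $j=1,2$ gives $d_1\bigl(\Phi+a\Phi'\bigr)=d_2\bigl(2\Phi+a\Phi'\bigr)$ for all $a$; the degenerate case $d_1=d_2$ forces $\Phi\equiv0$, excluded by $f_{\Delta_{1,h}}>0$, so $d_1\neq d_2$ and one obtains the constant linear relation $a\Phi'(a)=\kappa\Phi(a)$, whence $\Phi(a)=Ca^{\kappa}$ and, back-substituting, $\lambda(a)\propto a^{\kappa-1}$. Finally the Lévy integrability constraint $\int_{\R_+}\min\{s,1\}\lambda(s)\,\de s<+\infty$ pins $\kappa\in(-1,0)$, i.e.\ $\lambda(a)\propto a^{-1-\sigma}$ with $\sigma=-\kappa\in(0,1)$, identifying $\nu=\nu_\sigma$. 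I expect the bookkeeping in the triangular reduction, and the verification that the cross-$j$ comparison genuinely forces $a\Phi'=\kappa\Phi$ rather than a spurious degeneracy, to be the main technical obstacles.
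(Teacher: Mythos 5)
Your proof is correct, and it reaches the paper's conclusion by a genuinely reorganized version of the argument in Appendix~\ref{appendix_scaledbetapriors}. The paper keeps the data-independent factor inside the constancy condition \eqref{eq:thm_char_2}: it sets all $K_N$ frequencies equal, first takes $m=N$ and solves an ODE for the nuisance function $w(a,K_N,N)$ (obtaining $w=a^N R/\int_0^a s^N\lambda(s)\,\de s$), and only then compares $m=N-1$ with $m=N$, differentiating the resulting identity \emph{twice} to reach the separable ODE $\lambda(a)(1-NC)=Ca\lambda'(a)$ in \eqref{eq:lambda_general}. You instead cancel the common factor $e^{-\sum_n\phi_n(a)}f_{\Delta_{1,h}}(a)$ at the outset by equating the posteriors of two samples with the same $(N,k)$ --- which makes the irrelevance of $h$ transparent and removes the need for the $w$-ODE entirely --- then use singleton-feature configurations and the binomial triangular system to get $I_j(a)=a^{-j}\int_0^a u^j\lambda(u)\,\de u = d_j\,\Phi(a)\,a^{0}$ (all moments proportional to one profile), eliminating $\Phi$ across two indices with a \emph{single} differentiation. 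The information exploited is ultimately identical (proportionality in $a$ of $\phi_N(m,\cdot)$ across admissible $m$, converted to a first-order ODE and pinned down by the L\'evy integrability constraint), so each route buys something: yours is more economical and isolates where $h$ and the frequencies enter; the paper's shows the constraint is already forced by the two values $m\in\{N-1,N\}$ at a single fixed $N\ge 2$ --- note your full downward induction and your ``as $N$ ranges over the integers'' are unnecessary, since the identities at $j=N$ and $j=N-1$ alone suffice, which matters because the theorem's hypothesis is naturally read at fixed $N$. Two points you compress but which do hold: the nondegeneracy $d_1\neq d_2$ (and the exclusion of $\kappa=-1$, where back-substitution gives $\lambda\equiv 0$) should be traced through $f_{\Delta_{1,h}}>0\Rightarrow\lambda>0\Rightarrow I_j>0\Rightarrow d_j>0$, not to $f_{\Delta_{1,h}}>0$ directly; and, like the paper's own proof, you recover $\lambda$ only up to a positive multiple $\alpha a^{-1-\sigma}$, the constant $\alpha$ being absorbed by the scaling invariance of $\mu_{\Delta_1}$ and the free tilting $h$.
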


See Appendix \ref{appendix_scaledbetapriors} for the proof of Theorem \ref{thm:characterization}. We recall from Section \ref{sec:theory} that the conditional distribution of $\Delta_{1,h}$, given $Z_{1,N}$, uniquely determines the amount of sampling information included in the posterior distribution of statistics of ``new" features.  Then, according to Theorem \ref{thm:characterization}, S-SP priors are the sole SP priors for which the posterior distribution of the number of unseen features, given  $Z_{1:N}$ and fixed prior's parameters, depends on $Z_{1:N}$ only through $N$ and $K_{N}$. As a corollary of Theorem \ref{thm:characterization}, the Beta process prior is the sole S-SP prior for which the posterior distribution of statistics of ``new" features depends on $Z_{1:N}$ only through $N$. Analogous predictive characterizations are well-known in species sampling problems, and they are typically referred to as ``sufficientness" postulates' \citep{marcoB}. In particular, the DP prior is characterized as the sole species sampling prior for which the conditional probability that the $(N+1)$-th draw is a ``new" species, given $N$ observable samples, depends only on the sample size \citep{Reg78}. Moreover, the PYP prior is characterized as the sole species sampling prior for which the conditional probability that the $(N+1)$-th draw is a ``new" species, given $N$ observable samples, depends only on the sample size and the number of distinct species in the sample \citep{zabell05}. Theorem \ref{thm:characterization} provides a ``sufficientness" postulates' in the context of feature sampling problems.

As a noteworthy example of S-SPs, we introduce the SB-SP. The SB-SP is a S-SP obtained by a suitable specification of the non-negative function $h$. In particular, for any $c,\beta>0$ let
\begin{equation}\label{tilting_h}
h_{c,\beta}(a)=\frac{\beta^{c+1}}{\Gamma(c+1)} a^{-c\sigma} \exp\left\{-(\beta-1) a^{-\sigma} \right\},
\end{equation}
where $\Gamma(\cdot)$ denotes the Gamma function. Then a SB-SP on $\mathds{W}$ is defined as the SP with law $\text{\rm SP}(\nu_{\sigma},h_{c,\beta})$. For short, we denote the law of a SB-SP by $\text{\rm SB-SP}(\sigma,c,\beta)$. The SB-SP prior generalizes the Beta process prior, which is recovered by setting $c=0$ and $\beta=1$, and then letting $\sigma\rightarrow0$. According to the construction of SPs, the distribution of $\Delta_{1,h_{c,\beta}}$ has a density function obtained by combining \eqref{eq:stable_largest_jump} and \eqref{tilting_h}; this is a polynomial-exponential tilting of the density function \eqref{eq:stable_largest_jump}. In particular, $\Delta_{1,h_{c,\beta}}^{-\sigma}$ is distributed as a Gamma distribution with shape $(c+1)$ and rate $\beta$. Such a straightforward distribution for $\Delta_{1,h_{c,\beta}}$ is at the core of the analytical tractability of posterior inferences under the SB-SP prior; this fact will be clear in the application of the SB-SP prior to the problem of estimating the number of unseen features. The next proposition characterizes the predictive distribution of the SB-SP prior.

\begin{proposition} \label{prop:stable_predictive}
Let $Z_{1:N}$ be a random sample from \eqref{exch_mod} with $\zeta\sim\text{\rm SB-SP}(\sigma,c,\beta)$. If $Z_{1:N}$ displays $K_{N}=k$ distinct features $\{W_{1}^{\ast},\ldots,W^{\ast}_{K_{N}}\}$, each feature $W_i^*$ appearing exactly $M_{N,i}=m_i$ times, then the conditional distribution of $\Delta_{1,h_{c,\beta}}$, given $Z_{1:N}$, has a density function of the form
\begin{equation}\label{eq_mixing}
g_{\Delta_{1,h_{c,\beta}}\,|\,Z_{1:N}}(a)=\sigma\frac{(\beta+\gamma_{0}^{(N)})^{k+c+1}}{\Gamma(k+c+1)}a^{-k\sigma-(c+1)\sigma-1}\text{e}^{-a^{-\sigma}(\beta+\gamma_{0}^{(N)})},
\end{equation}
where $\gamma_{0}^{(N)}=\sigma\sum_{1\leq i\leq N}B(1-\sigma,i)$, with $B (\cdot,\cdot)$ being the (Euler) Beta function. Moreover, the conditional distribution of $Z_{N+1}$, given $(\Delta_{1,h_{c,\beta}}, Z_{1:N})$, coincides with the distribution of
\begin{equation} \label{eq:general_predictive}
      Z_{N+1}\,|\, (\Delta_{1,h_{c,\beta}}, Z_{1:N})  \overset{d}{=} Z_{N+1}^{\prime} + \sum_{i=1}^{K_N} A_{N+1, i} \delta_{W_i^{\ast}},
\end{equation}
where:
\begin{itemize}
    \item[i)] $Z_{N+1}^{\prime}\,|\, \mu^{\prime}_{\Delta_{1,h_{c,\beta}}}= \sum_{i \geq 1} A_{N+1, i}^{\prime} \delta_{W_i^{\prime}}\sim\text{\rm BeP}(\mu^{\prime}_{\Delta_{1,h_{c,\beta}}})$ such that $\mu^{\prime}_{\Delta_{1,h_{c,\beta}}}\,$ $|\,\Delta_{1,h_{c,\beta}}\sim\CRM( \nu^{\prime}_{\Delta_{1,h_{c,\beta}}})$, with
\begin{displaymath}
\nu^{\prime}_{\Delta_{1,h_{c,\beta}}}(\de s,\,\de w)=      \Delta_{1,\Delta_{1,h_{c,\beta}}}^{-\sigma} (1-s)^N \sigma s^{-1-\sigma}  \ind_{(0,1)} (s) \de s P(\de w);
\end{displaymath}
        \item[ii)] the $A_{N+1,i}$'s are independent Bernoulli random variables with parameters $J_{i}$'s, respectively, such that each $J_{i}\,|\,\Delta_{1,h_{c,\beta}}$ is distributed according to a density function of the form
    \begin{equation*}
    f_{J_{i}\,|\,\Delta_{1,h_{c,\beta}}} (s)=\frac{1}{B(m_{i}-\sigma,N-m_{i}+1)}  s^{m_{i}-\sigma}(1-s)^{N-m_{i}+1} \ind_{(0,1)} (s) . 
\end{equation*}
\end{itemize}
\end{proposition}

See Appendix \ref{appendix_scaledbetapriors} for the proof of Proposition \ref{prop:stable_predictive}. According to Equation \eqref{eq_mixing}, the conditional distribution of $\Delta_{1,h_{c,\beta}}$, given $Z_{1:N}$, depends on $Z_{1:N}$ only through the sample size $N$ and the number $K_{N}$ of distinct features in $Z_{1:N}$. This agrees with Theorem \ref{thm:characterization}, implying that the posterior distribution of the number of unseen features, given $Z_{1:N}$ and fixed prior's parameters, depends on $Z_{1:N}$ only through $N$ and $K_{N}$. Because of \eqref{eq_mixing} and \eqref{eq:general_predictive}, the posterior distribution of statistics of ``new" features stands out for analytical tractability, thus being competitive with that arising from CRMs, e.g. the Beta and the stable-Beta processes. In particular, from Equation \eqref{eq:general_predictive}, the conditional distribution of $Z^{\prime}_{N+1}$, given $(\Delta_{1,h_{c,\beta}},Z_{1:N})$ is a Poisson distribution that depends on $Z_{1:N}$ only through $N$. Then, from \eqref{eq_mixing}, its marginalization with respect to the conditional distribution of $\Delta_{1,h_{c,\beta}}$, given $Z_{1:N}$, leads to a negative Binomial posterior distribution. Such an appealing property arises from the peculiar form $h_{c,\beta}$ that, combined with $\nu_{\sigma}$, leads to a conjugacy property for the conditional distribution of $\Delta_{1,h_{c,\beta}}$, given $Z_{1:N}$. That is, the conditional distribution of $\Delta_{1,h_{c,\beta}}^{-\sigma}$, given $Z_{1:N}$, is a Gamma distribution with shape $(K_{N}+c+1)$ and rate $\beta+\gamma_{0}^{(N)}$, which is the distribution $\Delta_{1,h_{c,\beta}}^{-\sigma}$ with shape and rate being updated through $Z_{1:N}$. The next proposition establishes the distribution of a random sample $Z_{1:N}$ from a SB-SP prior. See Appendix \ref{appendix_scaledbetapriors} for details. 

\begin{proposition} \label{prop:stable_marginal} 
Let $Z_{1:N}$ be a random sample from \eqref{exch_mod} with $\zeta\sim\text{\rm SB-SP}(\sigma,c,\beta)$. The probability that $Z_{1:N}$ displays a particular feature allocation of $k$ distinct features with frequencies $(m_{1},\ldots,m_{k})$ is
\begin{equation}\label{eq:stable_marginal}
p^{(N)}_{k}(m_{1},\ldots,m_{k})= \frac{\frac{\sigma^{k}\beta^{c+1}}{(\beta+\stablenews{0}{N})^{k+c+1}}}{\frac{\Gamma(c+1)}{\Gamma (k+c+1)}}\prod_{i=1}^{k}\frac{\Gamma(m_{i} -\sigma)\Gamma(N-m_{i} +1)}{\Gamma(N-\sigma+1)}.
\end{equation}
\end{proposition}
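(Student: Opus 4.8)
The plan is to obtain the marginal law of $Z_{1:N}$ by conditioning on the scaled largest jump $\Delta_{1,h_{c,\beta}}$, computing the resulting conditional feature-allocation probability, and then integrating out $\Delta_{1,h_{c,\beta}}$ against its prior density, exploiting the conjugacy that already underlies Theorem \ref{prop:stable_posterior}. First I would note that, conditionally on $\Delta_{1,h_{c,\beta}}=a$, the measure $\mu_{\Delta_{1,h_{c,\beta}}}$ is a CRM: rescaling the atoms of the $\sigma$-stable CRM lying below $a$ by the factor $1/a$ turns the intensity $\lambda_{\sigma}(s)\,\de s$ into $a\lambda_{\sigma}(as)\ind_{(0,1)}(s)\,\de s = a^{-\sigma}\sigma s^{-1-\sigma}\ind_{(0,1)}(s)\,\de s$ on $(0,1)$, so that $\mu_{\Delta_{1,h_{c,\beta}}}\mid a\sim\CRM(\nu_{a})$ with $\nu_{a}(\de s,\de w)=a^{-\sigma}\sigma s^{-1-\sigma}\ind_{(0,1)}(s)\,\de s\,P(\de w)$, which is exactly the prior version of the intensity in \eqref{eq:levy_stb}.

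Given $a$, the model reduces to an ordinary Bernoulli product model driven by the CRM $\nu_{a}$, so the probability of a given feature allocation follows from the CRM feature-allocation formula (the conditional version of the SP marginal recalled in Appendix \ref{appendix_scaledpriors}). By marked-Poisson-process (Palm) calculus this conditional probability factorizes into a void term and a product over the observed features,
\[
p^{(N)}_{k}(m_{1},\ldots,m_{k}\mid a) = \exp\Bigl\{-\int_{0}^{1}\bigl(1-(1-s)^{N}\bigr)\nu_{a}(\de s)\Bigr\}\prod_{i=1}^{k}\int_{0}^{1}s^{m_{i}}(1-s)^{N-m_{i}}\nu_{a}(\de s).
\]
Using the telescoping identity $1-(1-s)^{N}=\sum_{n=1}^{N}s(1-s)^{n-1}$, the exponent evaluates to $a^{-\sigma}\sigma\sum_{n=1}^{N}B(1-\sigma,n)=a^{-\sigma}\gamma_{0}^{(N)}$, while each factor in the product equals $a^{-\sigma}\sigma B(m_{i}-\sigma,N-m_{i}+1)=a^{-\sigma}\sigma\,\Gamma(m_{i}-\sigma)\Gamma(N-m_{i}+1)/\Gamma(N-\sigma+1)$.

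Third, I would integrate against the prior density $f_{\Delta_{1,h_{c,\beta}}}(a)=\tfrac{\beta^{c+1}}{\Gamma(c+1)}\sigma a^{-1-(c+1)\sigma}\exp\{-\beta a^{-\sigma}\}$, obtained by combining \eqref{eq:stable_largest_jump} and \eqref{tilting_h}. Collecting powers of $a$ gives the single integral $\int_{0}^{\infty}a^{-1-(k+c+1)\sigma}\exp\{-(\beta+\gamma_{0}^{(N)})a^{-\sigma}\}\,\de a$, and the substitution $t=a^{-\sigma}$ reduces it to $\tfrac{1}{\sigma}\int_{0}^{\infty}t^{k+c}\exp\{-(\beta+\gamma_{0}^{(N)})t\}\,\de t=\tfrac{1}{\sigma}\Gamma(k+c+1)/(\beta+\gamma_{0}^{(N)})^{k+c+1}$, which is (up to the $1/\sigma$) the normalizing constant of the Gamma law with shape $k+c+1$ and rate $\beta+\gamma_{0}^{(N)}$ for $\Delta_{1,h_{c,\beta}}^{-\sigma}$ predicted by the conjugacy in Theorem \ref{prop:stable_posterior}. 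Assembling the three pieces, the $\sigma$ from the prior density cancels the $1/\sigma$ from the integral and one recovers \eqref{eq:stable_marginal}.

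The step I expect to be the main obstacle is the combinatorial bookkeeping in the second paragraph: one must verify, via the non-atomicity of $P$ together with the Slivnyak--Mecke/Campbell formula for the Bernoulli-thinned Poisson process of feature weights, that the probability of a \emph{particular} feature allocation carries no spurious factorial symmetry factor, so that the constant multiplying $\sigma^{k}\prod_{i}B(m_{i}-\sigma,N-m_{i}+1)$ is exactly $1$. Once this is settled, the remainder is a direct specialization of the general SP marginal together with the two elementary integrals above.
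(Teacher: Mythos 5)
Your proposal is correct and follows essentially the same route as the paper's proof: condition on $\Delta_{1,h_{c,\beta}}=a$, invoke the conditional CRM structure (Lemma \ref{lemma:conditional_scaled_random_measure} together with the marginal formula of Proposition \ref{prop:general_marginal}, i.e.\ \citet[Proposition 3.1]{james2017bayesian} --- which also settles your worry about symmetry factors, since that result is stated for a particular feature allocation), evaluate the two Beta-function integrals, and integrate out $a$ via the substitution $t=a^{-\sigma}$ against the tilted density \eqref{eq:stable_mixing}. Your telescoping identity $1-(1-s)^{N}=\sum_{n=1}^{N}s(1-s)^{n-1}$ merely re-expresses the paper's exponent $\sum_{n=1}^{N}\phi_{n}(a)$, and all constants match \eqref{eq:stable_marginal} exactly.
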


\subsection{BNP inference for the unseen-features problem} \label{sec:extrapolation}

Now, we apply the SB-SP prior to the unseen-features problem. For any $N\geq1$ let $Z_{1:N}$ be an observable sample modeled as the BNP Bernoulli model \eqref{exch_mod}, with $\zeta\sim\text{\rm SB-SP}(\sigma,c,\beta)$. Moreover, under the same model of the $Z_{n}$'s, for any $M\geq1$ let $(Z_{N+1},\ldots,Z_{N+M})$ be additional unobservable sample. Then, the unseen-feature problem calls for the estimation of 
\begin{equation}\label{eq:number_new_features}
\unseen{N}{M} = \sum_{i \ge 1} \ind\left(\sum_{m=1}^M A_{N+m, i} > 0\right)
 \ind\left( \sum_{n=1}^N A_{n, i} = 0 \right),
\end{equation}
namely the number of hitherto unseen features that would be observed in $(Z_{N+1},\ldots,Z_{N+M})$. As generalization of the unseen-feature problem \eqref{eq:number_new_features}, for $r\geq1$ we consider the estimation of
\begin{equation} \label{eq:number_new_features_freq}
\unseenfreq{N}{M}{r}= \sum_{i \ge 1} \ind\left( \sum_{m=1}^M A_{N+m, i} = r \right) \ind\left( \sum_{n=1}^N A_{n, i} = 0 \right),
\end{equation}
namely the number of hitherto unseen features that would be observed with prevalence $r$ in $(Z_{N+1},\ldots,Z_{N+M})$. Of special interest is $r=1$, which concerns rare (unique) features. The next theorem characterizes the posterior distributions of  $\unseen{N}{M}$ and $\unseenfreq{N}{M}{r}$, given $Z_{1:N}$. We denote by $\NegBin(n,p)$ the negative Binomial distribution with parameter $n$ and $p\in(0,1)$.

\begin{theorem} \label{thm:stable_news}
Let $Z_{1:N}$ be a random sample from \eqref{exch_mod} with $\zeta\sim\text{\rm SB-SP}(\sigma,c,\beta)$, and let $Z_{1:N}$ displays $K_{N}=k$ distinct features with frequencies $(M_{N,1},\ldots,M_{N,K_{N}})=(m_{1},\ldots,m_{k})$. Then, the posterior distributions of $\unseen{N}{M}$ and of $\unseenfreq{N}{M}{r}$, given $Z_{1:N}$, coincide with the distributions of
\begin{equation} \label{eq:stable_news}
\unseen{N}{M}\,|\, Z_{1:N} \sim\NegBin\left(K_N+c+1, \frac{\stablenews{N}{M}}{\beta+\stablenews{0}{N+M}}\right),    
\end{equation}
and
\begin{equation} \label{eq:stable_news_freq}
\unseenfreq{N}{M}{r}\,|\, Z_{1:N} \sim\NegBin\left(K_N+c+1, \frac{\stablenewsfreq{N}{M}{r}}{\beta+\stablenews{0}{N}+\stablenewsfreq{N}{M}{r}} \right),
\end{equation}
for any index of prevalence $r\geq1$, respectively, where $\gamma_{N}^{(M)}=\sigma\sum_{1\leq i\leq M}B(1-\sigma,N+i)$ and where $\rho^{(M,r)}_{N}={M\choose r}\sigma B(r-\sigma,N+M-r+1)$, with $B(\cdot,\cdot)$ denoting the (Euler) Beta function.
\end{theorem}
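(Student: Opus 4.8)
The plan is to condition on the scaling random variable $\Delta_{1,h_{c,\beta}}$, exploit the fact that under a $\CRM$-governed collection of ``new'' features the relevant counts are Poisson, and then marginalize the Poisson parameter against the conditional law of $\Delta_{1,h_{c,\beta}}$ from Theorem \ref{prop:stable_posterior}, recognizing the result as a Poisson--Gamma mixture, i.e.\ a negative Binomial. Concretely, by Proposition \ref{prop:stable_predictive} extended from one to $M$ additional samples, conditionally on $\Delta_{1,h_{c,\beta}}$ the features not observed in $Z_{1:N}$ are the atoms of the $\CRM$ $\mu^{\prime}_{\Delta_{1,h_{c,\beta}}}$ with L\'evy intensity \eqref{eq:levy_stb}, and the ``new'' parts $Z_{N+1}^{\prime},\ldots,Z_{N+M}^{\prime}$ are i.i.d.\ $\text{\rm BeP}(\mu^{\prime}_{\Delta_{1,h_{c,\beta}}})$ on these atoms. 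Hence $\unseen{N}{M}$ counts the atoms of $\mu^{\prime}_{\Delta_{1,h_{c,\beta}}}$ appearing at least once across the $M$ draws, while $\unseenfreq{N}{M}{r}$ counts those appearing exactly $r$ times.

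The key step is that, conditionally on $\Delta_{1,h_{c,\beta}}=a$, these counts are Poisson. Since the atoms of $\mu^{\prime}_{\Delta_{1,h_{c,\beta}}}$ form a Poisson process with intensity \eqref{eq:levy_stb}, the thinning property of Poisson processes \citep{kingman1992poisson} shows that the atoms of size $s$ appearing at least once (resp.\ exactly $r$ times) in $M$ Bernoulli draws are again Poisson, obtained by thinning with probability $1-(1-s)^M$ (resp.\ $\binom{M}{r}s^{r}(1-s)^{M-r}$). Thus
\begin{align*}
\unseen{N}{M}\,|\,(\Delta_{1,h_{c,\beta}}=a, Z_{1:N}) &\sim \text{Poisson}\left(a^{-\sigma}\sigma\int_0^1[1-(1-s)^M](1-s)^N s^{-1-\sigma}\,\de s\right),\\
\unseenfreq{N}{M}{r}\,|\,(\Delta_{1,h_{c,\beta}}=a, Z_{1:N}) &\sim \text{Poisson}\left(a^{-\sigma}\textstyle\binom{M}{r}\sigma\int_0^1 s^{r-1-\sigma}(1-s)^{N+M-r}\,\de s\right).
\end{align*}
Both integrals converge for $\sigma\in(0,1)$ and $r\geq1$, and a direct computation, using $\sum_{i=1}^{M}(1-s)^{i-1}=[1-(1-s)^M]/s$ together with the Beta integral $\int_0^1 s^{r-1-\sigma}(1-s)^{N+M-r}\de s=B(r-\sigma,N+M-r+1)$, identifies the two Poisson means as $a^{-\sigma}\stablenews{N}{M}$ and $a^{-\sigma}\stablenewsfreq{N}{M}{r}$, respectively, directly from the definitions $\gamma_n^{(m)}=\sigma\sum_{1\leq i\leq m}B(1-\sigma,n+i)$ and $\rho_n^{(m,r)}=\binom{m}{r}\sigma B(r-\sigma,n+m-r+1)$.

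It remains to marginalize over $\Delta_{1,h_{c,\beta}}$. By Theorem \ref{prop:stable_posterior}, the change of variables leading to \eqref{eq:stable_largest_posteriormain} shows that $X:=\Delta_{1,h_{c,\beta}}^{-\sigma}$, given $Z_{1:N}$, has a $\Gammad$ density proportional to $x^{K_N+c}\exp\{-(\beta+\stablenews{0}{N})x\}$, i.e.\ shape $K_N+c+1$. Each target is therefore Poisson with mean $\lambda X$ for $\lambda\in\{\stablenews{N}{M},\stablenewsfreq{N}{M}{r}\}$, mixed over this $\Gammad$. The standard Poisson--Gamma integral yields a $\NegBin\!\left(K_N+c+1,\,\lambda/(\lambda+\beta+\stablenews{0}{N})\right)$ law. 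For the prevalence statistic this is exactly \eqref{eq:stable_news_freq}. For $\unseen{N}{M}$ one closes the argument with the additivity identity $\stablenews{0}{N}+\stablenews{N}{M}=\stablenews{0}{N+M}$, immediate by re-indexing the defining sum, which rewrites $\stablenews{N}{M}/(\stablenews{N}{M}+\beta+\stablenews{0}{N})$ as $\stablenews{N}{M}/(\beta+\stablenews{0}{N+M})$ and gives \eqref{eq:stable_news}.

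I expect the main obstacle to be the careful justification of the conditional Poisson step: one must argue that, conditionally on $\Delta_{1,h_{c,\beta}}$, the unseen features genuinely arise as the atoms of a \emph{single} $\CRM$ $\mu^{\prime}_{\Delta_{1,h_{c,\beta}}}$ shared across all $M$ additional individuals, so that thinning is applied once to one Poisson process rather than re-sampling a fresh measure per individual. This requires promoting the one-step predictive of Proposition \ref{prop:stable_predictive} to the joint law of $Z_{N+1},\ldots,Z_{N+M}$ and checking that the intensity \eqref{eq:levy_stb} governing unseen features is unaffected by the additional draws; once this is established, the thinning theorem and the Beta-integral bookkeeping are routine.
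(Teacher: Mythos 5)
Your proposal is correct and takes essentially the same route as the paper: condition on $\Delta_{1,h_{c,\beta}}$, identify the conditional laws of $\unseen{N}{M}$ and $\unseenfreq{N}{M}{r}$ as Poisson with means $\Delta_{1,h_{c,\beta}}^{-\sigma}\stablenews{N}{M}$ and $\Delta_{1,h_{c,\beta}}^{-\sigma}\stablenewsfreq{N}{M}{r}$ via the same Beta-integral identities, and then mix against the posterior $\Gammad(K_N+c+1,\beta+\stablenews{0}{N})$ law of $\Delta_{1,h_{c,\beta}}^{-\sigma}$ to obtain the $\NegBin$ laws, with the re-indexing identity $\stablenews{0}{N}+\stablenews{N}{M}=\stablenews{0}{N+M}$ closing the first case exactly as in the paper. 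The only cosmetic difference is that you invoke the Poisson thinning/marking theorem where the paper derives the same conditional Poisson structure by explicitly computing the probability generating function through the Laplace functional of $\mu^{\prime}_{\Delta_{1,h_{c,\beta}}}$ (an equivalent calculation), and the multi-sample extension of Proposition \ref{prop:stable_predictive} that you flag as the delicate point is used in the paper's proof at the same level of justification, resting on Lemma \ref{lemma:conditional_scaled_random_measure} and the conditional CRM posterior theory of \citet{james2017bayesian}.
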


See Appendix \ref{appendix_extrapolation} for the proof of Theorem \ref{thm:stable_news}. The posterior distributions \eqref{eq:stable_news} and \eqref{eq:stable_news_freq} depend on $Z_{1:N}$ through the sample size $N$ and the number $K_{N}$ of distinct features. This is in contrast with the corresponding posterior distributions obtained under the Beta and the stable-Beta process priors, which are Poisson distributions that depend on $Z_{1:N}$ only through $N$ \citep[Proposition 1]{masoero2019more}. BNP estimators of $\unseen{N}{M}$ and $\unseenfreq{N}{M}{r}$, with respect to a squared loss function, are obtained as the posterior expectations of \eqref{eq:stable_news} and \eqref{eq:stable_news_freq}, i.e. 
\begin{equation} \label{eq:stable_news_est}
\hat{U}_{N}^{(M)}=(K_N+c+1)\frac{\stablenews{N}{M}}{\beta+\stablenews{0}{N+M}-\stablenews{N}{M}}
\end{equation}
and
\begin{equation} \label{eq:stable_news_freq_est}
\hat{U}_{N}^{(M,r)}=(K_N+c+1)\frac{\stablenewsfreq{N}{M}{r}}{\beta+\stablenews{0}{N}}
\end{equation}
respectively. The estimators \eqref{eq:stable_news_est} and \eqref{eq:stable_news_freq_est} are simple, linear in the sampling information and computationally efficient. In the next theorem we establish the large $M$ asymptotic behaviour of the posterior distributions  \eqref{eq:stable_news} and \eqref{eq:stable_news_freq}, showing that the number of unseen features has a power-law growth in $M$. The same growth in $M$ holds under the stable-Beta process prior \citep[Proposition 2]{masoero2019more}, though the limiting distribution is degenerate.

\begin{theorem} \label{thm:stable_conv}
Let $Z_{1:N}$ be a random sample from \eqref{exch_mod} with $\zeta\sim\text{\rm SB-SP}(\sigma,c,\beta)$, and let $Z_{1:N}$ displays $K_{N}=k$ distinct features with frequencies $(M_{N,1},\ldots,M_{N,K_{N}})=(m_{1},\ldots,m_{k})$. As $M\rightarrow+\infty$
\begin{align}
\label{eq:conv_as}
    \frac{\unseen{N}{M}}{M^{\sigma}}\mid Z_{1:N} \stackrel{{\rm a.s.}}{\longrightarrow} W_N,
\end{align}
where $W_{N}$ is a Gamma random variable with shape $(K_N+c+1)$ and rate $(\beta+\stablenews{0}{N})/\Gamma (1-\sigma)$, and 
\begin{align}
\label{eq:conv_as_freq}
    \frac{\unseenfreq{N}{M}{r}}{M^{\sigma}}\mid Z_{1:N} \stackrel{{\rm a.s.}}{\longrightarrow} W_{N,r},
\end{align}
where $W_{N,r}$ is a Gamma random variable with shape $(K_N+c+1)$ and rate $\Gamma(r+1)(\beta+\stablenews{0}{N})/\sigma\Gamma(r-\sigma)$.
\end{theorem}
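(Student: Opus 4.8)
The plan is to prove both limits by conditioning on the largest-jump variable $\Delta_{1,h_{c,\beta}}$, reducing the problem to a power-law growth statement for a completely random measure, and then de-conditioning using the posterior law of $\Delta_{1,h_{c,\beta}}$ from Theorem \ref{prop:stable_posterior}. First I would work on a common probability space on which $\Delta_{1,h_{c,\beta}}$ is drawn once, the ``new'' feature measure $\mu'_{\Delta_{1,h_{c,\beta}}}$ is drawn once from $\CRM(\nu'_{\Delta_{1,h_{c,\beta}}})$ as in \eqref{eq:levy_stb}, and the additional individuals $Z_{N+1},Z_{N+2},\dots$ are generated as a single infinite Bernoulli-process stream from this measure; this realizes $M\mapsto\unseen{N}{M}$ as a genuine monotone counting process rather than a family of marginals. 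By the marking/thinning property of the underlying Poisson process, conditional on $\Delta_{1,h_{c,\beta}}$ the count $\unseen{N}{M}$ is Poisson with mean $\Delta_{1,h_{c,\beta}}^{-\sigma}\,\stablenews{N}{M}$ and $\unseenfreq{N}{M}{r}$ is Poisson with mean $\Delta_{1,h_{c,\beta}}^{-\sigma}\,\stablenewsfreq{N}{M}{r}$; integrating these conditional Poisson laws against the posterior of $\Delta_{1,h_{c,\beta}}^{-\sigma}$, which is Gamma with shape $K_N+c+1$ by Theorem \ref{prop:stable_posterior}, recovers the negative Binomial laws of Theorem \ref{thm:stable_news} and serves as a consistency check.

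The next step is the deterministic asymptotics of the two Poisson means. Using the telescoping identity $\stablenews{0}{n}=\sigma\int_0^1[1-(1-s)^n]s^{-1-\sigma}\,\de s$ together with $\stablenews{N}{M}=\stablenews{0}{N+M}-\stablenews{0}{N}$, and the Beta-function asymptotics $\Gamma(x+a)/\Gamma(x+b)\sim x^{a-b}$, I would establish $\stablenews{N}{M}\sim\Gamma(1-\sigma)M^\sigma$ and, from $\stablenewsfreq{N}{M}{r}={M\choose r}\sigma B(r-\sigma,N+M-r+1)$, the estimate $\stablenewsfreq{N}{M}{r}\sim \sigma\Gamma(r-\sigma)M^\sigma/\Gamma(r+1)$ as $M\to+\infty$. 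These pin down the scaling exponent $\sigma$ and the two limiting constants.

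The heart of the argument is the conditional almost-sure law of large numbers: conditional on $\Delta_{1,h_{c,\beta}}$, I would show $\unseen{N}{M}/M^\sigma\to \Delta_{1,h_{c,\beta}}^{-\sigma}\Gamma(1-\sigma)$ and $\unseenfreq{N}{M}{r}/M^\sigma\to \Delta_{1,h_{c,\beta}}^{-\sigma}\sigma\Gamma(r-\sigma)/\Gamma(r+1)$ almost surely. Conditional on $\Delta_{1,h_{c,\beta}}$ the model is exactly a finitely tilted, $\Delta_{1,h_{c,\beta}}^{-\sigma}$-scaled stable-Beta process, so the desired convergence is the completely-random-measure power-law statement of \citet[Proposition 2]{masoero2019more} applied with the intensity scaled by $\Delta_{1,h_{c,\beta}}^{-\sigma}$; the multiplicative constant $\Delta_{1,h_{c,\beta}}^{-\sigma}$ is inherited from the scaling of $\nu'_{\Delta_{1,h_{c,\beta}}}$ and the remaining constants are those of the previous step. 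A self-contained version follows from a Chebyshev bound (variance $\le$ mean for the conditionally Poisson counts) and Borel--Cantelli along a geometric subsequence $M_\ell=\lfloor\theta^\ell\rfloor$, with the gaps filled by monotonicity of $M\mapsto\unseen{N}{M}$. Since this conditional convergence holds for almost every value of $\Delta_{1,h_{c,\beta}}$, it holds unconditionally given $Z_{1:N}$ by the tower property, and the limit $W_N=\Gamma(1-\sigma)\,\Delta_{1,h_{c,\beta}}^{-\sigma}$ (respectively its $r$-analogue) is a deterministic rescaling of $\Delta_{1,h_{c,\beta}}^{-\sigma}$; because rescaling a Gamma variate preserves its shape and rescales the other parameter, $W_N$ and $W_{N,r}$ inherit the Gamma laws in the statement from the posterior of $\Delta_{1,h_{c,\beta}}^{-\sigma}$ in Theorem \ref{prop:stable_posterior}.

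I anticipate the main obstacle to be the almost-sure, as opposed to merely distributional, conditional convergence for the prevalence count $\unseenfreq{N}{M}{r}$: unlike $\unseen{N}{M}$, the process $M\mapsto\unseenfreq{N}{M}{r}$ is not monotone, so the gap-filling step cannot use monotonicity directly and instead requires bounding the increments of $\unseenfreq{N}{M}{r}$ over $[M_\ell,M_{\ell+1}]$ by monotone quantities, such as the number of new features whose prevalence changes, and showing these are negligible at the $M^\sigma$ scale. If \citet[Proposition 2]{masoero2019more} can be invoked conditionally as a black box this difficulty disappears, and the only care needed is verifying that the factor $\Delta_{1,h_{c,\beta}}^{-\sigma}$ and the Beta-function asymptotics produce exactly the claimed constants.
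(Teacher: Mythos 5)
Your proposal is correct and takes essentially the same route as the paper's proof: condition on $\Delta_{1,h_{c,\beta}}$, apply the almost-sure power-law convergence results of \citet[Theorem 2]{masoero2019more} for the (now conditionally completely random) stable-Beta--Bernoulli process, de-condition via the tower property, and identify the laws of $W_N$ and $W_{N,r}$ as deterministic rescalings of the posterior $\Delta_{1,h_{c,\beta}}^{-\sigma}\mid Z_{1:N}\sim\Gammad\bigl(K_N+c+1,\beta+\stablenews{0}{N}\bigr)$, with the same Beta-function asymptotics for $\stablenews{N}{M}$ and $\stablenewsfreq{N}{M}{r}$ pinning down the constants. The only differences are inessential: the paper first establishes convergence in distribution by a characteristic-function computation (which your almost-sure argument subsumes), and your self-contained Chebyshev--Borel--Cantelli fallback together with the correctly flagged non-monotonicity of $M\mapsto\unseenfreq{N}{M}{r}$ addresses a subtlety the paper sidesteps by invoking the cited convergence theorems as a black box, exactly as your primary route does.
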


%%%%%%%%%%%%%%%%%%%%%%%%%%%%%%%%
%%%%%%%%%%%%%%%%%%%%%%%%%%%%%%%%
%%%%%%%%%%%%%%%%%%%%%%%%%%%%%%%%
%%%%%%%%%%%%%%%%%%%%%%%%%%%%%%%%
% \input{current_version/experiments}

\section{Experiments} \label{sec:exp}

Over the last decade, genomics has witnessed an extraordinary improvement in the data availability due to the advent of next generation sequencing technologies. Thanks to larger and richer datasets, researchers have started uncovering the role and impact of rare genetic variants in heritability and human disease \citep{hernandez2019ultrarare,momozawa2020unique}. The development of methods for estimating the number of new genomic variants to be observed in future studies is an active research area, as it can aid the design of effective clinical procedures in precision medicine \citep{ionita2009estimating,zou2016quantifying}, enhance understanding of cancer biology \citep{chakraborty2019somatic}, and help to optimize sequencing procedures \citep{rashkin2017optimal,masoero2019more}. Here, we consider datasets of individual genomic sequences. Following common practice, we assume that an underlying fixed and idealized genomic sequence (the ``reference'') is given. Then, each coordinate of an individual sequence reports the presence ($1$) or absence ($0$) of variation at a given locus with respect to the reference. All variants are treated equally, namely, any expression differing from the underlying reference at a given locus counts as a variant. We make use of our methodology to estimate the number of genomic loci at which variation was not observed in the original sample, and is going to be observed in (at least one of) $M$ additional datapoints.

We find in our experiments that the estimates of the total number of new variants to be observed produced using the SB-SP-Bernoulli model, hereafter referred to as SSB, tend to be more accurate than other available methods in the literature. This phenomenon is particularly evident when the sample size $N$ of the training set is small, and when the extrapolation size $M$ is large with respect to $N$. Moreover, the SSB model is particularly effective in estimating the number of new rare variants, e.g.\ variants appearing only once in the additional unobservable samples. Accurate estimation of rare variants is particularly important, as these are believed to be largely responsible for heritability of human disease \citep{rashkin2017optimal,chakraborty2019somatic}. To benchmark the quality of the SSB, we consider a number of competing methodologies for the feature prediction problem available in the literature: i) Jackknife estimators (J) \citep{gravel2014predicting}; ii) a linear programming method (LP) \citep{zou2016quantifying} and variations of Good-Toulmin estimators (GT) \citep{chakraborty2019somatic}. We also compare our empirical findings to a BNP estimator obtained under the stable-Beta process prior (3BB), which has been introduced in \citet{masoero2019more}.  We complete our analysis with a thorough investigation on synthetic data in \Cref{sec:app_synthetic_model} and \Cref{sec:app_synthetic_zipf}, as well as on additional real data from the gnomAD database \citep{karczewski2019variation} in \Cref{sec:gnomAD}.

%%%%%%%%%%%%%%%%%%%%%%%%%%%%%%%%%%%%%%%%%%%%%%%%%%%%%%%%%
\subsection{Empirics and evaluation metrics} \label{sec:learning_EFPF}
%%%%%%%%%%%%%%%%%%%%%%%%%%%%%%%%%%%%%%%%%%%%%%%%%%%%%%%%%

For the SSB method to be useful, we need to estimate the underlying, unknown, parameters of the SB-SP prior. To learn these prior's parameters, we here adopt an empirical Bayes procedure, which consists in maximizing the marginal distribution \eqref{eq:stable_marginal}. In particular, we maximize numerically \Cref{eq:stable_marginal} with respect to the parameters $\beta>0$, $c>0$ and $\sigma\in(0,1)$ of the SB-SP prior, and use the resulting values to produce our estimators. That is, we let
\begin{displaymath}
    (\hat{\beta}, \hat{c}, \hat{\sigma})= \arg\max_{(\beta,c,\sigma)} \left\{ p_{k}^{(N)}(m_{1}, \cdots, m_{k}) \right\},
\end{displaymath}
and plug these values in the BNP estimator \eqref{eq:stable_news} and \eqref{eq:stable_news_freq}. The resulting values provide our BNP estimates of the number $\unseen{N}{M}$ of new variants and the number $\unseen{N}{M,r}$ of new variants with prevalence $r$. 

To assess the accuracy of our estimates, we consider the percent deviation of the estimate from the truth to be the achieved accuracy. That is, the accuracy of the estimator $\estpred{N}{M}$ is defined as
\begin{align}
    \predaccuracy{N}{M}:= 1- \min\left\{\frac{|\unseen{N}{M} - \estpred{N}{M}|}{\unseen{N}{M}}, 1 \right\}. \label{eq:percentage_accuracy}
\end{align}
In particular, the accuracy $\predaccuracy{N}{M}$ equals $1$ when the estimate is perfect (no error is incurred), and decreases to $0$ as the estimate deviates from the truth. The $\min$ operator in \eqref{eq:percentage_accuracy} ensures that $\predaccuracy{N}{M}$ lies in $[0,1]$: we let the accuracy to be equal to $0$ whenever there is a severe overestimation, and the percentage estimation error exceeds $100\%$, i.e.\ when $\estpred{N}{M} \ge 2\times \unseen{N}{M}$. The SSB,  3BB  and LP methods also offer an estimate for the number of new features observed with a given prevalence $r$. We let $\predaccuracyfreq{N}{M}{r}$ be the accuracy metric, where we replace in \eqref{eq:percentage_accuracy} $\unseen{N}{M}$ with $\unseenfreq{N}{M}{r}$, the number of new features observed with prevalence $r$, and $\estpred{N}{M}$ with  $\estpredfreq{N}{M}{r}$. 

%%%%%%%%%%%%%%%%%%%%%%%%%%%%%%%%%%%%%%%%%%%%%%%%%%%%%%%%%
\subsection{Estimating the number of new variants in cancer genomics}
%%%%%%%%%%%%%%%%%%%%%%%%%%%%%%%%%%%%%%%%%%%%%%%%%%%%%%%%%

Following the empirical study of \citet{chakraborty2019somatic}, we make use of data from the Cancer Genome Atlas (TCGA), the largest publicly available cancer genomics dataset, containing somatic mutations from $ 10{,}295$ patients and spanning $33$ different cancer types. We partition the samples into $33$ smaller datasets according to cancer-type annotation of each patient. See \citet{chakraborty2019somatic} and \citet[Appendix F]{masoero2019more} for details on the data and the experimental setup. For each cancer type, we retain a small fraction of the data for purposes of training, and consider the task of estimating the number of new variants that will be observed in a follow-up sample given a pilot sample. We validate our estimates by comparing the estimate $\estpred{N}{M}$ of the number of distinct variants to the true value, obtained by extrapolating to the remaining data.  To assess the variability and error in our estimates, we repeat for every cancer type the experiment on $S=1{,}000$ subsets of the data, each obtained by randomly subsampling without replacement from the full sample. 
 
We find that the SSB and 3BB methods perform particularly well when the training sample size $N$ is small compared to the extrapolation sample size $M$.  This setting is relevant in the context of cancer genomics, as scientists are interested in understanding the ``unexploited potential'' of the genetic information, especially for rare cancer subtypes \citep{chakraborty2019somatic, huyghe2019discovery}. To compare and quantify the performance of the available methodologies in this setting, we report in \Cref{fig:cancer_pred_small_N} the distribution of the estimation accuracy when retaining only $N=10$ samples for training and extrapolating to the largest possible sample size $M$ for which we can compute the accuracy metric (\Cref{eq:percentage_accuracy}). We report results for the $10$ cancer types with the largest number of samples in the original dataset. For each cancer type and  for each method, the distribution of the estimation accuracy is obtained by considering its performance across the $S=1{,}000$ replicates. Across all cancer types, the estimates obtained from the SSB method achieve higher accuracy. 

We show in \Cref{fig:cancer_pred_1}  the behavior of $\estpred{N}{i}$ for five different cancer types as $i=1, \ldots,M$. Again, we let $N=10$, and $M$ be the largest possible extrapolation value, as dictated by the dataset size. We report the estimates obtained from a fixed sample of size $N=10$, as well as the variability around such estimates obtained by re-fitting each model, iteratively leaving one datapoint out from the sample. In this setting, the SSB method outperforms competing methods in terms of estimation accuracy. Moreover, the variability in the estimates arising from re-fitting the model on subsets of the data provides a useful measure of uncertainty in such estimation. 
 \begin{figure}[h!]
      \centering \includegraphics[width=\textwidth,height=\textheight,keepaspectratio]{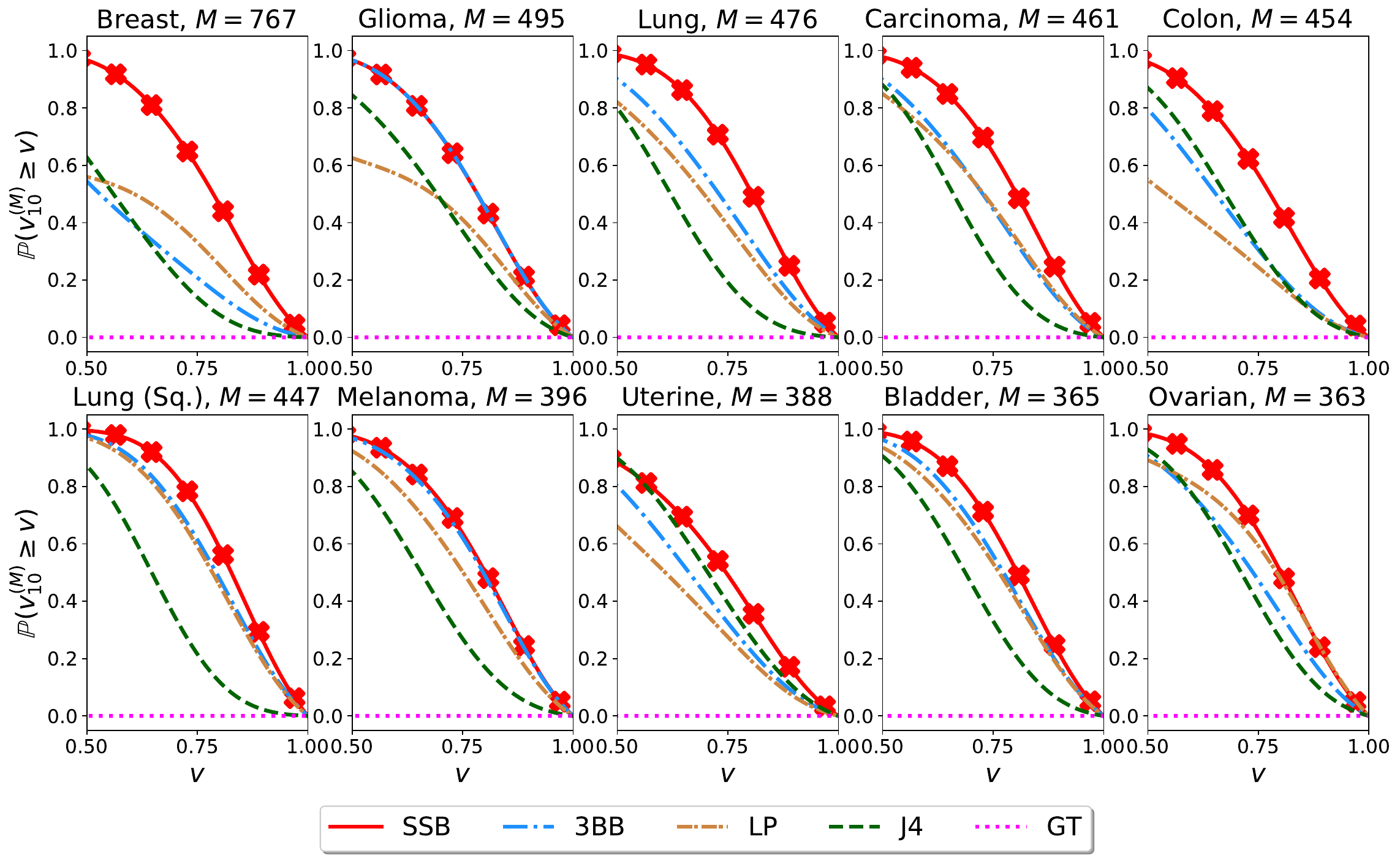}
    \caption{\footnotesize{Estimation accuracy $\predaccuracy{N}{M}$ for the number of new genomic variants $\estpred{10}{M}$. For each method and each cancer type, we retain $N=10$ random samples and use them to estimate up to $M$ total observations, where $N+M$ is the size of the original sample.}}
    \label{fig:cancer_pred_small_N}
\end{figure}

\begin{figure}
      \centering \includegraphics[width=\textwidth,height=\textheight,keepaspectratio]{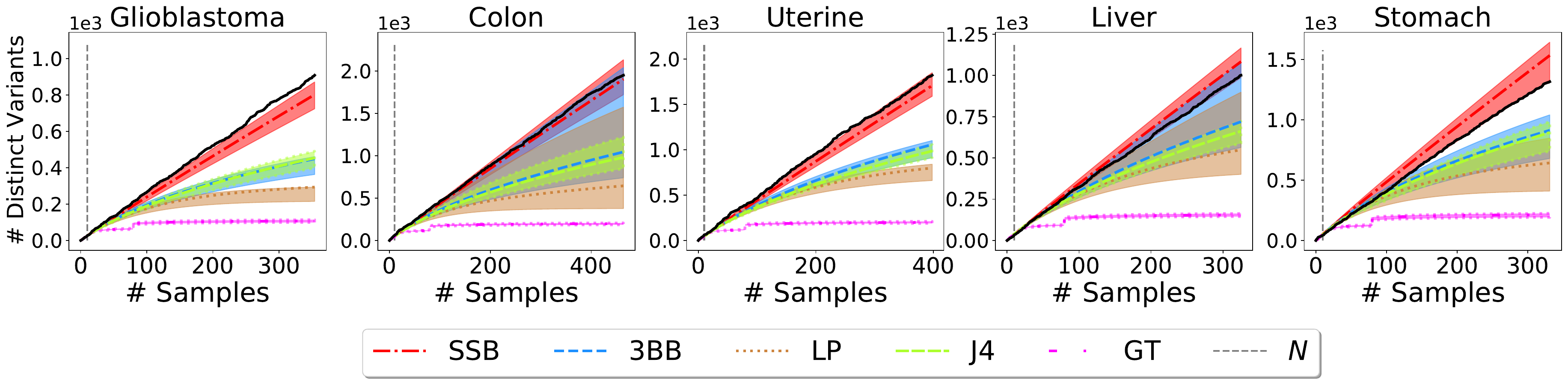}
    \caption{\footnotesize{Estimation of the number of new genomic variants $\estpred{N}{i}$, for $i=1,\ldots,M$. For each method and cancer type, we retain $N=10$ random samples and use them to estimate up to the largest possible size. We fit each model on the full sample, as well as $N = 10$ additional times by iteratively leaving one datapoint out from the training sample. The solid black line is the true number of features that would have been observed  (vertical axis) for any extrapolation size $N+M$ (horizontal axis), for a fixed ordering of the data. Shaded regions report the prediction range obtained from the estimates from the leave-one-out fits.}}
    \label{fig:cancer_pred_1}
\end{figure}

\subsection{Estimating the number of new rare variants in cancer genomics}

In recent years, the cancer genomics research community has become increasingly interested in studying and understanding the role of extremely rare variants, such as singletons, i.e.\ observed in only one patient. Evidence suggests that rare deleterious variants can have far stronger effect sizes than common variants \citep{rasnic2020expanding} and can play an important role in the development of cancer. For example, in breast cancer, it is well accepted that the risk of a variant is inversely proportional with respect to its prevalence: the rarer the variant, the higher the risk \citep{wendt2019identifying}. Therefore, effective identification and discovery of rare variants is an active, is an ongoing research area \citep{lawrenson2016functional, lee2019boadicea}. This phenomenon is not limited to breast cancer, but is progressively being studied across different cancer types. See, e.g. the recent works on ovarian \citep{phelan2017identification}, skin \citep{goldstein2017rare}, prostate  \citep{nguyen2020rare} and lung \citep{liu2021rare} cancers and references therein.  In downstream analysis, these estimates could be useful for planning and designing future experiments, e.g.\ informing scientists on the number of new samples to be collected in order to observe a target number of new variants, or for power analysis considerations in rare variants association tests \citep{rashkin2017optimal}. 

 \begin{figure}[h!]
      \centering \includegraphics[width=\textwidth,height=\textheight,keepaspectratio]{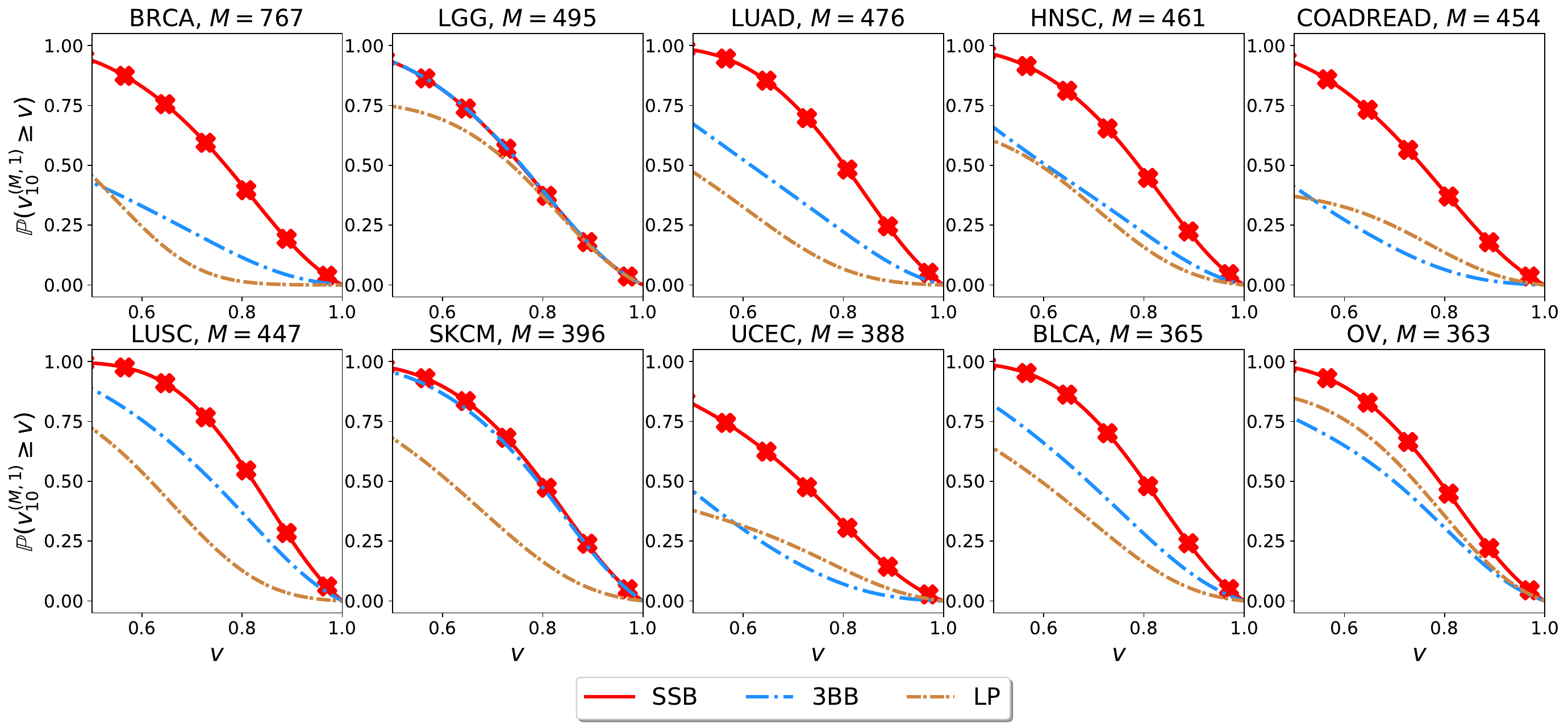}
    \caption{\footnotesize{Estimation accuracy $\predaccuracyfreq{N}{M}{1}$ for new variants appearing with prevalence one in future unobservable samples for different cancer types. For each method and each cancer, we retain $N=10$ random samples and use them to estimate up to the largest possible size.}}
    \label{fig:cancer_rare_small_N}
\end{figure}

The BNP framework considered here allows us to estimate the number of new rare variants to be discovered. While \citet{zou2016quantifying} did not consider the problem of estimating rare variants, it is straightforward to obtain an estimate for this quantity from their framework. Indeed, for every prevalence $x \in [0,1]$, the LP estimates the histogram $h(x)$, which counts the number of variants appearing with prevalence $x$ in the population, and the number of variants appearing with prevalence $r$ follows from the binomial sampling model assumption, namely $\estpredfreq{N}{M}{r}=\sum_x h(x)\left\{\binom{N+M}{r}x^r(1-x)^{N+M-r}-\binom{N}{r} x^r(1-x)^{N-r}\right\}$. We show in \Cref{fig:cancer_rare} that the SSB method provides better estimates than the 3BB and LP methods.
% We find that the 3BB method provides improved estimates for the number of new singletons to be observed not only for the cancer types showed in \Cref{fig:cancer_rare}, but more generally across all cancer types, and for different sizes of the training, as we show in \Cref{fig:cancer_rare_2}.
\begin{figure}[h!]
      \centering \includegraphics[width=\textwidth,height=\textheight,keepaspectratio]{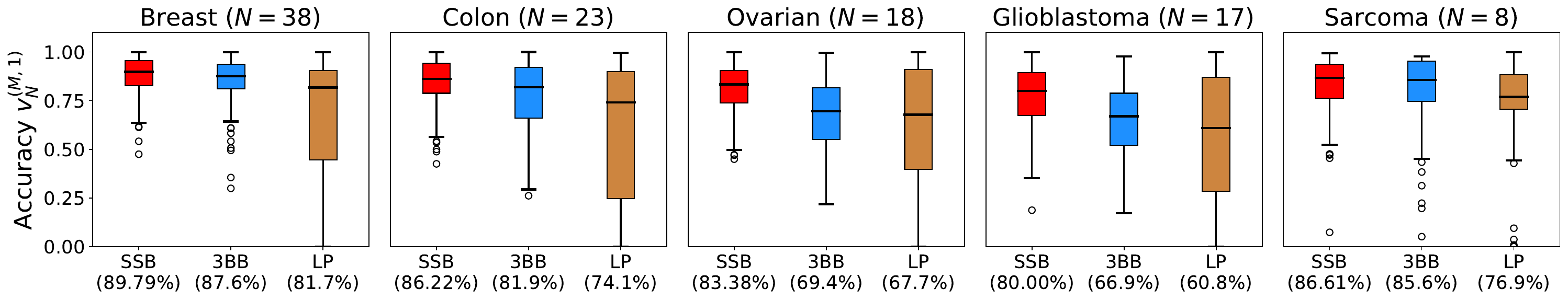}
    \caption{\footnotesize{Estimation accuracy $\predaccuracyfreq{N}{M}{1}$ for new variants appearing with prevalence one in future samples. For each method and different cancer types, we  retain a random sample of size $N=5\%$ of the available dataset, and use it to estimate up to the largest possible size.}}
    \label{fig:cancer_rare}
\end{figure}

%%%%%%%%%%%%%%%%%%%%%%%%%%%%%%%%
\subsection{Coverage and calibrated uncertainties}

%%%%%%%%%%%%%%%%%%%%%%%%%%%%%%%%
One of the benefits of the BNP approach is that it automatically yields a notion of variability of the estimate of $U$ via posterior credible intervals. We here check whether these intervals produce a useful notion of uncertainty, by investigating their calibration. For $\alpha \in (0,1)$, we say that a $100\times\alpha\%$ credible interval is calibrated if it contains the true value of interest, arising from hypothetical repeated draws, $100\times\alpha\%$ of the times. We here assess the calibration of a $100\times\alpha \%$ credible interval for $U_N^{(M)}$ conditionally given $Z_{1:N}$ as follows. Let $S$ be a large number ($S=1{,}000$ in our experiments). For each $s=1,\ldots,S$, we retain a random subset of the data of size $N$, and estimate the corresponding parameters $\hat{\beta},\hat{c},\hat{\sigma}$ as discussed in \Cref{sec:learning_EFPF}. Then,
 we let $\hat{W}_{N, s, low}^{(M)}(\alpha), \hat{W}_{N, s, hi}^{(M)}(\alpha) $ be the endpoints of a $100\times \alpha\%$ credible interval for the distribution of the number of new features, as given by \Cref{eq:stable_news}, centered around the posterior predictive mean. We compute coverage calibration via
\begin{align*}
    w_{N}^{(M)}(\alpha) = \frac{1}{S}\sum_{s=1}^S \ind\left\{ \hat{W}_{N, s, low}^{(M)} (\alpha) \le K_{N+M} \le \hat{W}_{N,s,hi}^{(M)}(\alpha) \right\}.
\end{align*}
This is the fraction of the $S$ experiments in which the true value was contained by an $100\times\alpha\%$ credible interval. The closer $w_{N}^{(M)}(\alpha)$ to $\alpha$, the better calibrated the credible intervals. We compute the same quantity for the 3BB method using the results in \citet{masoero2019more}. Although still not perfect, we find that the posterior predictive intervals obtained from the SSB method are better calibrated than the ones under the 3BB method (see \Cref{fig:cancer_coverage}). 
\begin{figure}
\centering \includegraphics[width=\textwidth,height=\textheight,keepaspectratio]{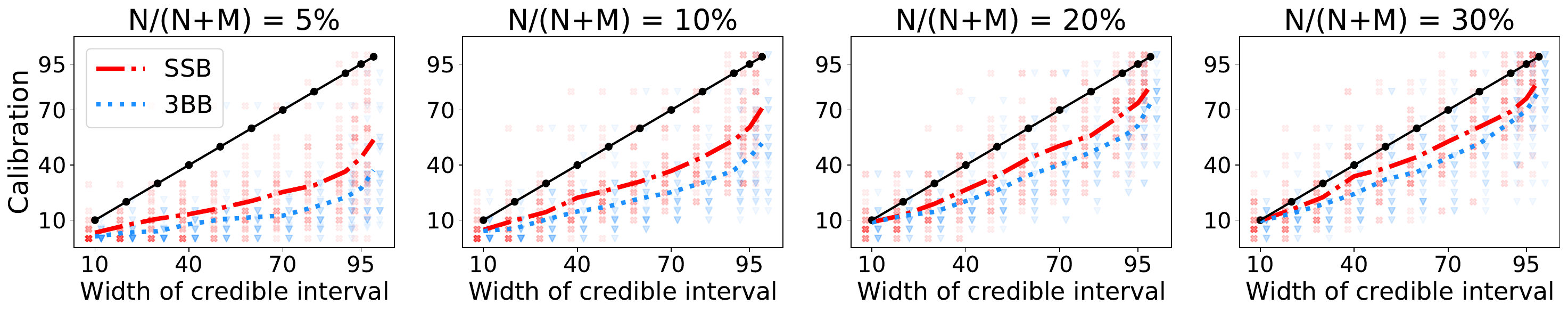}
\caption{\footnotesize{Coverage calibraiton of BNP estimators for number of new variants in future samples across all cancer types in TCGA. Different subplots refer to different ratios of the training $N$ with respect to the extrapolation $M$. For each cancer, we retain a training sample of size $N \in \{5\%, 10\%, 20\%, 30\%\}$ of the total available dataset, and extrapolate up to the largest available $M$. Colored lines report the average coverage $w_{N}^{(M)}(\alpha)$  across all cancer types ($y$-axis) as a function of $\alpha$ ($x$-axis). Faded dots refer to coverage for individual cancer types.}}
    \label{fig:cancer_coverage}
\end{figure}

%%%%%%%%%%%%%%%%%%%%%%%%%%%%%%%%
%\input{current_version/discussion}

\section{Discussion} \label{sec:discussion}

\citet{masoero2019more} first applied CRM priors to the unseen-features problem, showing that: i) despite the broadness of the class of CRM priors, all CRM priors lead to the same Poisson posterior structure for the number of unseen features, which thus makes them not a flexible prior model for the unseen-features problem; ii) while the Poisson posterior distribution may be appealing in principle, making the posterior inferences analytically tractable and of easy interpretability, its independence from $Z_{1:N}$ makes the BNP approach a questionable oversimplification, with posterior inferences being completely determined by the estimation of unknown prior's parameters. In this paper, we introduced the SB-SP prior, and showed that: i) it enriches the posterior distribution of the number of unseen features arising under CRM priors, which results in a negative Binomial distribution whose parameters depend on the sample size and the number of distinct features; ii) it maintains the same analytical tractability and interpretability as CRM priors, which results in BNP estimators that are simple, linear in the sampling information and computationally efficient. The effectiveness of the SB-SP prior is showcased through an empirical analysis on synthetic and real data. Under the SB-SP prior, we found that estimates of the unseen number of features are accurate, and they outperform the most popular competitors in the challenging scenario where the sample size $N$ is particularly small, and also small with respect to the extrapolation size $M$.

Our approach admits an extension to the multiple-feature setting, which takes into account of the many forms of variation, e.g. single nucleotide changes, tandem repeats, insertions and deletions, copy number variations  \citep{zou2016quantifying}. We briefly describe the multiple-feature setting, and defer to Appendix \ref{appendix_multivariate} for details. It is assumed that a feature $w_i$ comes with a characteristic, i.e.\ the form of variation, chosen among $q>1$ characteristics. For $N \geq 1$, the observable sample $\bm{Z}_{1:N}= (\bm{Z}_1, \ldots , \bm{Z}_N)$ is modeled as a $\{0,1\}^q$-valued stochastic process $\bm{Z}  = \sum_{i \geq 1} \bm{A}_{i} \delta_{w_i }$, where $\bm{A}_i :=(A_{i,1}, \ldots , A_{i,q})$ is a Multinomial random variable with parameter $\bm{p}_i = (p_{i,1}, \ldots , p_{i,q})$ such that $|\bm{p}_i| = \sum_{1\leq j\leq q} p_{i,j} <1$, and the $\bm{A}_i$'s are i.i.d. That is, for any $i\geq1$ all the $A_{i,j}$'s are equal to $0$ with probability $(1-|\bm{p}_i|)$, i.e.\ $w_{i}$ does not display variation, or only one $A_{i,j}$'s is equal to $1$ with probability $p_{i,j}$, i.e.\ $w_{i}$ displays variation with characteristic $j$. $\bm{Z}$ is a multivariate Bernoulli process with parameter $\bm{\zeta}= \sum_{i \geq 1} \bm{p}_i \delta_{w_i}$. The stable-Beta-Dirichlet process prior for $\bm{\zeta}$ is a multivariate generalization of the stable-Beta process prior \citep{james2017bayesian}, and it leads to a Poisson posterior distribution for the number of unseen features, given $\bm{Z}_{1:N}$, which depends on $\bm{Z}_{1:N}$ only through $N$. In Appendix \ref{appendix_multivariate} we introduce a scaled version of the stable-Beta-Dirichlet process, and show that it leads to a negative Binomial posterior distribution for the number of unseen features, which depends on $\bm{Z}_{1:N}$ through  $N$ and the number of distinct features in $\bm{Z}_{1:N}$.

SP priors have been introduced in \citet{james2015scaled} and, to the best of our knowledge, since then no other works have further investigated such a class of priors. To date, the peculiar predictive properties of SP priors appear to be unknown in the BNP literature. Our work on the unseen-features problem is the first to highlight the great potential of SP priors in BNPs, showing that they provide a critical tool for enriching the predictive structure of the popular CRM priors \citep{james2017bayesian,broderick2018posteriors}. We believe that SPs may be of interest beyond the unseen-features problem, and more generally beyond the broad class of feature sampling problems. CRM priors, and in particular the Beta and stable-Beta process priors, have been widely used in several contexts, with a broad range of applications in topic modeling, analysis of social networks, binary matrix factorization for dyadic data, analysis of choice behaviour arising from psychology and marketing surveys, graphical models, and analysis of similarity judgement matrices. See \cite{Gri_11} and references therein for details. In all these contexts, SP priors may be more effective than CRM priors, as they allow to better exploit the sampling information in posterior inferences.

Among applications of SP priors beyond features sampling problems, it is worth mentioning the use of SP priors as hierarchical (or latent) priors in models of unsupervised learning \citep[Section 5]{Gri_11}, the most popular being Gaussian latent feature modeling. Differently from features sampling problems, where the values of features' labels $W_i$s are immaterial, in Gaussian latent feature modeling the values the $W_{i}$'s become material. That is, under the Gaussian latent feature model with a SP prior, observations are assumed to modeled as a multivariate Gaussian distribution, whose mean depends on latent features that are modeled with a SP prior, thus making the values of features' labels $W_{i}$'s of critical importance for the analysis. Bayesian factor analysis \citep{Knowles2011} provides another context where SP priors may be usefully applied as hierarchical priors. Within the context of factor analysis, we also mention the work of \cite{Ayed_2021} with applications to network analysis. There, the authors exploit CRM priors to recover the latent community structure in a network between individuals, and the features' labels describe the level of affiliation of a certain individual to a latent community.  In such a context, we believe that SP priors may be used in place of CRM priors, with the advantage of introducing richer predictive structure. In this respect, our work paves the way to promising directions of future research, in terms of both methods and applications.

%%%%%%%%%%%%%%%%%%%%%%%%%%%%%%%%
%%%%%%%%%%%%%%%%%%%%%%%%%%%%%%%%
%%%%%%%%%%%%%%%%%%%%%%%%%%%%%%%%
%%%%%%%%%%%%%%%%%%%%%%%%%%%%%%%%
%%%%%%%%%%%%%%%%%%%%%%%%%%%%%%%%

\section*{Acknowledgement}
The authors thank Joshua Schraiber for useful discussions. Federico Camerlenghi and Stefano Favaro received funding from the European Research Council (ERC) under the European Union's Horizon 2020 research and innovation programme under grant agreement No 817257. Federico Camerlenghi and Stefano Favaro gratefully acknowledge the financial support from the Italian Ministry of Education, University and Research (MIUR), ``Dipartimenti di Eccellenza" grant 2018-2022. Lorenzo Masoero and Tamara Broderick were supported in part by the DARPA I2O LwLL program, an NSF CAREER Award, and ONR award N00014-17-1-2072.

%%%%%%%%%%%%%%%%%%%%%%%%%%%%%%%%
%%%%%%%%%%%%%%%%%%%%%%%%%%%%%%%%
%%%%%%%%%%%%%%%%%%%%%%%%%%%%%%%%
%%%%%%%%%%%%%%%%%%%%%%%%%%%%%%%%
%%%%%%%%%%%%%%%%%%%%%%%%%%%%%%%%

\appendix

\section{A brief account on completely random measures} \label{appendix_crm}

In this section we provide a short account on completely random measures (CRMs). For a more exhaustive treatment refer to \citet{daleyII,kingman1992poisson}. Let us denote by $\mathds{W}$ a Polish space equipped with its Borel $\sigma$-field $\mathcal{W}$, and we also indicate by $\mathcal{B}_{\R_+}$ the Borel $\sigma$-field of the positive real line $\R_+$.
Denote by  $\mathsf{M}_\mathds{W}$ the space of all bounded and finite measures on $(\mathds{W}, \mathcal{W})$, in other words $\mu \in \mathsf{M}_\mathds{W}$ iff $\mu (A)< +\infty$ for any bounded set $A\in \mathcal{W}$. The space $\mathsf{M}_\mathds{W}$ is usually assumed to be equipped with a proper Borel $\sigma$-algebra, which is induced by the so called \textit{weak-hash convergence} and denoted here as $\mathcal{M}_\mathds{W}$ (see \cite{daleyII} for details).  
\begin{definition}
A Completely Random Measure (CRM) $\mu$ on $(\mathds{W}, \mathcal{W})$ is a random element defined on a suitable probability space and taking values in $(\mathsf{M}_\mathds{W},\mathcal{M}_\mathds{W})$ such that the random variables $\mu (A_1), \ldots , \mu (A_n)$ are independent for any choice of bounded and disjoint sets 
$A_1, \ldots , A_n \in \mathcal{W}$ and for any $n \geq 1$.
\end{definition}
\cite{kingman1967completely} proved that a CRM may be decomposed as the sum of three main components: i) a deterministic drift $u$, namely a deterministic measure on $(\mathds{W} , \mathcal{W})$; ii)  a part with random jumps $(\tau_i)_{i \geq 1}$ at random locations $(W_i)_{i \geq 1}$, denoted here as $\mu_c= \sum_{i \geq 1} \tau_i \delta_{W_i}$; iii) a component with random jumps $(\eta_i)_{i \geq 1}$ at fixed locations 
$w_1, w_2, \ldots \in \mathds{W}$. That is to say
\begin{equation}  \label{eq:decomposition}
\mu (\,\cdot \,) = u(\,\cdot \,)+\mu_c (\,\cdot \,) +\sum_{i \geq 1} \eta_i \delta_{w_i} (\,\cdot \,) .
\end{equation}
See \citet{daleyII} for a proof.

Following standard practice in the nonparametric literature, in this paper we deal with CRMs without deterministic drift and without fixed atoms, namely we assume  that $\mu \equiv \mu_c$. In this case $\mu = \mu_c$ is characterized through the L\'evy-Khintchine representation of its Laplace functional:
\begin{equation}
\label{eq:LK}
\E \left[ e^{-\int_\mathds{W} f (w) \mu_c(\de w)} \right]  =
\exp \left\{ -\int_{\R_+ \times\mathds{W}}  (1-e^{-s f (w)}) \nu (\de s, \de w) \right\},
\end{equation}
for any measurable function $f : \mathds{W} \to \R_+$, where $\nu $ is a measure on $\R_+ \times\mathds{W}$ and it is referred to as the L\'evy intensity of the CRM $\mu_c$. The measure $\nu$ is also required to satisfy the following conditions
\[
\nu (\R_+ \times \{w\}) = 0 \quad \forall w \in \mathds{W} , \quad \text{and } \int_{\R_+ \times A} \min \{s,1  \} \nu (\de s , \de w) < \infty
\]
for any bounded $A \in \mathcal{W}$.
The representation \eqref{eq:LK} is of paramount importance to prove all our posterior results, and it clarifies the pivotal role of $\nu$ in the determination of the distributional properties of $\mu_c$.  \cite{Kallenberg2010} provides a very general decomposition for such a measure $\nu$ as follows:
$\nu (\de s , \de w) = \lambda_w (\de s) \Lambda (\de w)$, where $\Lambda$ is  a $\sigma$-finite measure on $(\mathds{W} , \mathcal{W})$ and $\lambda_w $ is  a transition kernel, i.e., $w \to \lambda_w (A)$ is $\mathcal{W}$-measurable for all Borel sets $A \in \mathcal{B}_{\R_+}$ and $A \to \lambda_w (A)$ is a measure on $(\R_+,  \mathcal{B}_{\R_+})$. When $\lambda_w (\de s)\equiv \lambda (\de s)$ does not depend on $w \in \mathds{W}$, we say that the CRM is homogeneous, which is tantamount to saying that the atoms $W_i$'s and the jumps $\tau_i$'s are independent random variables.
In BNP problems, it is common to suppose that $\Lambda (\de w)= \alpha P (\de w)$, where $P$ is a probability measure on $(\mathds{W} , \mathcal{W})$ and $\alpha >0$. Two remarkable examples of CRMs are the $\sigma$-stable process, which can be recovered by choosing $\lambda (\de s) = \sigma s^{-1-\sigma } \de s $, and the gamma process, which corresponds to the choice $\lambda (\de s ) = e^{-s}/s \: \de s$. See also \citep{lijoi2010models} for additional details and connections with the BNP literature.\\

In Section \ref{appendix_multivariate}, we will make use of multivariate CRMs to define a multivariate extension of the Bernoulli process model, called the Bernoulli process model with a condiment. For this reason we now specify what we mean for a multivariate CRM. A vector $\bm{\mu}= (\mu_1, \ldots , \mu_q)$ of completely random measures is said to be a multivariate CRM if the random variables 
\[
    (\mu_1(A_1), \ldots , \mu_q (A_1)), \ldots , (\mu_1 (A_n), \ldots , \mu_q (A_n))
\]
are independent for any choice of bounded and disjoint Borel sets $A_1, \ldots , A_n \in \mathcal{W}$ and for any $n \geq 1$.  
A decomposition similar to the one stated in \Cref{eq:decomposition} holds true for multivariate CRMs as well \citep{Kallenberg2010}. 
In the present paper we focus on multivariate CRMs which are functionals of marked Poisson point processes on $\R_+^q \times \mathds{W}$, i.e., 
\[
\bm{\mu} = \sum_{i \geq 1} \bm{\tau}_i \delta_{W_i},
\]
where $(\bm{\tau}_i)_{i \geq 1}$ are random jumps in $\R_+^q$ and $(W_i)_{i \geq 1}$ is a sequence of random atoms in $\mathds{W}$. Such a multivariate CRM 
has the following L\'evy-Khintchine representation which generalizes \Cref{eq:LK}:
\begin{equation}
\label{eq:multivariateLK}
\begin{split}
&\E [e^{-\int_{\mathds{W}} f_1 (w) \mu_1 (\de w)- \cdots  - \int_{\mathds{W}} f_q (w) \mu_q (\de w)}] \\
& \qquad= 
\exp \left\{- \int_{\mathds{W}}  \int_{\R_{+}^q} (1-e^{-s_1 f_1 (w)- \cdots - s_q f_q (w)}   )  \nu_{(q)} (\de s_1, \ldots, \de s_q, \de w) \right\}
\end{split}
\end{equation}
for arbitrary measurable functions $f_1, \ldots , f_d: \mathds{W} \to \R_+$.
The intensity measure $\nu_{(q)}$ in \eqref{eq:multivariateLK} is required to simultaneously satisfy 
\[
\nu_{(q)} (\R_{+}^q \times \{w\}) = 0 \quad \forall w \in \mathds{W}
\]
and
\[
\int_{\R_+ \times A} \min \{||\bm{s}||,1  \} \nu_{(q)} (\de s_1, \ldots , \de s_q , \de w) < \infty,
\]
for any bounded $A \in \mathcal{W}$, and having denoted by $||\bm{s}||$ the Euclidean norm of the vector $\bm{s}:= (s_1, \ldots , s_q)$.
In the present paper,  we will  work with a \textit{homogeneous} L\'evy intensity measure of the following form $
\nu_{(q)} (\de s_1, \ldots, \de s_q, \de w)=  \lambda_{(q)} (s_1, \ldots , s_q) \de s_1 \cdots \de s_q  P (\de w)$, where 
$P$ is a diffuse probability measure on $(\mathds{W}, \mathcal{W})$ and $\lambda_{(q)}: \R^q_+ \to \R_+$ is measurable.  See, e.g., 
\citep{Kallenberbg2017} for further details.

\section{Posterior analysis for SP priors: proofs and details}  \label{appendix_scaledpriors}

In the present section we derive the marginal, posterior and predictive distributions for the Bernoulli process model under a scaled process prior. Specifically we focus on the following statistical model throughout the section:
\begin{equation}
\label{eq:SP_model}
\begin{split}
Z_n \mid \mu &\simiid {\rm BeP} (\mu_{\Delta_{1,h}}), \quad \text{for}\quad n =1, \ldots , N \\
\mu_{\Delta_{1,h}}  &   \sim {\rm SP} (\nu , h) ,
\end{split}
\end{equation}
where $\mu_{\Delta_{1,h}}$ has been defined at the beginning of Section \ref{sec:SP_prior}.
In Subsection \ref{sub:lemma} we provide some lemmas regarding SP priors, then Subsection \ref{sub:posterior_analysis} is concerned with the Bayesian posterior analysis of the model in \eqref{eq:SP_model}.

\subsection{Preparatory lemmas} \label{sub:lemma}
Some preparatory  lemmas are required before the posterior analysis. The first lemma provides the reader with the conditional distribution of
 $\mu_{\Delta_{1,h}}$ given $\Delta_{1,h}$.
\begin{lemma} \label{lemma:conditional_scaled_random_measure}
    Let $\mu_{\Delta_{1,h}} \sim {\rm SP} (\nu , h) $,  governed by the L\'evy intensity measure $\nu (\de s ,\de w) =\lambda(s)\de s P(\de w)$ on $\R_+ \times \mathds{W}$. 
The conditional distribution of $\mu_{\Delta_{1,h}}$, given $\Delta_{1,h}$, equals the one of a CRM on $(\mathds{W}, \mathcal{W})$ with L\'evy intensity
    \begin{equation*} 
        \Delta_{1,h} \lambda(\Delta_{1,h} s)\ind_{(0,1)} (s) \de s P(\de w).
    \end{equation*}
\end{lemma}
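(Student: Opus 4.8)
The plan is to exploit the Poisson point process representation of the homogeneous CRM $\mu$ together with the fact that the tilting by $h$ merely reweights the law of the largest atom, leaving the conditional law $G_a$ of the normalized remaining atoms untouched. Concretely, realize $\mu = \sum_{i\geq 1}\tau_i\delta_{W_i}$ as the functional of a Poisson point process $\Pi = \{(\tau_i, W_i)\}_{i\geq 1}$ on $\R_+ \times \mathds{W}$ with mean intensity $\nu(\de s, \de w) = \lambda(s)\de s P(\de w)$. Since the claim is a statement about the conditional law of $\mu_{\Delta_{1,h}}$ given $\Delta_{1,h} = a$, and this conditional law is $G_a$ by construction --- identical to the conditional law of the normalized sub-atoms $(\Delta_{i+1}/\Delta_1)_{i\geq 1}$ given $\Delta_1 = a$ --- it suffices to identify the law of $\mu_{\Delta_1}$ conditional on $\Delta_1 = a$ and then transport it through the scaling.

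First I would condition the Poisson process on its largest atom. Using the independence of the restrictions of $\Pi$ to the disjoint strips $(a,\infty)\times\mathds{W}$ and $(0,a)\times\mathds{W}$ \citep{kingman1992poisson}, together with the void probability $\exp\{-\int_a^\infty \lambda(s)\de s\}$ for the upper strip and the atom-intensity $\lambda(a)\de a P(\de w)$ at height $a$, one recovers exactly the Ferguson--Klass density $f_{\Delta_1}(a) = \exp\{-\int_a^\infty\lambda(s)\de s\}\lambda(a)$ \citep{FergusonKlass_1972} and, simultaneously, shows that conditionally on $\Delta_1 = a$ the atoms strictly below $a$ form a Poisson process on $(0,a)\times\mathds{W}$ with intensity $\lambda(s)\ind_{(0,a)}(s)\de s P(\de w)$, with locations still i.i.d.\ $P$ and independent of the maximal atom's location.

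Next I would rescale by $a$. Applying the mapping theorem to the bijection $s\mapsto s/a$ on $(0,a)$, the image point process $\{(\Delta_{i+1}/a, W_{i+1})\}_{i\geq 1}$ is Poisson with intensity obtained by the change of variables $t = s/a$, namely $a\lambda(at)\ind_{(0,1)}(t)\de t P(\de w)$. Hence, conditionally on $\Delta_1 = a$, the measure $\mu_{\Delta_1} = \sum_{i\geq 1}(\Delta_{i+1}/a)\delta_{W_{i+1}}$ is a CRM with L\'evy intensity $a\lambda(as)\ind_{(0,1)}(s)\de s P(\de w)$. Finally I would transfer this to $\mu_{\Delta_{1,h}}$: since $\Delta_{1,h}$ has density $f_{\Delta_{1,h}}(a) = h(a)f_{\Delta_1}(a)$ but the conditional law of the normalized sub-atoms given the value $a$ is still $G_a$, the conditional law of $\mu_{\Delta_{1,h}}$ given $\Delta_{1,h} = a$ coincides with that of $\mu_{\Delta_1}$ given $\Delta_1 = a$; substituting $a = \Delta_{1,h}$ yields the asserted intensity $\Delta_{1,h}\lambda(\Delta_{1,h}s)\ind_{(0,1)}(s)\de s P(\de w)$.

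The main obstacle is the rigorous conditioning on the null event $\{\Delta_1 = a\}$ in the first step. I would handle it by the disintegration just described --- equivalently, Palm calculus for Poisson processes --- rather than by naive conditioning, so that the restricted-Poisson conclusion and the recovery of $f_{\Delta_1}$ are fully justified. Everything else (the mapping theorem and the change-of-measure bookkeeping) is routine once this disintegration is in place.
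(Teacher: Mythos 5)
Your proposal is correct and follows essentially the same route as the paper's proof: both reduce the claim to the law of $\mu_{\Delta_1}$ given $\Delta_1=a$ by noting that the tilting by $h$ changes only the marginal of the largest jump while leaving $G_a$ intact, then invoke the Ferguson--Klass fact that the sub-maximal jumps form a Poisson process with intensity $\lambda(s)\ind_{(0,a)}(s)\de s$ and rescale by $a$. The only cosmetic difference is that you verify the rescaled intensity via the mapping theorem (and re-derive the Ferguson--Klass density by Palm disintegration), whereas the paper cites Ferguson--Klass directly and checks the claimed intensity by computing the conditional Laplace functional --- two interchangeable ways of performing the same change of variables.
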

\begin{proof}
Recall the construction of a SP prior, as detailed in Section \ref{sec:SP_prior}. It starts from an underlying  CRM  $\mu=\sum_{i\geq1}\tau_{i}\delta_{W_{i}}$ with intensity  $ \nu$ on $\R_{+} \times \mathds{W}$. Moreover, having denoted by $\Delta_{1}>\Delta_{2}>\ldots$ the decreasingly ordered jumps $\tau_{i}$'s of $\mu$, one considers:
\[
\mu_{\Delta_{1}} = \sum_{i\geq 1}\frac{\Delta_{i+1}}{\Delta_{1}}\delta_{W_{i+1}},
\]
and the SP process is defined by a change of measure of the largest jump $\Delta_1$, replaced with the distribution of $\Delta_{1,h}$. As a consequence it is sufficient to prove that $\mu_{\Delta_{1}} \mid  \Delta_1$ is a CRM with L\'evy intensity
\begin{equation}
\label{eq:Levy_normalized}
\Delta_{1} \lambda(\Delta_{1} s)\ind_{(0,1)} (s) \de s P(\de w).
\end{equation}
In order to prove this remind that  $(\Delta_{i})_{i \geq 2} | \Delta_1$ are the points of a Poisson process with L\'evy intensity $\lambda (s) \ind_{(0, \Delta_1)} (s) \de s $, thanks to the representation by \cite{FergusonKlass_1972}. 
Therefore, the conditional distribution of $\mu_{\Delta_1}$, given $\Delta_1$, may  be found by a simple evaluation of the Laplace functional. To this end, consider a measurable function
$f: \mathds{W} \to \R_+$ and compute
\begin{align*}
&  \E [e^{-\int_\mathds{W} f (w)\mu_{\Delta_{1}} (\de w)} | \Delta_{1}]=
\E \left[ e^{  -\sum_{i \geq 1}   f(W_{i+1})\Delta_{i+1}/\Delta_{1}} |  \Delta_{1}  \right]\\
&\qquad\qquad = \exp \left\{ - \int_\mathds{W} \int_0^{+\infty} (1-e^{- f(w) s/\Delta_1})  \ind_{(0, \Delta_1)} (s) \lambda (s) \de s \; P (\de w) \right\}\\
&\qquad\qquad = \exp \left\{ - \int_\mathds{W} \int_0^{+\infty} (1-e^{- f(w) s})  \ind_{(0, 1)} (s) \lambda (s \Delta_1) \Delta_1 \de s \; P (\de w) \right\}
\end{align*}
which is exactly the Laplace functional of a CRM having L\'evy intensity \eqref{eq:Levy_normalized}.\\
\qed
\end{proof}
We now provide the reader with a sufficient condition to ensure that each $Z_n$ in \eqref{eq:SP_model} is almost surely finite, for any $n \geq 1$.
\begin{lemma} \label{lem:finite_features}
    Consider the model in \Cref{eq:SP_model}. If 
    \begin{equation} \label{eq:cond_finite_feature}
        \E\left[\int_0^1 \largestjump{1}{h} \lambda( s \largestjump{1}{h}) \de s\right] < \infty, 
    \end{equation}
 then each $Z_n$ displays almost surely finitely many features --- i.e.\ $\sum_{i \geq 1} A_{n,i} < \infty$, almost surely, for every $n \geq 1$.
\end{lemma}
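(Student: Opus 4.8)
The plan is to control the \emph{mean} number of features displayed by each $Z_n$ and then deduce almost-sure finiteness from finiteness of this mean. First I would condition on $\Delta_{1,h}$ and apply Lemma \ref{lemma:conditional_scaled_random_measure}, which tells us that $\mu_{\Delta_{1,h}}\mid\Delta_{1,h}$ is a CRM with L\'evy intensity $\Delta_{1,h}\lambda(\Delta_{1,h}s)\ind_{(0,1)}(s)\de s\,P(\de w)$. Writing $\mu_{\Delta_{1,h}}=\sum_{i\geq1}\rho_i\delta_{W_{i+1}}$ and recalling that under the Bernoulli process model the indicators $A_{n,i}\mid\mu_{\Delta_{1,h}}\sim\Bern(\rho_i)$ are conditionally independent across $i$, the number of features displayed by $Z_n$ is exactly $\sum_{i\geq1}A_{n,i}$.

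The core of the argument is a one-line first-moment computation, carried out by iterated conditioning. Conditioning on $\mu_{\Delta_{1,h}}$ gives $\E[\sum_{i\geq1}A_{n,i}\mid\mu_{\Delta_{1,h}}]=\sum_{i\geq1}\rho_i=\mu_{\Delta_{1,h}}(\mathds{W})$, and Campbell's formula (Appendix \ref{appendix_crm}) applied to the conditional intensity from Lemma \ref{lemma:conditional_scaled_random_measure} yields, using $P(\mathds{W})=1$,
\[
\E\bigl[\mu_{\Delta_{1,h}}(\mathds{W})\mid\Delta_{1,h}\bigr]=\int_0^1 s\,\Delta_{1,h}\lambda(\Delta_{1,h}s)\,\de s.
\]
Taking a further expectation over $\Delta_{1,h}$ and bounding $s\leq1$ on $(0,1)$ then gives
\[
\E\Bigl[\sum_{i\geq1}A_{n,i}\Bigr]=\E\Bigl[\int_0^1 s\,\Delta_{1,h}\lambda(\Delta_{1,h}s)\,\de s\Bigr]\leq\E\Bigl[\int_0^1\Delta_{1,h}\lambda(\Delta_{1,h}s)\,\de s\Bigr],
\]
and the right-hand side is finite precisely by hypothesis \eqref{eq:cond_finite_feature}.

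Since $\sum_{i\geq1}A_{n,i}$ is a nonnegative random variable with finite expectation, it is finite almost surely; as this holds for each fixed $n$ and a countable union of null events is null, every $Z_n$ displays finitely many features simultaneously, almost surely. Every exchange of expectation, sum and integral above is legitimate by Tonelli's theorem because all integrands are nonnegative, so no delicate convergence issue arises. The only step deserving care is the identification of the conditional mean total mass with $\int_0^1 s\,\Delta_{1,h}\lambda(\Delta_{1,h}s)\,\de s$, which is where the scaled-process structure genuinely enters through Lemma \ref{lemma:conditional_scaled_random_measure}; I do not anticipate any real obstacle beyond this, since \eqref{eq:cond_finite_feature} is deliberately tailored to dominate the relevant integral (indeed it is slightly stronger than needed, as the factor $s\leq1$ shows).
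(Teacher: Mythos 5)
Your proof is correct and follows essentially the same route as the paper's: condition on $\Delta_{1,h}$, invoke Lemma \ref{lemma:conditional_scaled_random_measure} to identify the conditional law as a CRM, compute $\E[\sum_{i\geq1}A_{n,i}]$ via iterated conditioning and Campbell's formula, and bound $s\leq1$ so that hypothesis \eqref{eq:cond_finite_feature} dominates the resulting integral. Your added remarks — the explicit Tonelli justification, the countable union over $n$, and the observation that the hypothesis is slightly stronger than needed — are minor refinements the paper leaves implicit.
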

\begin{proof}
For a fixed $n \geq 1$, it is sufficient to show that condition \eqref{eq:cond_finite_feature} entails 
    \[
        \E\left[ \sum_{i =1}^\infty A_{n,i} \right] <\infty.
    \]
The expected value in the previous formula may be computed as follows
    \begin{align*}
        \E\left[ \sum_{i =1}^\infty A_{n,i} \right]  &=\E \left[ \E\left[ \sum_{i =1}^\infty A_{n,i} \Big| \Delta_{1,h} \right]  \right]  
        = \E \left[ \E [\mu_{\Delta_{1,h}} (\mathds{W}) | \Delta_{1,h}]  \right]\\
        &=  \E \left[\int_\mathds{W} \int_0^1  s \Delta_{1,h} \lambda(\Delta_{1,h} s) \de s P(\de w) \right]
        =  \E \left[ \int_0^1  s \Delta_{1,h} \lambda(\Delta_{1,h} s) \de s \right] 
    \end{align*}
where we have applied the Campbell theorem \citep{kingman1992poisson} and  Lemma \ref{lemma:conditional_scaled_random_measure} to evaluate the total mass $\mu_{\Delta_{1,h}}(\mathds{W})$ of $\mu_{\Delta_{1,h}}$.
 As a consequence, condition \eqref{eq:cond_finite_feature} is sufficient for the finiteness of the Bernoulli process $Z_n$.\\ 
 \qed
\end{proof}

\subsection{Posterior analysis}  \label{sub:posterior_analysis}

We start with the marginal distribution of the observations $Z_{1:N}$ induced by the model. Our derivation closely follows the proof in \citet{james2017bayesian}. The marginal distribution is the counterpart of the ``exchangeable feature probability function'' (EFPF) for the Indian Buffet Process (IBP; see, e.g., \citet{broderick2013feature}).
%%%%%%%%%%%%%%%%%%%%%%%%%%%%%%%%%%%%%%%%
%%%%%%%%%%%%  MARGINAL %%%%%%%%%%%%%%%%%
%%%%%%%%%%%%%%%%%%%%%%%%%%%%%%%%%%%%%%%%
\begin{proposition}[Joint marginal distribution] \label{prop:general_marginal} 
For any $N \geq 1$, let $Z_{1:N}$ be a random sample modeled as the BNP-Bernoulli model \eqref{eq:SP_model}, where $\mu_{\Delta_{1,h}} \sim {\rm SP} (\nu , h) $.
The probability that the observations $Z_{1:N}$ display $K_N=k$ distinct features, labelled by $\{ W_1^*, \ldots , W_{K_N}^* \}$, with corresponding frequencies $(M_{N,1}, \ldots , M_{N,K_N})= (m_1, \ldots , m_k)$, equals
\begin{equation*}  
\begin{split}
    & p_k^{(N)} (m_1, \ldots, m_k) =  \int_0^{+\infty} e^{-\sum_{n=1}^N \phi_n(a)}
    \prod_{i=1}^{k}  \int_0^1 
    s^{m_i} (1-s)^{N-m_i} a\lambda (a s ) \de s  \; f_{\Delta_{1,h}} (a) \de a  ,
\end{split}
\end{equation*}
where  $\phi_n (a) = \int_0^{1} s (1- s)^{n-1}  a \lambda(a s)\de s$.
\end{proposition}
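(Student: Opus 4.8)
The plan is to condition on the tilted largest jump $\Delta_{1,h}$ and thereby reduce the computation to the already-understood case of a Bernoulli process driven by an ordinary CRM, for which the feature allocation probability follows the marked-Poisson-process calculus of \citet{james2017bayesian}. By Lemma \ref{lemma:conditional_scaled_random_measure}, conditionally on $\Delta_{1,h}=a$ the random measure $\mu_{\Delta_{1,h}}$ is a CRM on $\mathds{W}$ with L\'evy intensity $\nu_a(\de s,\de w)=a\lambda(as)\ind_{(0,1)}(s)\,\de s\,P(\de w)$. Hence I would first evaluate the conditional probability $p_k^{(N)}(m_1,\dots,m_k\mid a)$ of the observed allocation given $\Delta_{1,h}=a$, and then integrate against $f_{\Delta_{1,h}}(a)\,\de a$, the density of $\Delta_{1,h}$, so that $p_k^{(N)}(m_1,\dots,m_k)=\int_0^{+\infty}p_k^{(N)}(m_1,\dots,m_k\mid a)f_{\Delta_{1,h}}(a)\,\de a$.

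For the conditional step I would exploit the Poisson-process representation of $\mu_{\Delta_{1,h}}\mid a=\sum_j\rho_j\delta_{W_j}$, whose atoms $(\rho_j,W_j)$ form a Poisson point process on $(0,1)\times\mathds{W}$ with mean measure $\nu_a$. Writing $Z_n=\sum_j A_{n,j}\delta_{W_j}$ with $A_{n,j}\mid\rho_j\simind\Bern(\rho_j)$, the probability that a given atom of size $\rho$ is selected in a prescribed set of $m$ of the $N$ samples (and in none of the others) is $\rho^{m}(1-\rho)^{N-m}$, so that it is selected at least once with probability $1-(1-\rho)^N$. Thinning the Poisson process into ``observed'' and ``unobserved'' atoms, the observed atoms form a Poisson process with intensity $[1-(1-s)^N]\nu_a(\de s,\de w)$, and conditionally on being observed with count $m$ the within-atom pattern has weight proportional to $s^{m}(1-s)^{N-m}$. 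The Janossy (Poisson-likelihood) density of this observed process at the configuration of $k$ atoms with the prescribed counts then yields, after the factor $1-(1-s)^N$ cancels against the conditional pattern weight and the location integrals against the probability measure $P$ evaluate to unity,
\begin{equation*}
p_k^{(N)}(m_1,\dots,m_k\mid a)=\exp\!\left\{-\int_0^1\big[1-(1-s)^N\big]a\lambda(as)\,\de s\right\}\prod_{i=1}^k\int_0^1 s^{m_i}(1-s)^{N-m_i}a\lambda(as)\,\de s.
\end{equation*}

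It then remains to identify the exponent with $-\sum_{n=1}^N\phi_n(a)$. Using $\sum_{n=1}^N(1-s)^{n-1}=\frac{1-(1-s)^N}{s}$ gives $\sum_{n=1}^N\phi_n(a)=\int_0^1 s\,\frac{1-(1-s)^N}{s}\,a\lambda(as)\,\de s=\int_0^1[1-(1-s)^N]a\lambda(as)\,\de s$, which matches the exponential above, and integrating in $a$ produces the stated formula. I expect the main obstacle to be the rigorous justification of the conditional marginalization in the middle step: one must make the ``product over observed atoms times void probability of the unseen'' heuristic precise, which is cleanest via the multivariate Mecke/Palm formula for Poisson processes, or equivalently by discretizing $\nu_a$ over a partition of the jump space, computing the resulting finite product of Bernoulli marginals, and passing to the limit \citep{james2017bayesian,kingman1992poisson}. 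Care is also needed to check that the exponent is finite, which is guaranteed by the integrability condition \eqref{eq:cond_finite_feature} of Lemma \ref{lem:finite_features} (so that the expression is well defined); the final integration against $f_{\Delta_{1,h}}$ is then routine.
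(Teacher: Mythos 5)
Your proposal is correct and takes essentially the same approach as the paper's proof: condition on $\Delta_{1,h}=a$, use Lemma \ref{lemma:conditional_scaled_random_measure} to recognize the conditional measure as a CRM with intensity $a\lambda(as)\ind_{(0,1)}(s)\,\de s\,P(\de w)$, obtain the conditional feature-allocation probability, and integrate against the mixing density $f_{\Delta_{1,h}}$. The only difference is that where the paper simply cites \citet[Proposition 3.1]{james2017bayesian} for the conditional marginal, you re-derive it via Poisson thinning and the identity $\sum_{n=1}^N(1-s)^{n-1}=\bigl(1-(1-s)^N\bigr)/s$, which matches the paper's Equation \eqref{eq:lik_Z1N} exactly.
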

\begin{proof}
    From the result showed in Lemma \ref{lemma:conditional_scaled_random_measure}, we know that conditionally on a known value of $\Delta_{1,h}=a$, the random measure $\mu_{\Delta_{1,h}}$ is completely random. Therefore, we can exploit the result in \citet[Proposition 3.1]{james2017bayesian} to characterize the marginal distribution of the feature counts $m_{N,1},\ldots,m_{N,K_N}$. This is given by
    \begin{equation} \label{eq:lik_Z1N}
    \begin{split}
 &       p_k^{(N)}(m_1, \ldots , m_k \mid \Delta_{1,h} = a)\\
 & \qquad \qquad = \exp\left\{-\sum_{n=1}^N \phi_n (a)\right\}  \prod_{i=1}^{k}  \left\{\int_0^1 
    s^{m_i} (1-s)^{N-m_i} a\lambda (a s ) \de s \right\}, 
    \end{split}
    \end{equation}
    with $\phi_n (a) = \int_0^{1} s (1- s)^{n-1}  a \lambda(a s)\de s$. Integrating with respect to $f_{\Delta_{1,h}}$ --- the mixing distribution of $\Delta_{1,h}$ --- yields the desired result.\\
    \qed
\end{proof}
Next, we characterize  the posterior distribution of the random measure $\mu_{\Delta_{1,h}} \sim {\rm SP} (\nu , h)$. The posterior distribution of the law of $\Delta_{1,h}$ is an important ingredient in the study of the predictive properties of the model.  We mention that the posterior characterization of Proposition \ref{prop:general_posterior} is a consequence of \cite[Propositions 2.2]{james2015scaled} and the results developed by  \cite{james2017bayesian}.
%%%%%%%%%%%%%%%%%%%%%%%%%%%%%%%%%%%%%%%%
%%%%%%%%%%%%  POSTERIOR %%%%%%%%%%%%%%%%
%%%%%%%%%%%%%%%%%%%%%%%%%%%%%%%%%%%%%%%%
\begin{proposition}[Posterior distribution] \label{prop:general_posterior} For any $N \geq 1$, let $Z_{1:N}$ be a random sample modeled as the BNP-Bernoulli model \eqref{eq:SP_model}, where $\mu_{\Delta_{1,h}} \sim {\rm SP} (\nu , h) $.  Suppose that the observations $Z_{1:N}$ display $K_N=k$ distinct features, labelled by $ W_1^*, \ldots , W_{K_N}^* $, with corresponding frequencies $(M_{N,1}, \ldots , M_{N,K_N})= (m_1, \ldots , m_k)$, then the conditional distribution of $\Delta_{1,h}$, given $Z_{1:N}$,  has density function
\begin{equation}  \label{eq:general_posterior_largest_jump}
  g_{\Delta_{1,h}| Z_{1:N}} (a) \propto   \exp\left\{-\sum_{n=1}^N \phi_n (a)\right\} \prod_{i=1}^{k} \left\{ \int_0^1  s^{m_i} (1-s)^{N-m_i} a\lambda (a s ) \de s \right\} f_{\Delta_{1,h}} (a),
\end{equation}
   with $\phi_n (a) = \int_0^{1} s (1- s)^{n-1}  a \lambda(a s)\de s$.
Moreover, the posterior distribution of the random measure $\mu_{\Delta_{1,h}}$, conditionally given $Z_{1:N}$ and $\Delta_{1,h}$, equals
\begin{equation}
        \mu_{\Delta_{1,h}} \mid (\Delta_{1,h}, Z_{1:N})  \overset{d}{=} \mu'_{\Delta_{1,h}} + \sum_{i=1}^{K_N} J_{i} \delta_{W_i^*}, \label{eq:general_posterior_distribution}
\end{equation}
where
\begin{itemize}
    \item[i.] $\mu_{\Delta_{1,h}}'| \Delta_{1,h}  \sim \CRM (\nu_{\Delta_{1,h}}')$ with 
    \begin{align}
       \nu_{\Delta_{1,h}}' (\de s, \de w) =  (1-s)^N \Delta_{1,h} \lambda(s \Delta_{1,h}) \ind_{(0,1)} (s)   \de s \: P(\de w ); \label{eq:general_posterior_levy}
    \end{align}
    \item[ii.]  $J_{ 1:K_N}$ are $K_N$ independent  random jumps and independent of $\mu_{\Delta_{1,h}}'$, with density on $[0,1]$ proportional to
\begin{align}
    f_{J_{i}| \Delta_{1, h}} (s) \propto  (1-s)^{N-m_i} s^{m_i} \Delta_{1,h}\lambda ( \Delta_{1,h} s) . \label{eq:general_posterior_jump}
\end{align}
\end{itemize}
\end{proposition}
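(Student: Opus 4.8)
The plan is to reduce the entire analysis to the completely random measure (CRM) setting by conditioning on the value of the largest jump $\Delta_{1,h}$. By Lemma~\ref{lemma:conditional_scaled_random_measure}, conditionally on $\Delta_{1,h}=a$ the random measure $\mu_{\Delta_{1,h}}$ is a CRM on $(\mathds{W},\mathcal{W})$ with L\'evy intensity $a\lambda(as)\ind_{(0,1)}(s)\de s\,P(\de w)$. Hence, once we fix $\Delta_{1,h}=a$, the model \eqref{eq:SP_model} becomes an ordinary Bernoulli product model driven by a CRM, and the posterior characterization for CRM priors of \citet{james2017bayesian} applies verbatim with $\lambda$ replaced by the conditional intensity $s\mapsto a\lambda(as)\ind_{(0,1)}(s)$. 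The two displayed claims then follow by (i) updating the law of $\Delta_{1,h}$ via Bayes' theorem and (ii) updating this conditional CRM.

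For the density \eqref{eq:general_posterior_largest_jump}, I would invoke the conditional marginal likelihood \eqref{eq:lik_Z1N} already established in the proof of Proposition~\ref{prop:general_marginal}, namely
\begin{equation*}
p_k^{(N)}(m_1,\ldots,m_k\mid\Delta_{1,h}=a)=\exp\Big\{-\sum_{n=1}^N\phi_n(a)\Big\}\prod_{i=1}^{k}\int_0^1 s^{m_i}(1-s)^{N-m_i}a\lambda(as)\de s,
\end{equation*}
with $\phi_n(a)=\int_0^1 s(1-s)^{n-1}a\lambda(as)\de s$. Bayes' theorem then gives that the posterior density of $\Delta_{1,h}$ is proportional to this conditional likelihood times the prior density $f_{\Delta_{1,h}}(a)$, which is exactly \eqref{eq:general_posterior_largest_jump}; the normalizing constant is the marginal probability $p_k^{(N)}(m_1,\ldots,m_k)$ of Proposition~\ref{prop:general_marginal}.

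For the posterior decomposition \eqref{eq:general_posterior_distribution}, I would apply the CRM posterior result to the conditional CRM of Lemma~\ref{lemma:conditional_scaled_random_measure}. The standard characterization splits the posterior into an ``ordinary'' part carrying no fixed atoms and a finite collection of fixed jumps at the observed locations $W_1^*,\ldots,W_{K_N}^*$. The updated L\'evy intensity of the ordinary part is obtained by multiplying the conditional intensity by the probability $(1-s)^N$ that a feature of proportion $s$ is missed in all $N$ draws, giving $(1-s)^N a\lambda(as)\ind_{(0,1)}(s)\de s\,P(\de w)$, i.e.\ \eqref{eq:general_posterior_levy} with $a=\Delta_{1,h}$. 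The jump $J_i$ attached to the $i$-th observed feature has density proportional to the Bernoulli likelihood $s^{m_i}(1-s)^{N-m_i}$ times the conditional intensity $a\lambda(as)$, yielding \eqref{eq:general_posterior_jump}; mutual independence of the $J_i$'s and of $\mu'_{\Delta_{1,h}}$ is inherited from the underlying Poisson structure of the conditional CRM.

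The main point requiring care is the legitimacy of the conditioning step: one must verify that $s\mapsto a\lambda(as)\ind_{(0,1)}(s)$ is a bona fide L\'evy intensity for (almost) every value of $\Delta_{1,h}$, i.e.\ that $\int_0^1 \min\{s,1\}\,a\lambda(as)\de s<\infty$, which after the change of variables $u=as$ follows from the integrability assumption on $\lambda$ together with Lemma~\ref{lem:finite_features}. Beyond this, the computation is essentially a change of variables $s\mapsto as$ inside the known CRM formulae, so I do not anticipate further obstacles; the only bookkeeping is to note that the change of measure defining the SP acts solely on the law of $\Delta_{1,h}$ and hence leaves the conditional CRM posterior untouched.
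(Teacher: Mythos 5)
Your proposal is correct and takes essentially the same route as the paper's own proof: condition on $\Delta_{1,h}$ via Lemma~\ref{lemma:conditional_scaled_random_measure} so that $\mu_{\Delta_{1,h}}$ becomes a CRM with intensity $a\lambda(as)\ind_{(0,1)}(s)\de s\,P(\de w)$, apply the posterior characterization of \citet[Theorem 3.1]{james2017bayesian} to this conditional CRM to obtain \eqref{eq:general_posterior_distribution}--\eqref{eq:general_posterior_jump}, and derive \eqref{eq:general_posterior_largest_jump} by Bayes' theorem using the conditional likelihood \eqref{eq:lik_Z1N}. Your added verification that $s\mapsto a\lambda(as)\ind_{(0,1)}(s)$ is a legitimate L\'evy intensity is a detail the paper leaves implicit, but it does not change the argument.
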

\begin{proof}
    Again, leveraging the result showed in Lemma \ref{lemma:conditional_scaled_random_measure}, we know that conditionally on a known value of $\Delta_{1,h}$, the measure $\mu_{\Delta_{1,h}}$ is  completely random. Therefore, we can simply  apply  \citet[Theorem 3.1]{james2017bayesian} to obtain the posterior distribution of $\mu_{\Delta_{1,h}} | (\Delta_{1,h}, Z_{1:N})$ as described in Equation \eqref{eq:general_posterior_distribution}. 
Finally, the posterior distribution of the largest jump $\Delta_{1,h}$ conditionally on the observations $Z_{1:N}$ derived in \Cref{eq:general_posterior_largest_jump} follows by direct application of Bayes' theorem, recognizing  that $f_{\Delta_{1,h}}$ is the prior distribution for $\Delta_{1,h}$, and the distribution in \eqref{eq:lik_Z1N} as the likelihood of the observations $Z_{1:N}| \Delta_{1,h}$.\\
\qed
\end{proof}
Last, we prove the predictive characterization provided in Proposition \ref{prop:general_pred}, which has a pivotal role in our analysis, as it is the conceptual starting point in order to study the predictive behavior of the model, and it again follows form \citep{james2017bayesian}.
%%%%%%%%%%%%%%%%%%%%%%%%%%%%%%%%%%%%%%%%
%%%%%%%%%%  PREDICTIVE %%%%%%%%%%%%%%%%%
%%%%%%%%%%%%%%%%%%%%%%%%%%%%%%%%%%%%%%%%
%\begin{proposition}[Predictive distribution] \label{prop:general_predictive}
%Let $Z_{1:N}$ be a sample from the model in \Cref{eq:model_general}. The distribution of $Z_{N+1}$, conditionally given $Z_{1:N}$, and the value of the largest jump $\Delta_{1,G}$, equals
%\begin{equation}
%      Z_{N+1} \mid \Delta_{1,G}, Z_{1:N}  \overset{d}{=} Z_{N+1}' + \sum_{k=1}^{K_N} A_{N+1, k} \delta_{\omega_k^*}, \label{eq:general_predictive}
%\end{equation}
%where:
%\begin{itemize}
%    \item[i.] $Z_{N+1}' \mid \mu'_{\Delta_{1,G}} = \sum_{k \geq 1} A_{N+1, k}' \delta_{\omega_k'}$, where $ A_{N+1, k}' $ are independent Bernoulli random variables with parameters $\tau_k'$. Here, we denote with $\tau_k'$ the jump at location $\omega_k'$ of the completely random measure $\mu_{\Delta_{1,G}}'$ whose L\'evy intensity is given in \Cref{eq:general_posterior_levy}.
%    \item[ii.] $A_{N+1 , 1:K_N}$ is a collection of $K_N$ independent Bernoulli random variables with parameters $J_{1:K_N}$ having density  given in \Cref{eq:general_posterior_jump}.
%\end{itemize}
%\end{proposition}
\begin{proof}[Proof of Proposition \ref{prop:general_pred}]
We consider $\zeta \stackrel{d}{=} \mu_{\Delta_{1,h}}$, thus we are dealing with the model \eqref{eq:SP_model}.
The posterior  distribution of $\Delta_{1,h}$ in \eqref{eq_mixing} follows from \eqref{eq:general_posterior_largest_jump}, by the argument used in Proposition \ref{prop:general_posterior}.
In order to prove the characterization in Equation \eqref{eq:general_predictive}, we use
once again the fact that conditionally on a known value of $\Delta_{1,h}$, $\mu_{\Delta_{1,h}}$ is a completely random measure (see Lemma \ref{lemma:conditional_scaled_random_measure}). Thus, we can exploit the results in \citep{james2017bayesian} to characterize the predictive distribution of $Z_{N+1}$ given the sample $Z_{1:N}$ and the jump $\Delta_{1,h}$. More specifically the form of the predictive distribution in \eqref{eq:general_predictive} follows by a plain application of \citet[Proposition 3.2]{james2017bayesian}. \\
\qed
\end{proof}

\section{Posterior analysis for SB-SP priors: proofs and details} \label{appendix_scaledbetapriors} 

Here we provide details and proofs of the results in Section \ref{sec:stable}, i.e. a full Bayesian  analysis for the SB-SP prior. More specifically we prove
Theorem \ref{thm:characterization}, then we move to characterize the posterior distribution of $\Delta_{1, h_{c, \beta}}$, marginal, predictive and posterior distributions of the SB-SP model.

\subsection{Proof of \Cref{thm:characterization}}

The posterior density of $\largestjump{1}{h}$, given $Z_{1:N}$, has density proportional to
\begin{equation*}
\prod_{n=1}^N  e^{-\phi_n (a)}   \prod_{i=1}^{K_N} \int_0^1    s^{m_{N,i}} (1-s)^{N-m_{N,i}} a \lambda (a s) \de s \: 
f_{\Delta_{1,h}} (a),
\end{equation*}
where we used the notation $\phi_n (a) = \int_0^{1} s (1- s)^{n-1}  a \lambda(a s)\de s$.
Hence, there exists a normalizing factor $c (m_{N,1}, \ldots , m_{N,k}, N,K_N)$, depending on the sample size $N$, the distinct number of features $K_N$ and the frequency counts, such that
\begin{equation*}
g_{\Delta_{1,h} | Z_{1:N}} (a) = \frac{\prod_{n=1}^N  e^{-\phi_n (a)}  \prod_{i=1}^{K_N} \int_0^1    s^{m_{N,i}} (1-s)^{N-m_{N,i}} a \lambda (a s) \de s \: 
f_{\Delta_{1,h}}(a)}{c (m_{N,1}, \ldots , m_{N,K_N}, N,K_N)},
\end{equation*}
or equivalently we can write
\begin{equation} \label{eq:thm_char_1}
\begin{split}
&g_{\Delta_{1,h} | Z_{1:N}}^{-1} (a) \prod_{n=1}^N  e^{-\phi_n (a)}  \prod_{i=1}^{K_N} \int_0^1    s^{m_{N,i}} (1-s)^{N-m_{N,i}} a \lambda (a s) \de s\: f_{\Delta_{1,h}} (a)\\
& \qquad\qquad\qquad\qquad\qquad\qquad\qquad\qquad\qquad
= c (m_{N,1}, \ldots , m_{N,K_N}, N,K_N).
\end{split}
\end{equation}
If the posterior density $g_{\Delta_{1,h} | Z_{1:N}} (a)$ does not depend on $m_{N,1}, \ldots, m_{N,K_N}$, then the function
\[
    g_{\Delta_{1,h} | Z_{1:N}}^{-1} (a) \prod_{n=1}^N  e^{-\phi_n (a)} \: g(a) = f_1(a,K_N,N)
\]
depends only on $K_N,N$ and $a$, but not  on the frequency counts.
Therefore, \eqref{eq:thm_char_1} boils down to
\begin{equation} \label{eq:thm_char_2}
f_1 (a,K_N,N) \cdot  \prod_{i=1}^{K_N} \int_0^1  s^{m_{N,i}} (1-s)^{N-m_{N,i}} a \lambda (a s) \de s = c (m_{N,1}, \ldots , m_{N,K_N}, N,K_N).
\end{equation}
As a consequence, the  function on the right hand side of \eqref{eq:thm_char_2} is independent of $a$, for any choice of the vector $(m_{N,1}, \ldots , m_{N,K_N}, N,K_N)$.
Now we consider $m_{N,1} = \cdots = m_{N,K_N} =m >0$, and we can say that the function
\begin{equation} \label{eq:thm:eq_w}
\left[ w(a,K_N,N) \int_0^1    s^{m} (1-s)^{N-m} a \lambda (a s) \de s \right]^{K_N}
\end{equation}
does not depend on $a\in \R_+$, where $w(a,K_N,N) = \sqrt[K_N]{f_1(a,K_N,N)}$.  We now select $m= N$, thus the function
\begin{equation} \label{eq:thm_char_3}
w(a,K_N,N) \int_0^1 s^N a \lambda (as) \de s 
\end{equation}
does not depend on $a\in \R_+$.  Note that, since $f_{\Delta_{1,h}}$ and $\lambda$ are functions of class $C^1 (\R_+)$, i.e., derivable with continuous derivative, 
also $w $ is in class $C^1 (\R_+)$ with respect to the variable $a$. Thus, we can take the derivative of \eqref{eq:thm_char_3}, and this is equal to $0$:
\[
\frac{\de }{\de a} w(a,K_N,N) \int_0^a s^N \lambda (s) \de s a^{-N}-N a^{-N-1} w (a,K_N,N)  \int_0^a s^N \lambda (s) \de s + w (a,K_N,N) \lambda (a) =0
\]
which is an ordinary differential equation in $w$, and it can be easily solved by separation of variables, thus obtaining
\[
    w(a,K_N,N) =a^N  \cdot \frac{R}{\int_0^a s^N \lambda (s) \de s}
\]
where $R >0$ is a suitable constant independent of $a$. As a consequence, the function in \eqref{eq:thm:eq_w} equals
\begin{equation*} 
    \left[ \frac{R}{\int_0^1 s^N \lambda (a s) \de s} \cdot \int_0^1    s^{m} (1-s)^{N-m}  \lambda (a s) \de s  \right]^{K_N}
\end{equation*}
and this is independent of $a \in \R_+$. It is possible to choose $m=N-1$ in the previous function, and we can state that
\[
\int_0^1 s^{N-1} \lambda (as) \de s   -\int_0^1 s^N \lambda (as) \de s  = C  \int_0^1  s^N  \lambda (as  )\de s
\]
where $C$ is constant with respect to $a$. If one takes the derivative of the previous equation two times with respect to $a$, then she obtains
\[
\lambda (a) (1-NC)= a \lambda' (a) C ,
\]
which is an ordinary differential equation in $\lambda$ that can be solved by separation of variables. In particular we get the following result
\begin{equation}
    \lambda (a) = \alpha  a^{(1-NC )/C}, \quad\text{for } \alpha >0. \label{eq:lambda_general}
\end{equation}
The exponent of $a$  in \eqref{eq:lambda_general} should satisfy 
\[
    \int_0^{+\infty} \min \{1,a  \} \lambda (a) \de a < +\infty,
\]
from which it is easy to realize that $-2< (1-NC )/C<-1$, hence
\[
    \lambda (a) = \alpha \frac{1}{a^{1+\sigma}}
\]
where $\alpha >0$ and $\sigma \in (0,1)$. The reverse implication of the theorem is trivially true, hence the proof is completed.\\
\qed

\subsection{Detailed derivation of the distribution of $\Delta_{1, h_{c, \beta}}$} \label{app:stable}

We first derive explicitly the distribution of the largest jump given in \Cref{eq:stable_largest_jump}. This follows from direct application of the law of the largest jump,
\begin{align*}
    F_{\Delta_1}(\de a) = \exp\left\{ - \int_a^{\infty} \lambda_\sigma(s)\de s \right\} \lambda_\sigma(a)\de a 
\end{align*} 
when the L\'{e}vy measure is 
\[
    \lambda_\sigma (s)\de s = \sigma s^{-\sigma-1}\ind_{\R_+} (s)\de s.
\]
Having denoted by $f_{\Delta_1}$ the density function of
$F_{\Delta_1}$, we get
\begin{align}
    f_{\Delta_1} (a) & = \lambda_\sigma (a)  e^{-\Lambda (a)}  \ind_{\R_+} (a) =  \sigma  a^{-\sigma-1} \exp \left\{ -\int_a^\infty  \sigma u^{-1-\sigma} \de u \right\} \ind_{\R_+} (a) \nonumber \\
                    & = \sigma  a^{-\sigma-1} e^{-a^{-\sigma}} \ind_{\R_+} (a)\nonumber. 
\end{align}
From direct inspection, we recognize that this is the density function of $\Delta_1 = T^{-1/\sigma}$, where $T$ is a Gamma with parameters $(1,1)$. 
%Indeed, by the usual change of variable formula, letting $f_T$ denote the density of the random variable $T \sim \Gammad(1,1)$ and considering the monotone decreasing transformation $u(t) = t^{-1/\sigma}$ over $\R_+$ with inverse function $u^{-1}(a) = a^{-\sigma}$,
%\[
%    f_{T^{-1/\sigma}}(a) = - f_T\left(u^{-1}(a)\right)\frac{\de}{\de a} u^{-1}(a) \ind_{\R_+} (a) = - \exp\left\{-a^{-\sigma}\right\} \left( - \sigma a^{-1-\sigma} \right) \ind_{\R_+} (a),
%\]
%which is the distribution given in \Cref{eq:stable_largest_jump}. 
The mixing measure is then obtained by tilting the density $f_{\Delta_1}$ as follows: 
\[
    f_{\Delta_{1, h_{c, \beta}}}(a) \propto f_{\Delta_1}(a)h_{c, \beta}(a) = \sigma a^{-\sigma(c+1)-1}\exp\left\{-\beta a^{-\sigma}\right\}\ind_{\R_+} (a),
\]
i.e.\ letting 
\[
    h_{c, \beta}(a) \propto a^{-\sigma c} \exp\left\{ - (\beta-1) a^{-\sigma} \right\}.
\]
By integration, we get the normalizing constant:
\[
  \int_0^\infty a^{-\sigma(c+1)-1}\exp\left\{-\beta a^{-\sigma}\right\} \de a = \frac{\Gamma(c+1)}{\sigma \beta^{c+1}}.
\]
from which
\begin{align}
    f_{\Delta_{1, h_{c, \beta}}}(a) = \frac{\sigma \beta^{c+1}}{\Gamma(c+1)} a^{-\sigma(c+1)-1}\exp\left\{-\beta a^{-\sigma}\right\}\ind_{\R_+} (a). \label{eq:stable_mixing}
\end{align}

\subsection{Posterior distribution of SB-SP priors}

Here we characterize the posterior distribution of SB-SP priors: the result is not included in the paper, but we think it is useful to have a full picture on SB-SP priors from a Bayesian viewpoint.
\begin{proposition}\label{prop:stable_posterior} 
For $N\geq1$ let $Z_{1:N}$ be a random sample modeled as the BNP-Bernoulli model \eqref{exch_mod}, with $\zeta\sim\text{\rm SB-SP}(\sigma,c,\beta)$. If $Z_{1:N}$ displays $K_{N}=k$ distinct features $\{W_{1}^{\ast},\ldots,W^{\ast}_{K_{N}}\}$, each feature $W_i^*$ appearing exactly $M_{N,i}=m_i$ times in the samples, then the conditional distribution of $\Delta_{1,h_{c,\beta}}$, given $Z_{1:N}$, has a density function of the form
\begin{equation}\label{eq:stable_largest_posteriormain}
g_{\Delta_{1,h_{c,\beta}}\,|\,Z_{1:N}}(a)=\sigma\frac{(\beta+\gamma_{0}^{(N)})^{k+c+1}}{\Gamma(k+c+1)}a^{-k\sigma-(c+1)\sigma-1}\exp\{-a^{-\sigma}(\beta+\gamma_{0}^{(N)})\},
\end{equation}
where $\gamma_{0}^{(n)}=\sigma\sum_{1\leq i\leq n}B(1-\sigma,i)$, with $B(\cdot,\cdot)$ denoting the (standard) Beta function. Moreover, the conditional distribution of $\zeta$, given $(\Delta_{1,h_{c,\beta}},\,Z_{1:N})$, coincides with the distribution of
\begin{equation}\label{eq:stable_posterior_distribution}
        \zeta\,|\, (\Delta_{1,h_{c,\beta}}, Z_{1:N})  \overset{d}{=} \mu^{\prime}_{\Delta_{1,h_{c,\beta}}} + \sum_{i=1}^{K_N} J_{i} \delta_{W_i^{\ast}}, 
\end{equation}
where: 
\begin{itemize}
\item[i)] $\mu^{\prime}_{\Delta_{1,h_{c,\beta}}}$  is a discrete random measure such that $\mu^{\prime}_{\Delta_{1,h_{c,\beta}}}\,|\,\Delta_{1,h_{c,\beta}}\sim\CRM(\nu^{\prime}_{\Delta_{1,h_{c,\beta}}})$, with $ \nu^{\prime}_{\Delta_{1,h_{c,\beta}}}$ being
\begin{equation}\label{eq:levy_stb}
 \nu^{\prime}_{\Delta_{1,h_{c,\beta}}}(\de s,\,\de w)=      \Delta_{1,\Delta_{1,h_{c,\beta}}}^{-\sigma} (1-s)^N \sigma s^{-1-\sigma}  \ind_{(0,1)} (s) \de s P(\de w);
\end{equation}
\item[ii)]
\begin{equation}\label{post_jumps}
J_{i}| \Delta_{1,h_{c,\beta}}\sim\Beta(m_{i}-\sigma,N-m_{i}+1),
\end{equation}
where $\Beta$ denotes the beta distribution.
\end{itemize}
\end{proposition}
\begin{proof}
We apply Proposition \ref{prop:general_posterior}, which describes the general posterior distribution of a SP process.
We first compute the posterior distribution \eqref{eq_mixing} of the largest jump conditionally on observations $Z_{1:N}$. To do so we specify \eqref{eq:general_posterior_largest_jump} in our case, and we first 
 compute the exponent $\phi_n(a)$. In our case the L\'{e}vy density  equals $\lambda_\sigma(s) = \sigma s^{-\sigma-1}$ and the mixing density of $\Delta_{1,h_{c, \beta}}$ is provided in Equation \eqref{eq:stable_mixing}, thus the exponent $\phi_n$ takes the form
\begin{equation} \label{eq:phi_stable}
    \phi_n(a) = \sigma \int_0^1 s(1-s)^{n-1} a^{-\sigma} s^{-\sigma-1}\de s = \sigma a^{-\sigma} B(1-\sigma, n) .
\end{equation}
Recalling the shorthand notation $\gamma_0^{(N)}= \sigma \sum_{1 \leq n \leq N} B(1-\sigma, n)$, the posterior distribution of $\Delta_{1, h_{c, \beta}}$ is then proportional to
\begin{align*}
    a^{k}\exp \left\{ -a^{-\sigma} \stablenews{0}{N} \right\} \prod_{i=1}^{k} \int_0^1  t^{m_{i}} (1-t)^{N-m_{i}} 
    \lambda_\sigma (a t ) \de t \: f_{\Delta_{1, h_{c, \beta}}} (a) \\
        \propto    a^{-\sigma (k+c+1) -1 }\exp \left\{ -  a^{-\sigma} \left[\beta + \stablenews{0}{N} \right]\right\},
\end{align*}
where $f_{\Delta_{1, h_{c, \beta}}}$ has been specified in \eqref{eq:stable_mixing}.
As a consequence we get
    \[
        \largestjump{1}{h_{c, \beta}}^{-\sigma} \mid Z_{1:N} \sim \Gammad\left(k+c+1, \beta+\stablenews{0}{N}\right),
    \]
which corresponds to the posterior density in \eqref{eq:stable_largest_posteriormain}. The characterization of the posterior distribution in \eqref{eq:stable_posterior_distribution} is an easy consequence of Proposition \ref{prop:general_posterior}, by a specialization of this result with the choice $\lambda(s) = \lambda_\sigma (s) = \sigma s^{-\sigma-1}$ for the underlying L\'evy intensity.
\qed
\end{proof}

\subsection{Proof of Proposition \ref{prop:stable_predictive}}

The predictive characterization is a simple consequence of the general characterization in Proposition \ref{prop:general_pred} with the SB-SP specifications $\lambda(s) = \lambda_\sigma (s) = \sigma s^{-\sigma-1}$.
\qed

\subsection{Proof of Proposition \ref{prop:stable_marginal}}
We apply Proposition \ref{prop:general_marginal} to obtain the marginal distribution for the SB-SP prior.
    Conditionally on $\largestjump{1}{h_{c, \beta}}=a$, using the form $\phi_n(a)$ derived in \eqref{eq:phi_stable}, the marginal distribution is given by
    \begin{align*}
         p_k^{(N)}(m_{1},\ldots, m_{k} \mid \largestjump{1}{h_{c, \beta}} = a) &= (\sigma a^{-\sigma})^{k}\exp\left\{- a^{-\sigma} \stablenews{0}{N} \right\}  \prod_{i=1}^{k} \int_0^1 
        s^{m_{i}-\sigma -1 } (1-s)^{N-m_{i}} \de s ,
    \end{align*}
that may be written in terms of the Beta function as follows
    \begin{align*}
         p_k^{(N)}(m_{1},\ldots, m_{k} \mid \largestjump{1}{h_{c, \beta}} = a) &=  (\sigma a^{-\sigma})^{k}\exp\left\{- a^{-\sigma} \stablenews{0}{N}\right\}  \prod_{i=1}^{k} B(m_{i}-\sigma, N-m_{i}+1).
    \end{align*}
    Last, we obtain the marginal distribution in \Cref{eq:stable_marginal} by randomizing with respect to the mixing distribution of the largest jump given in \Cref{eq:stable_mixing}. We need to compute
    \begin{align*}
          p_k^{(N)}(m_{1},\ldots, m_{k}) &= \int_0^\infty  p_k^{(N)}(m_{1},\ldots, m_{k} \mid \largestjump{1}{h_{c, \beta}} = a) f_{\Delta_{1, h_{c, \beta}}}(a) \de a \\
                    &= \frac{\sigma^{k+1} \beta^{c+1}}{\Gamma(c+1)}   \prod_{i=1}^{k} B(m_{i}-\sigma, N-m_{i}+1) \\
                    & \times \int_0^\infty a^{-\sigma(k+c+1)-1} \exp\left\{- a^{-\sigma} \left[\beta+ \stablenews{0}{N}\right] \right\} \de a \\
                    &=  \frac{ \sigma^{k} \beta^{c+1}}{(\beta+\stablenews{0}{N})^{k+c+1}} \frac{\Gamma(k+c+1)}{\Gamma(c+1)}   \prod_{i=1}^{k} B(m_{i}-\sigma, N-m_{i}+1),
    \end{align*}
and the thesis now follows.
\qed

\section{Estimation of the unseen features via SB-SP priors: proofs}  \label{appendix_extrapolation}

Here we detail the proofs of Section \ref{sec:extrapolation}, which is devoted to the unseen-features problem under the SB-SP prior.

%%%%%%%%%%%%%%%%%%%%%%%%%%%%%%%%%%%%%%%%%%%%%
%%%%%%%%%%%%%% PREDICTION %%%%%%%%%%%%%%%%%%%
%%%%%%%%%%%%%%%%%%%%%%%%%%%%%%%%%%%%%%%%%%%%%

%%%%%%%%%%%%%%%%%%%%%%%%%%%%%%%%%%%%%%%%%%%%%
%%%%%%%%%%%%%% NEWS %%%%%%%%%%%%%%%%%%%%%%%%%
\subsection{Proof of Theorem \ref{thm:stable_news}}
%%%%%%%%%%%%%%%%%%%%%%%%%%%%%%%%%%%%%%%%%%%%%
We first focus on the proof of \eqref{eq:stable_news}, i.e. the posterior distribution of $U_N^{(M)}$. In order to do this we exploit the predictive characterization provided in Proposition \ref{prop:stable_predictive} to evaluate the probability generating function (PGF) of the random variable $\unseen{N}{M}$ \textit{a posteriori}, conditionally on the sample $Z_{1:N}$. We denote the PGF as $\G_{\unseen{N}{M} } (\, \cdot \, )$. If $t$ belongs to a neighborhood of the origin, then  one has
\begin{equation} \label{eq:thm1.1}
\G_{\unseen{N}{M}}  (t)  =  \E \left[  t^{\unseen{N}{M}}  \mid Z_{1:N}  \right]
= \E \left[ \E \left[  t^{\unseen{N}{M}}  \mid Z_{1:N}  , \Delta_{1,h_{c,\beta}} \right] \mid Z_{1:N} \right] 
\end{equation}
where we have applied the tower property of the conditional expectation. We now  observe that, conditionally on
$Z_{1:N}$ and $\Delta_{1,h_{c,\beta}}$, the random variable $\unseen{N}{M}$ may be represented as 
\[
\unseen{N}{M} |( Z_{1:N}  , \Delta_{1,h_{c,\beta}}  )  \stackrel{{\rm d}}{=}
\sum_{i \geq 1} \ind\left( \sum_{m=1}^M A_{N+m,i}' >0 \right),
\]
where we used the representation given in Proposition \ref{prop:stable_predictive}. Here, independently across $i$, $A_{N+m,i}'$ is a Bernoulli random variable with parameter $\rho_i'$, conditionally  on the random measure
$\mu_{\Delta_{1,h_{c,\beta}}}'= \sum_{i \geq 1} \rho_i' \delta_{W_i'}$ with L\'evy intensity
$\sigma \Delta_{1,h_{c,\beta}}^{-\sigma} (1-s)^N s^{-1-\sigma} \ind_{(0,1)} (s) \de s P(\de w)$.
We now focus on the evaluation of the expected value in \Cref{eq:thm1.1}:
\begin{align*}
    &\E \left[  t^{\unseen{N}{M}}  \mid Z_{1} ,\ldots , Z_N  , \Delta_{1,h_{c,\beta}} \right] \\ 
      & \qquad= \E \left[  \E \left[ \prod_{i \geq 1} \left( (t-1)\ind\left\{ \sum_{m=1}^M A_{N+m,i}' >0\right\} + 1\right)  \Big| \mu_{\Delta_{1, h_{c, \beta}}}' \right]\right] \\
    & \qquad = \E \left[ \prod_{i \geq 1}  \left[ (t-1) \P \left(\sum_{m=1}^M A_{N+m,i}' >0 \mid \mu_{\Delta_{1, h_{c, \beta}}}'\right)+1   \right]  \right] \\
    & \qquad =  \E \left[   \prod_{i \geq 1}  \left[ (t-1) \left\{1- \prod_{m=1}^M \P ( A_{N+m,i}' =0  \mid \mu_{\Delta_{1, h_{c, \beta}}}' ) \right\}+1   \right] \right] ,
\end{align*}
where we applied the independence of the Bernoulli random variables $A_{N+m,i}'$s, conditionally on 
$\mu_{\Delta_{1, h_{c, \beta}}}'$.
We now recall that  $\mu_{\Delta_{1, h_{c, \beta}}}'$ is a CRM with a known L\'evy measure and  that the
$A_{N+m,i}'$s are Bernoulli with parameter $\rho_i'$ to obtain
\begin{align*}
    & \E \left[  t^{\unseen{N}{M}}  \mid Z_{1} ,\ldots , Z_N  , \Delta_{1,h_{c,\beta}}\right] \\
     & \qquad =
    \E \left[ \prod_{i \geq 1}   ( (t-1) (1-(1- \rho_i')^M) +1   )   \right] \\
    & \qquad =  \E \left[ \exp  \left\{  \sum_{i \geq 1}  \log \left[ (t-1) (1-(1- \rho_i')^M) +1   \right] \right\}  \right] \\
    & \qquad = \exp \left\{ -  (1-t) \int_0^1 (1-(1-s)^M)  (1-s)^N  \Delta_{1, h_{c, \beta}}^{-\sigma}  \sigma
    s^{-1-\sigma} \de s   \right\}.\\
    &\qquad =  \exp \left\{ - (1-t)  \Delta_{1, h_{c , \beta}}^{-\sigma} \stablenews{N}{M} \right\},
\end{align*}
where we used the identity
\[
    \int_0^1 \left[1-(1-s)^M \right]  (1-s)^N   s^{-1-\sigma} \de s = \sum_{m=1}^M B(1-\sigma, N+m).
\]
We replace this expression in \Cref{eq:thm1.1} to obtain
\begin{align}
    \label{eq:thm1.2}
    \G_{\unseen{N}{M}}  (t) = \E [  \exp \{ - (1-t)  \Delta_{1,h_{c,\beta}}^{-\sigma} \stablenews{N}{M} \} \mid Z_{1:N}].
\end{align}
The results now follows by integrating with respect to the posterior distribution of $\Delta_{1,h_{c,\beta}}^{-\sigma}$, given in  Equation \eqref{eq:stable_largest_posteriormain}:
\begin{align*}
    \G_{\unseen{N}{M}}  (t) 
    &= \frac{(\beta+\stablenews{0}{N})^{K_N+c+1}}{\Gamma (K_N+c+1)}  \int_0^\infty \exp\left\{-(1-t) \stablenews{N}{M} x\right \} 
    x^{K_N+c} e^{-(\beta+\stablenews{0}{N} )x} \de x\\
    & = \frac{(\beta+\stablenews{0}{N})^{K_N+c+1}}{\Gamma (K_N+c+1)} \frac{\Gamma (K_N+c+1)}{(\beta+\stablenews{0}{N} +(1-t) \stablenews{N}{M})^{K_N+c+1}} \\
    & = \left(\frac{\beta+\stablenews{0}{N}}{\beta+\stablenews{0}{N+M} -t \stablenews{N}{M}} \right)^{K_N+c+1}
     =  \left( \frac{1-p_N^{(M)}}{1 -tp_N^{(M)}} \right)^{K_N+c+1},
\end{align*}
 for any $|t| < 1/ p_N^{(M)}$, where $p_N^{(M)} := \stablenews{N}{M}/(\beta+\stablenews{0}{N+M})\leq 1$. This is the probability generating function of a negative binomial distribution where 
 $K_N+c+1$ is the number of failures, and $p_N^{(M)}$ is the success probability
 in each experiment.\\

%%%%%%%%%%%%%%%%%%%%%%%%%%%%%%%%%%%%%%%%%%%%%
%%%%%%%%%%%%%% NEWS/FREQ %%%%%%%%%%%%%%%%%%%%
%%%%%%%%%%%%%%%%%%%%%%%%%%%%%%%%%%%%%%%%%%%%%
We now apply similar arguments to derive the posterior distribution of $U_N^{(M,r)}$, provided in \eqref{eq:stable_news_freq}. Again, we calculate the probability generating function of $\unseenfreq{N}{M}{r}$ \textit{a posteriori}, denoted here as $\G_{\unseenfreq{N}{M}{r}} (\, \cdot \, )$. If $t$ belongs to a neighborhood of the origin, then  one has
\begin{equation} \label{eq:thm2.1}
\begin{split}
& \G_{\unseenfreq{N}{M}{r}}  (t) =  \E \left[  t^{\unseenfreq{N}{M}{r}}  \mid Z_{1:N} \right]
= \E \left[ \E \left[  t^{\unseenfreq{N}{M}{r}}  \mid Z_{1:N}  , \largestjump{1}{h_{c,\beta}} \right] \mid Z_{1:N} \right] .
\end{split}
\end{equation}
It is now easy to see that,  conditionally on
$Z_{1} ,\ldots , Z_N  , \largestjump{1}{h_{c,\beta}} $, the random variable $\unseenfreq{N}{M}{r}$ may be written as
\[
\unseenfreq{N}{M}{r} | Z_{1} ,\ldots , Z_N  , \largestjump{1}{h_{c,\beta}} \stackrel{{\rm d}}{=} \sum_{i \geq 1} \ind\left\{ \sum_{m=1}^M A_{N+m,i}' =r\right\}
\]
by applying Proposition \ref{prop:stable_predictive}. With the same notation used in the first part of the proof, we recall that the $A_{N+m,i}'$s are independent Bernoulli variables with parameters $\rho_i'$, conditionally  on the CRM
$\mu_{\largestjump{1}{h_{c,\beta}}}'= \sum_{i \geq 1} \rho_i' \delta_{W_i'}$ with L\'evy intensity
$\sigma \largestjump{1}{h_{c,\beta}}^{-\sigma} (1-s)^N s^{-1-\sigma} \ind_{(0,1)} (s) \de s P(\de w)$. Across $i$, the random variables
\[
    S_{M,i} := \sum_{m=1}^M A_{N+m,i}' %\ind\left\{  =r\right\}
\]
are independent, each one distributed as a binomial with parameters $M$ and success probability $\rho_i'$. We then evaluate the expected value appearing in \eqref{eq:thm2.1} as follows:
\begin{align*}
    \E \left[  t^{\unseenfreq{N}{M}{r}}  \mid Z_{1:N}  , \largestjump{1}{h_{c,\beta}} \right]  
    &=  \E \left[ \E \left[ \prod_{i \geq 1} \left( (t-1)\ind\left\{ \sum_{m=1}^M A_{N+m,i}' =r\right\} + 1\right)  \Big| \mu_{\Delta_{1, h_{c, \beta}}}' \right]   \right]\\
    &  = \E \left[ \prod_{i \geq 1}  \left[ (t-1) \P (S_{M,i} =r \mid \mu_{\Delta_{1, h_{c, \beta}}}')+1   \right]  \right] \\
    &   =  \E \left[   \prod_{i \geq 1}  \left[ (t-1) \binom{M}{r} (\rho_i')^r (1-\rho_i')^{M-r}  +1\right]\right].
\end{align*}
Since  $\mu_{\Delta_{1, h_{c, \beta}}}'$ is a CRM with a known L\'evy measure, we can evaluate the previous expected value:
\begin{align*}
&  \E \left[  t^{\unseenfreq{N}{M}{r}}  \mid Z_{1} ,\ldots , Z_N  , \largestjump{1}{h_{c,\beta}}  \right]  \\
 & =
   \E \left[ \exp  \left\{  \sum_{i \geq 1}  \log \left( (t-1) \binom{M}{r} (\rho_i')^r (1-\rho_i')^{M-r} +1 \right) \right\}  \right] \\
&  = \exp \left\{ -  (1-t)  \binom{M}{r} \int_0^1  s^{r-\sigma-1} (1-s)^{M+N-r}   \de s \sigma \Delta_{1, h_{c, \beta}}^{-\sigma} \right\}\\
&  =  \exp \left\{ - (1-t) \Delta_{1, h_{c, \beta}}^{-\sigma} \sigma \binom{M}{r}   B(r-\sigma , M+N -r+1) \right\} \\
&=  \exp \left\{ - (1-t) \Delta_{1, h_{c, \beta}}^{-\sigma} \stablenewsfreq{N}{M}{r} \right\},
\end{align*}
where we used the notation introduced in the statement of the theorem, i.e. $\stablenewsfreq{N}{M}{r} = \sigma \binom{M}{r}   B(r-\sigma , M+N -r+1)$.
Then the probability generating function in \Cref{eq:thm2.1} is obtained by integrating with respect to the posterior distribution of the largest jump provided in \eqref{eq:stable_largest_posteriormain}:
\begin{align*}
    \G_{\unseenfreq{N}{M}{r}}  (t) & = \int_0^\infty  \exp \left\{ - (1-t) x \stablenewsfreq{N}{M}{r}\right\} \cdot
    \frac{(\beta+\stablenews{0}{N})^{K_N+c+1}}{\Gamma (K_N+c+1)} x^{K_N+c} e^{-(\beta+\stablenews{0}{N})x} \de x\\
    & = \frac{\Gamma (K_N+c+1)}{(\beta+\stablenews{0}{N}+(1-t) \stablenewsfreq{N}{M}{r})^{K_N+c+1}} \cdot
    \frac{(\beta+\stablenews{0}{N})^{K_N+c+1}}{\Gamma (K_N+c+1)} \\
    & = \left( \frac{ \beta+\stablenews{0}{N}}{\beta+\stablenews{0}{N}+\stablenewsfreq{N}{M}{r} -t \stablenewsfreq{N}{M}{r}} \right)^{K+c+1} =  \left( \frac{1-p_N^{(M,r)}}{1 -tp_N^{(M,r)}} \right)^{K_N+c+1}
\end{align*}
 for any $|t| < 1/ p_N^{(M,r)}$, where we have set 
 \[
    p_N^{(M,r)}:= \frac{\stablenewsfreq{N}{M}{r}}{\beta+\stablenewsfreq{N}{M}{r}+\stablenews{0}{N}}. 
\]
 Then we conclude 
 that the posterior distribution of $\unseenfreq{N}{M}{r}$ is  a negative binomial distribution where 
 $K_N+c+1$ is the number of failures, and $p_N^{(M,r)}$ is the success probability
 in each experiment.\\
\qed

%%%%%%%%%%%%%%%%%%%%%%%%%%%%%%%%%%%%%%%%%%%%%
%%%%%%%%%%%%%% ASYMPTOTICS %%%%%%%%%%%%%%%%%%%%
%%%%%%%%%%%%%%%%%%%%%%%%%%%%%%%%%%%%%%%%%%%%%

%%%%%%%%%%%%%%%%%%%%%%%%%%%%%%%%%%%%%%%%%%%%%
\subsection{Proof of Theorem \ref{thm:stable_conv}}
%%%%%%%%%%%%%%%%%%%%%%%%%%%%%%%%%%%%%%%%%%%%%
In order to prove this result, we first exploit the L\'{e}vy continuity theorem, to obtain a convergence in distribution, and later strengthen this result to show that the convergence holds true also in the almost-sure sense.
For the convergence in distribution, thanks to \Cref{thm:stable_news}, the characteristic function of $\unseen{N}{M}/M^\sigma\mid Z_{1:N} $ is given by
\begin{align*}
    \Phi_{\unseen{N}{M}/M^\sigma} (t) & = \left( \frac{1-p_{N}^{(M)}}{1-p_N^{(M)} e^{i t /M^\sigma}}  \right)^{K_N+c+1}
    \end{align*}
    where $t \in \R$, $K_N$ is the number of distinct features in $Z_{1:N}$ and $p_N^{(M)} = \stablenews{N}{M}/(\stablenews{0}{N}+\stablenews{N}{M} +\beta )$. The quantity above can be rewritten as
    \begin{align*}
        \Phi_{\unseen{N}{M}/M^\sigma} (t) & = \left(  \frac{\beta +\stablenews{0}{N}}{\beta +\stablenews{0}{N}+\stablenews{N}{M}  - \stablenews{N}{M}e^{i t/M^\sigma}}  \right)^{K_N+c+1}.
\end{align*}
We can exploit \citet[Lemma 1]{masoero2019more} to determine the asymptotic expansion $\stablenews{N}{M}= M^\sigma  \Gamma (1-\sigma)  (1+O (M^{-\sigma}))  $  as $M \to +\infty$, having used the big-$O$ notation. Thus, using the asymptotic expansion of the exponential function, one has
\begin{align*}
\Phi_{\unseen{N}{M}/M^\sigma} (t) & =  \left(  \frac{\beta+\stablenews{0}{N}}{\beta+\stablenews{0}{N}+\stablenews{N}{M}  - \stablenews{N}{M}(1+it M^{-\sigma} + O(M^{-2\sigma}))}  \right)^{K_N+c+1} \\
& =  \left(  \frac{\beta+\stablenews{0}{N}}{\beta+\stablenews{0}{N}- \stablenews{N}{M} i t M^{-\sigma}  +O(M^{-\sigma})}  \right)^{K_N+c+1}   \\
& =  \left(  \frac{\beta+\stablenews{0}{N}}{\beta+\stablenews{0}{N} -it \Gamma (1-\sigma) +O(M^{-\sigma})}  \right)^{K_N+c+1}   \\
\end{align*}
which converges to the characteristic function of a gamma random variable with parameters $(K_N+c+1, (\stablenews{0}{N}+\beta)/\Gamma(1-\sigma))$ as $M \to +\infty$. This proves that 
\[
    \unseen{N}{M}/M^\sigma \mid Z_{1:N} \overset{\rm{ d }}{\to} W_N, \quad \text{where } W_N \sim \Gammad\left(K_N+c+1, \frac{\beta+\stablenews{0}{N}}{\Gamma(1-\sigma)}\right).
\]
In order to prove convergence in the almost sure sense, we exploit the corresponding results proved for the stable beta-Bernoulli process in \citet[Theorem 2]{masoero2019more} for the statistic
$\unseen{N}{M}$. We first notice that if we condition on the value of the largest jump $\largestjump{1}{h_{c, \beta}}$, then the SB-SP-Bernoulli is a completely random measure whose asymptotic behavior is analogous to the stable beta-Bernoulli process. Thus, specializing the almost sure convergence results given in \citet[Theorem 2]{masoero2019more}, a posteriori, we have
\begin{equation}
    \P \left( \lim_{M \to +\infty } \frac{\unseen{N}{M}}{M^\sigma}  = a^{-\sigma} \Gamma (1-\sigma) \Big|  Z_{1:N}, \largestjump{1}{h_{c, \beta}} = a\right) =1 . \label{eq:thm_as_1}
\end{equation}
The probability limit for the model in which the largest jump is random is obtained by observing that
\begin{align*} 
\begin{split}
&\P \left( \lim_{M \to +\infty } \frac{\unseen{N}{M}}{M^\sigma}  = \largestjump{1}{h_{c, \beta}}^{-\sigma} \Gamma (1-\sigma) \Big|  Z_{1:N} \right) \\
&\quad\quad\quad= \E \left[ \P \left( \lim_{M \to +\infty } \frac{\unseen{N}{M}}{M^\sigma}  = \largestjump{1}{h_{c, \beta}}^{-\sigma} \Gamma (1-\sigma) \Big|  Z_{1:N}, \largestjump{1}{h_{c, \beta}}\right) 
\Big| Z_{1:N} \right] \stackrel{\eqref{eq:thm_as_1}}{=} 1,
\end{split}
\end{align*}
in other words $\unseen{N}{M}/M^\sigma $ converges almost surely to the random variable $\largestjump{1}{h_{c, \beta}}^{-\sigma} \Gamma (1-\sigma)$, with respect to the conditional probability $\P$ given $Z_{1:N}$. Note also that the posterior distribution of $\largestjump{1}{h_{c, \beta}}^{-\sigma} \Gamma (1-\sigma)$ is a Gamma with parameters 
\[
  \left(K_N+c+1, \frac{\beta+\stablenews{0}{N}}{\Gamma (1-\sigma)}\right),
\]
thus the a.s. convergence in \eqref{eq:conv_as} now follows.\\

We proceed along the same lines as to show the validity of \eqref{eq:conv_as_freq}. First, we show the convergence in distribution of $\unseenfreq{N}{M}{r}$ using the characteristic function, and then we show that the result also holds in an almost sure sense.
From Theorem \ref{thm:stable_news}, the characteristic function of $\unseenfreq{N}{M}{r}/M^\sigma \mid Z_{1:N} $ is given by
\begin{equation*}
    \Phi_{\unseenfreq{N}{M}{r}/M^\sigma} (t) = \left( \frac{1-p_{N}^{(M,r)}}{1-p_N^{(M,r)} e^{i t /M^\sigma}}  \right)^{K_N+c+1}
\end{equation*}
where $t \in \R$, and $p_N^{(M)} = \stablenewsfreq{N}{M}{r}/(\stablenews{0}{N}+\stablenewsfreq{N}{M}{r} +\beta)$, and $\stablenewsfreq{N}{M}{r}$ was defined in the statement of Theorem \ref{thm:stable_news}. The expression above is equivalent to
\begin{align*}
    \Phi_{\unseenfreq{N}{M}{r}/M^\sigma} (t) & = \left(  \frac{\beta+\stablenews{0}{N}}{\beta+\stablenews{0}{N} +\stablenewsfreq{N}{M}{r}(1-e^{it/M^\sigma}) } \right)^{K_N+c+1}.
\end{align*}
Thanks to the well-known asymptotic relation for the ratio of gamma functions, it is easy to see that
\[
    \stablenewsfreq{N}{M}{r} = \frac{\sigma}{r!} \Gamma(r-\sigma) (r-\sigma) \frac{\Gamma(M+1)}{\Gamma(M+1-r)} \frac{\Gamma(N+M+2-r)}{\Gamma(N+M+2-\sigma)}  = \frac{\sigma}{r!} \Gamma (r-\sigma) M^{\sigma} (1+O (M^{-1}))
\]
as $M \to +\infty$. Hence, the characteristic function under study boils down to
\begin{align*}
\Phi_{\unseenfreq{N}{M}{r}/M^\sigma} (t)  & =  \left(  \frac{\beta +\stablenews{0}{N}}{\beta+\stablenews{0}{N} +\sigma \Gamma (r-\sigma) (r!)^{-1} M^{-\sigma}
(1+O(M^{-1}) ) (1-e^{it/ M^{\sigma}})} \right)^{K_N+c+1} \\
& =\left(  \frac{\beta +\stablenews{0}{N}}{\beta +\stablenews{0}{N} -\sigma \Gamma (r-\sigma) (r!)^{-1}it +O(M^{-\sigma})} \right)^{K_N+c+1}
\end{align*}
which converges, as $M \to +\infty$, to the characteristic function of a gamma random variable with parameters as in the thesis. The almost sure statement of \eqref{eq:conv_as_freq} goes along similar lines, indeed one can exploit the convergence theorems proved by \cite{masoero2019more} to state that
\begin{equation*}
\P \left( \lim_{M \to +\infty } \frac{\unseenfreq{N}{M}{r}}{M^\sigma}  =\frac{\sigma  (1-\sigma)_{(r-1)}}{r!} \largestjump{1}{h_{c, \beta}}^{-\sigma} \Gamma (1-\sigma) \Big|  Z_{1:N}, \largestjump{1}{h_{c, \beta}}\right) =1 .
\end{equation*}
Exactly as before, one can conclude that
\[
\P \left( \lim_{M \to +\infty } \frac{\unseenfreq{N}{M}{r}}{M^\sigma}  =\frac{\sigma  (1-\sigma)_{(r-1)}}{r!} \largestjump{1}{h_{c, \beta}}^{-\sigma} \Gamma (1-\sigma) \Big|  Z_{1:N}\right) =1,
\]
where the posterior distribution of the limiting random variable is a gamma with the same parameters as in the statement of the theorem  (Equation \eqref{eq:conv_as_freq}).\\
\qed

\section{Multivariate extension}  \label{appendix_multivariate}

In the present section we discuss the multivariate version of the  Bernoulli process, which we call the Bernoulli process with a condiment
or the simple multinomial process, using the terminology of \cite{james2017bayesian}. 
We first revise the model of \cite{james2017bayesian} and the associated prior, called stable-Beta-Dirichlet process, then we move to introduce a new scaled prior for the model. In both the cases, we determine closed-form results to face prediction of new features with condiments. These models are extremely important in  genomics to account for the presence of variants at certain genomic loci with a specific characteristic (or condiment). See, e.g., \cite{lee2016bayesian2}.

\subsection{Bernoulli process with a condiment}

The IBP process with a condiment has been introduced by \cite{james2017bayesian} and we remind the definition here.  For $q=1, 2, \ldots$, we define the vector of probabilities $\bm{p} = (p_1, \ldots , p_q)$ taking values in the following set
\[
S_q = \{   \bm{s}:= (s_1, \ldots, s_q) : \; s_j >0 \text{ as } j =1, \ldots , q, \; |\bm{s}|:= \sum_{j=1}^q s_j < 1 \}
\]
where for a generic vector $\bm{s}$, $|\bm{s}|= \sum_{j=1}^q s_j$ denotes the $L^1$ norm of the vector. For a fixed vector $\bm{p} \in S_q$, we
 also define the \textit{simple multinomial} distribution $\mathsf{M} (1, \bm{p})$. A vector $\bm{A}= (A_1, \ldots , A_q) \in \{0, 1\}^q$ is said to have the 
  \textit{simple multinomial} distribution with parameter vector $\bm{p}$ iff  it has the following probability mass function
\[
\P (\bm{A}= \bm{a}) = \P (A_1= a_1, \ldots , A_q = a_q) =  \left\{ \begin{array}{ll}
\prod_{j=1}^q  p_j^{a_j} \cdot (1-|\bm{p}|)^{1-|\bm{a}|} &  \text{if } |\bm{a}| \leq 1\\
0 & \text{if }   |\bm{a}| >1
\end{array}  \right.
\]
and we will write $\bm{A} \sim \mathsf{M} (1, \bm{p})$. In other words $\bm{A}$ concentrates on the vectors of $\{ 0,1\}^q$ for which at most one element is equal to 1 and all the other entries are zero.\\

The Bernoulli process with a condiment assumes that each observation $\bm{Z}$ is a multivariate $\{0,1\}^q$-valued stochastic process
\[
\bm{Z} (w ) = \sum_{i \geq 1} \bm{A}_i \delta_{w_i} (w)
\]
where $(w_i)_{i \geq 1}$ are features in $\mathds{W}$ and $(\bm{A}_i)_{i \geq 1}$ are independent simple multinomial random variables with parameter vector
$\bm{p}_i = (p_{i,1}, \ldots , p_{i,q})$ as $i =1, 2, \ldots$. Here $|\bm{p}_i|$ represents the probability that an individual displays feature $w_i$, while 
$p_{i,j}$ is the probability that the individual exhibits feature $w_i$ with condiment $j \in \{  1, \ldots , q\}$. Thus, $\bm{Z}$ is termed a \textit{simple mutlinomial process} with parameter $\bm{\zeta}= \sum_{i \geq 1} \bm{p}_i \delta_{w_i}$, and it is denoted by $\mathsf{MP} (\bm{\zeta})$. In order to carry out BNP inference, we need to specify a distribution for the discrete measure $\bm{\zeta}$.
Thus, we obtain a multivariate version of the model \eqref{exch_mod}:
\begin{equation}
\label{eq:model_multivariate}
\begin{split}
\bm{Z}_n | \bm{\zeta} &\simiid \mathsf{MP} (\bm{\zeta}) \quad n =1, \ldots , N\\
\bm{\zeta}  & \sim \mathscr{Z}
\end{split}
\end{equation}
where $\mathscr{Z}$ denotes the distribution of the discrete random measure $\bm\zeta$. 

\subsection{Priors based on multivariate CRMs}  \label{sec:multi_CRM}

In this section we consider a class of priors $\mathscr{Z}$ in \eqref{eq:model_multivariate} defined by \cite{james2017bayesian} and based on a multivariate extension of CRMs (see \cite{daleyII}). In particular consider a multivariate CRM on $\mathds{W}$:
\[
\bm{\mu}  = \sum_{i \geq 1} \bm{\rho}_i \delta_{W_i}
\]
where $\bm{\rho}_i= (\rho_{i,1}, \ldots , \rho_{i,q})$ is a vector of $[0,1]$-valued random jumps with the property $\sum_{i \geq 1} |\bm\rho_i| < +\infty$, the $W_i$'s are i.i.d. $\mathds{W}$-valued random locations independent of the  $\bm{\rho}_i$'s. Under this nonparametric prior each observation $\bm{Z}_n$ in \eqref{eq:model_multivariate} admits the representation $\bm{Z}_n | \bm{\mu}= \sum_{i \geq 1} \bm{A}_{n,i} \delta_{W_i}$, where 
$\bm{A}_{n,i}= (A_{n,i,1}, \ldots , A_{n,i,q})| \bm{\mu} \simind \mathsf{M}(1, \bm{\rho}_i)$.
Note that the random measure $\bm{\mu}$ equals the vector of random measures $(\mu_1, \ldots , \mu_q)$, where
\[
\mu_j = \sum_{i \geq 1} \rho_{i,j} \delta_{W_i}, \quad j=1, \ldots , q.
\]
As a simple CRM of Section \ref{appendix_crm}, the multivariate extension of a CRM is characterized by its L\'evy-Khintchine representation:
\[
\begin{split}
&\E [e^{-\int_{\mathds{W}} f_1 (w) \mu_1 (\de w)- \cdots  - \int_{\mathds{W}} f_q (w) \mu_q (\de w)}] \\
& \qquad= 
\exp \left\{- \int_{\mathds{W}}  \int_{\R_{+}^q} (1-e^{-s_1 f_1 (w)- \cdots - s_q f_q (w)}   )   \lambda_{(q)} (s_1, \ldots , s_q) \de s_1 \cdots \de s_q  P (\de w) \right\}
\end{split}
\]
for arbitrary measurable functions $f_1, \ldots , f_d : \mathds{W} \to \R_+$,  where $P$ is a probability measure on $\mathds{W}$. The multivariate L\'evy intensity  $\lambda_{(q)}$ is assumed to satisfy  the integral condition
\[
\int_{\R_+^q}  \min \{1,||\bm{s}|| \}  \lambda_{(q)} (s_1, \ldots ,s_q) \de s_1 \cdots \de s_q < +\infty
\]
where $||\bm s||$ is the Euclidean norm of the vector $\bm{s}$.
When $\lambda_{(q)} (s_1, \ldots ,s_q) $ concentrates on $S_q$, the law of $\bm{\mu}$ may be employed as a distribution for the parameter $\bm{\zeta}$ of the simple multinomial process in \eqref{eq:model_multivariate}. A possible choice indicated by \cite{james2017bayesian} is to select a stable-Beta-Dirichlet process, which is a generalization of the Beta-Dirichlet process
\citep{kim2012bayesian} with power law behavior. We say that a multivariate CRM $\bm{\mu}= (\mu_1, \ldots, \mu_q)$ is a stable-Beta-Dirichlet process with parameters $(\alpha , \kappa+\alpha; \bm{\gamma}; \vartheta)$, where $\bm{\gamma} = (\gamma_1, \ldots , \gamma_q)$, if it is characterized by the following L\'evy intensity specification
\begin{equation}
\label{eq:SBD_process}
\lambda_{(q)} (\bm{s}) = \frac{\vartheta \Gamma (|\bm{\gamma}|)}{\prod_{j=1}^q  \Gamma (\gamma_j)}  |\bm{s}|^{-\alpha-|\bm{\gamma}|}  (1-\bm{s})^{\kappa+\alpha-1}\prod_{j=1}^q s_j^{\gamma_j-1}\ind_{[0,1]}(|\bm{s}| )
, \; \bm{s} \in S_q
\end{equation}
where  $0 \leq \alpha < 1, \kappa > -\alpha, \vartheta >0 $ and $\gamma_j >0$ for any $j=1, \ldots , q$. 
We write $\bm{\mu} \sim {\rm mSBD} (\alpha, \kappa+\alpha;\bm{\gamma}; \vartheta)$ to denote the distribution of the stable-Beta-Dirichlet process.
As emphasized by \cite{james2017bayesian}, it can be easily checked, by means of the Laplace functional, that $\sum_{j=1}^q \mu_j$
is a stable-Beta process of \cite{teh2009indian}, i.e. a simple CRM on $\mathds{W}$ with L\'evy intensity on $[0,1]\times \mathds{W}$ equal to $\vartheta s^{-\alpha-1}(1-s)^{\kappa+\alpha-1} \de s P (\de w)$.

\subsubsection{Estimation of the unseen features with a condiment}

In order to face predictive inference with the model \eqref{eq:model_multivariate} under the prior specification $\bm{\zeta}  \sim {\rm mSBD} (\alpha, \kappa+\alpha;\bm{\gamma}; \vartheta)$, we need to characterize the predictive distribution of $\bm{Z}_{N+1}| \bm{Z}_{1:N}$ for the model \eqref{eq:model_multivariate}. To this end it is worth recalling the definition of the finite-dimensional Beta-Dirichlet distribution by \cite{kim2012bayesian}.
A random vector $\bm{P}:=(P_1, \ldots , P_q)$ on $S_q$ is said to follow a Beta-Dirichlet distribution with positive parameters $\alpha, \kappa$ and $\bm{\gamma}= (\gamma_1, \ldots , \gamma_q)$ if the probability density function of the random vector $(P_1, \ldots , P_q)$ has density proportional to
\begin{equation}
\label{eq:betaDir_finiteD}
|\bm{s}|^{\alpha-|\bm{\gamma}|}\cdot (1-|\bm{s}|)^{\kappa-1} \prod_{j=1}^q s_j^{\gamma_j-1} \cdot \ind_{S_q}(\bm{s})
\end{equation}
and we write $(P_1, \ldots , P_q)\sim \mathscr{B}\mathscr{D}(\alpha, \kappa; \bm{\gamma})$.
This distribution can be characterized as follows: $|\bm{P}|$ has a Beta distribution with parameters $(\alpha, \kappa)$ and the normalized vector
$(P_1/|\bm{P}|, \ldots , P_q /|\bm{P}|)$ follows a Dirichlet distribution with parameters $(\gamma_1, \ldots, \gamma_q)$.\\

We first characterize the distribution of $\bm{Z}_{N+1}| \bm{Z}_{1:N}$ under the prior specification  $\bm{\zeta}  \sim {\rm mSBD} (\alpha, \kappa+\alpha;\bm{\gamma}; \vartheta)$ in \eqref{eq:model_multivariate}. The following result  is immediate  from the theory developed by \cite{james2017bayesian}.
\begin{theorem} \label{thm:multi_predictive}
For any $N \geq 1$, let $\bm{Z}_{1:N}$ be a random sample modeled as the BNP multinomial process model \eqref{eq:model_multivariate}, with $\bm{\zeta} \sim 
{\rm mSBD} (\alpha, \kappa+\alpha;\bm{\gamma}; \vartheta)$. If $\bm{Z}_{1:N}$ displays $K_N=k$ distinct features, labeled by $W_1^*, \ldots  , W_{K_N}^*$, with condiment-specific frequencies $(M_{N,1,j}, \ldots , M_{N,K_N,j})= (m_{1,j}, \ldots , m_{k,j})$, for any $j=1, \ldots , q$, then the
 conditional distribution of $\bm{Z}_{N+1}$, given $\bm Z_{1:N}$, coincides with the distribution of 
\begin{equation}
\label{eq:multi_predictive}
\bm{Z}_{N+1} | \bm{Z}_{1:N} \stackrel{d}{=} \bm{Z}_{N+1}' + \sum_{i=1}^{K_N} \bm{A}_{N+1,i}  \delta_{W_i^*}
\end{equation}
where:
\begin{itemize}
\item[i)] $\bm{Z}_{N+1}' $ is such that $\bm{Z}_{N+1}' = \sum_{i \geq 1}\bm{A}_{N+1,i}' \delta_{W_i'} \sim \mathsf{MP} (\bm{\mu}')$ and 
$\bm{\mu}'\sim  {\rm mSBD} (\alpha, \kappa+M+\alpha;\bm{\gamma}; \vartheta)$; 
\item[ii)] $\bm{A}_{N+1, 1:K_N}$ is a collection of independent simple multinomial random variables with respective parameters $\bm{J}_{1:K_N}$, such that each $\bm{J}_i = (J_1, \ldots , J_q)$ has a Beta-Dirichlet distribution, i.e., $\bm{J}_i \simind \mathscr{B}\mathscr{D}(m_i-\alpha, N-m_i +\kappa+\alpha; \bm{\gamma}+ \bm{m}_i)$, where we put $\bm{m}_i := (m_{i,1}, \ldots , m_{i,q})$ and $m_i = \sum_{j=1}^q m_{i,j}= |\bm{m}_i|$ for any $i=1, \ldots , K_N$.
\end{itemize}
\end{theorem}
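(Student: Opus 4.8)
The plan is to deduce \eqref{eq:multi_predictive} directly from the posterior characterization of multivariate CRMs developed in \cite{james2017bayesian}, transcribing the univariate route behind Proposition \ref{prop:predictive_IBP} to the marked-Poisson setting on $\R_+^q \times \mathds{W}$. Since the stable-Beta-Dirichlet process is a homogeneous multivariate CRM with L\'evy intensity \eqref{eq:SBD_process}, the first step is to invoke the multivariate analogue of James's posterior theorem: conditionally on $\bm{Z}_{1:N}$, the random measure $\bm{\mu}$ splits into an ``ordinary'' CRM component $\bm{\mu}'$, governing features not displayed in $\bm{Z}_{1:N}$, together with fixed atoms at the observed locations $W_1^*, \ldots, W_{K_N}^*$ carrying independent jumps $\bm{J}_1, \ldots, \bm{J}_{K_N}$. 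It then suffices to identify the updated ordinary intensity and the conditional jump laws, after which $\bm{Z}_{N+1} \sim \mathsf{MP}(\bm{\mu})$ is read off.

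For the ordinary component I would compute the thinning factor produced by $N$ observations. Under the simple multinomial law $\mathsf{M}(1,\bm{s})$ a single sample fails to exhibit a feature of jump $\bm{s}$ with probability $1-|\bm{s}|$, so the probability of being absent throughout $\bm{Z}_{1:N}$ is $(1-|\bm{s}|)^N$ and the ordinary intensity becomes $(1-|\bm{s}|)^N \lambda_{(q)}(\bm{s})$. Inspecting \eqref{eq:SBD_process}, this multiplication merely raises the exponent of $(1-|\bm{s}|)$ by the sample size, leaving the $|\bm{s}|$- and $s_j$-factors untouched; hence $\bm{\mu}'$ is once more a stable-Beta-Dirichlet process with $\alpha, \bm{\gamma}, \vartheta$ unchanged and its concentration parameter updated accordingly, which is item i).

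For the fixed atoms I would write the posterior density of each jump $\bm{J}_i$ as proportional to the feature-$i$ likelihood times the L\'evy intensity. Using the $\mathsf{M}(1,\bm{s})$ probabilities, feature $i$ contributes $\prod_{j=1}^q s_j^{m_{i,j}}(1-|\bm{s}|)^{N-m_i}$ with $m_i=|\bm{m}_i|$, so that
\begin{equation*}
f_{\bm{J}_i}(\bm{s}) \propto |\bm{s}|^{-\alpha-|\bm{\gamma}|}\,(1-|\bm{s}|)^{N-m_i+\kappa+\alpha-1}\prod_{j=1}^q s_j^{\gamma_j+m_{i,j}-1}.
\end{equation*}
Matching this against the Beta-Dirichlet density \eqref{eq:betaDir_finiteD} factor by factor --- equating the exponents of $s_j$, of $(1-|\bm{s}|)$, and of $|\bm{s}|$, and using $|\bm{\gamma}+\bm{m}_i|=|\bm{\gamma}|+m_i$ --- pins down the law as $\mathscr{B}\mathscr{D}(m_i-\alpha,\,N-m_i+\kappa+\alpha;\,\bm{\gamma}+\bm{m}_i)$, which is item ii).

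Finally I would assemble the predictive law by sampling $\bm{Z}_{N+1}\sim\mathsf{MP}(\bm{\mu})$ conditionally on the posterior $\bm{\mu}$: the ordinary component yields the ``new-feature'' process $\bm{Z}_{N+1}'\sim\mathsf{MP}(\bm{\mu}')$, while each fixed atom $W_i^*$ contributes an independent simple multinomial increment $\bm{A}_{N+1,i}\sim\mathsf{M}(1,\bm{J}_i)$, reproducing \eqref{eq:multi_predictive}. The only genuinely non-routine step is this pair of algebraic identifications --- recognizing the thinned intensity as a stable-Beta-Dirichlet process and the conditional jump density as a Beta-Dirichlet distribution; the remaining structure is an exact transcription of the univariate argument, so no new analytic difficulty arises.
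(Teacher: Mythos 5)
Your proposal is correct and takes essentially the same route as the paper, which proves this theorem simply by declaring it immediate from the multivariate posterior theory of \citet{james2017bayesian}; your thinning computation $(1-|\bm{s}|)^N\lambda_{(q)}(\bm{s})$ and the exponent-matching against \eqref{eq:betaDir_finiteD} are precisely the details that citation compresses. One point worth flagging: your derivation gives the updated parameter $\kappa+N+\alpha$ for $\bm{\mu}'$, which is the correct value (consistent with the posterior intensity \eqref{eq:SBD_process_posterior} used later in the paper), so the ``$\kappa+M+\alpha$'' appearing in the theorem statement is a typo --- $M$ is not even defined at that point.
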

Note that in Theorem \ref{thm:multi_predictive} $M_{N,i,j}$ is the random number of times feature $W_i^*$ has been observed out of $\bm{Z}_{1:N}$ with condiment $j \in \{ 1, \ldots, q\}$, while
$m_i= \sum_{j=1}^q m_{i,j}$ is the number of times feature $W_i^*$ has been observed out of the sample.\\

For any $N \geq 1$, let $\bm{Z}_{1:N}$ be an observable sample modeled as the multinomial model in \eqref{eq:model_multivariate}, with 
$\bm{\zeta} \sim {\rm mSBD} (\alpha, \kappa+\alpha;\bm{\gamma}; \vartheta)$. Moreover, under the same model, for $M \geq 1$ let $\bm{Z}_{N+1:N+M}= (\bm{Z}_{N+1}, \ldots , \bm{Z}_{N+M})$ be an additional and unobserved sample. We now define the number of hitherto unobserved feature with condiment $\ell \in \{ 1, \ldots , q\}$  that will be recorded out of $\bm{Z}_{N+1:N+M}$ as
\begin{equation}
\label{eq:new_condiment_def}
U_{N,\ell}^{(M)} := \sum_{i\geq 1} \ind \left( \sum_{m=1}^M  A_{m,i,\ell} >0 \right) \cdot  \ind \left( \sum_{n=1}^N A_{n,i,\ell}=0 \right).
\end{equation}
Posterior inference for such a quantity could have potential interest in genomics to account for the presence of a variant with certain biological characteristics (condiment).
The next theorem provides the posterior distribution of $U_{N,\ell}^{(M)}$.
\begin{theorem}  \label{thm:multi_condiment_prediction_NON-scaled}
For any $N \geq 1$, let $\bm{Z}_{1:N}$ be a random sample modeled as the BNP simple multinomial process model \eqref{eq:model_multivariate}, with $\bm{\zeta} \sim 
{\rm mSBD} (\alpha, \kappa+\alpha;\bm{\gamma}; \vartheta)$. Suppose that $\bm{Z}_{1:N}$ displays $K_N=k$ distinct features, labeled by $W_1^*, \ldots  , W_{K_N}^*$, with condiment-specific frequencies $(M_{N,1,j}, \ldots , M_{N,K_N,j})= (m_{1,j}, \ldots , m_{k,j})$, for any $j=1, \ldots , q$. Then, 
the posterior distribution of $U_{N,\ell}^{(M)}$, given $\bm Z_{1:N}$, coincides with the distribution of
\begin{equation}
\label{eq:posterior_condiment_j_NON_scaled}
U_{N,\ell}^{(M)} | \bm Z_{1:N} \sim  {\rm Poisson}  \left( \vartheta  \sum_{m=1}^M (-1)^{m+1} \binom{M}{m}
B (m-\alpha, N+\alpha+\kappa)  \frac{(\gamma_\ell)_m}{(|\bm{\gamma}|)_m}\right)
\end{equation}
\end{theorem}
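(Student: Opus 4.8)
The plan is to reduce the statement to a Poisson functional of the ``new-features'' part of the posterior and then to evaluate the corresponding L\'evy integral. By Theorem~\ref{thm:multi_predictive}, together with the posterior characterization of multivariate CRMs on which it rests, conditionally on $\bm Z_{1:N}$ the hitherto unobserved features are generated by the random measure $\bm\mu'\sim{\rm mSBD}(\alpha,\kappa+N+\alpha;\bm\gamma;\vartheta)$, i.e.\ the original intensity \eqref{eq:SBD_process} with $\kappa$ replaced by $\kappa+N$ (so that the factor $(1-|\bm s|)^{\kappa+\alpha-1}$ becomes $(1-|\bm s|)^{\kappa+N+\alpha-1}$). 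This $\bm\mu'$ is a multivariate CRM whose jumps $\bm\rho_i'=(\rho_{i,1}',\dots,\rho_{i,q}')$ form a Poisson point process on $S_q$ with intensity $\lambda_{(q)}'(\bm s)\,\de\bm s\,P(\de w)$. Since the $M$ additional samples are i.i.d.\ $\mathsf{MP}(\bm\zeta)$ given $\bm\zeta$, the count $U_{N,\ell}^{(M)}$ is a functional of $\bm\mu'$ alone.

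First I would compute the posterior probability generating function of $U_{N,\ell}^{(M)}$, following the same route as in the proof of Theorem~\ref{thm:stable_news}. Conditionally on $\bm\mu'=\sum_{i\ge1}\bm\rho_i'\delta_{W_i'}$, and independently across $i$, feature $i$ is displayed with condiment $\ell$ in a given future sample with probability $\rho_{i,\ell}'$, hence it appears with condiment $\ell$ at least once among the $M$ future samples with probability $1-(1-\rho_{i,\ell}')^M$. Thus $U_{N,\ell}^{(M)}$ is, conditionally on $\bm\mu'$, a sum of independent Bernoulli indicators indexed by the atoms of a Poisson process, and the marking/Campbell theorem \citep{kingman1992poisson} gives
\begin{equation*}
\E\big[t^{U_{N,\ell}^{(M)}}\mid \bm Z_{1:N}\big]=\exp\left\{-(1-t)\int_{S_q}\bigl[1-(1-s_\ell)^M\bigr]\,\lambda_{(q)}'(\bm s)\,\de\bm s\right\}.
\end{equation*}
Unlike the scaled case of Theorem~\ref{thm:stable_news}, there is no further mixing over a largest jump, so this is exactly the generating function of a Poisson law with mean $\Lambda_\ell:=\int_{S_q}[1-(1-s_\ell)^M]\,\lambda_{(q)}'(\bm s)\,\de\bm s$.

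It then remains to evaluate $\Lambda_\ell$, which is the main computational step. Expanding $1-(1-s_\ell)^M=\sum_{m=1}^M(-1)^{m+1}\binom{M}{m}s_\ell^m$, the problem reduces to the moments $\int_{S_q}s_\ell^m\,\lambda_{(q)}'(\bm s)\,\de\bm s$. Writing $\bm s=u\bm v$ with $u=|\bm s|\in(0,1)$ and $\bm v$ ranging over the unit simplex, the integral factorizes into a Beta integral in $u$, namely $\int_0^1 u^{m-\alpha-1}(1-u)^{\kappa+N+\alpha-1}\,\de u=B(m-\alpha,N+\alpha+\kappa)$, and a Dirichlet integral in $\bm v$, which after collecting the $\Gamma$-factors in \eqref{eq:SBD_process} contributes $\Gamma(|\bm\gamma|)\Gamma(\gamma_\ell+m)/\bigl[\Gamma(\gamma_\ell)\Gamma(|\bm\gamma|+m)\bigr]=(\gamma_\ell)_m/(|\bm\gamma|)_m$. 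Hence $\int_{S_q}s_\ell^m\lambda_{(q)}'(\bm s)\,\de\bm s=\vartheta\,B(m-\alpha,N+\alpha+\kappa)(\gamma_\ell)_m/(|\bm\gamma|)_m$, and summing over $m$ recovers precisely the Poisson parameter in \eqref{eq:posterior_condiment_j_NON_scaled}. The delicate points are the correct identification of the updated intensity $\lambda_{(q)}'$ (the $\kappa\mapsto\kappa+N$ shift) and the bookkeeping of the normalizing constants in the polar/Beta--Dirichlet decomposition; once these are handled, the rest is routine.
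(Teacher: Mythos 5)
Your proposal is correct and follows essentially the same route as the paper's proof: the predictive characterization of Theorem~\ref{thm:multi_predictive} identifies the new-features measure as a multivariate CRM with intensity updated by the factor $(1-|\bm{s}|)^N$, the posterior probability generating function is then computed via the Laplace functional (equivalently, Poisson marking/Campbell), and the resulting L\'evy integral is evaluated by binomial expansion of $1-(1-s_\ell)^M$. The only cosmetic difference is that you evaluate the moment integrals $\int_{S_q}s_\ell^m\,\lambda_{(q)}'(\bm{s})\,\de\bm{s}$ explicitly through the polar Beta--Dirichlet factorization, whereas the paper cites the tabulated formula \citep[Formula 4.635.2]{tables} for the same quantity.
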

\begin{proof}
The proof is based on the posterior characterization provided in Theorem \ref{thm:multi_predictive} and the evaluation of the probability generating function of the random variable $\unseen{N,\ell}{M}$, conditionally on the sample $\bm{Z}_{1:N}$. The probability generating function is denoted as usual by 
$\G_{U_{N,\ell}^{(M)} } (\, \cdot \, )$. Thanks to the characterization \eqref{eq:multi_predictive}, conditionally on $\bm{Z}_{1:N}$, the random variable
$U_{N,\ell}^{(M)}$ may be written as
\[
U_{N,\ell}^{(M)}  | \bm{Z}_{1:N} \stackrel{d}{=}\sum_{i \geq 1} \ind \left( \sum_{m=1}^M  A_{m+N,i,\ell}' >0 \right).
\]
Fix $t$ in a neighborhood of the origin, then one has 
\begin{equation}  \label{eq:G_UNjM}
\G_{U_{N,\ell}^{(M)}}  (t)  =  \E \left[  t^{U_{N,\ell}^{(M)}}  \mid \bm{Z}_{1:N}  \right].
\end{equation}
 Here, independently across $i$, $A_{N+m,i,\ell}'$ is a Bernoulli random variable with parameter $\rho_{i,\ell}'$, conditionally  on the random measure
$\bm{\mu}'= \sum_{i \geq 1} \bm{\rho}_i' \delta_{W_i'}$ with L\'evy intensity $\lambda_{(q)}' (\bm{s}) \de s_1 \cdots \de s_q P(\de w)$ such that 
\begin{equation}
\label{eq:SBD_process_posterior}
\lambda_{(q)}' (\bm{s}) = \frac{\vartheta \Gamma (|\bm{\gamma}|)}{\prod_{j=1}^q  \Gamma (\gamma_j)}  |\bm{s}|^{-\alpha-|\bm{\gamma}|}  (1-\bm{s})^{N+\kappa+\alpha-1}\prod_{j=1}^q s_j^{\gamma_j-1}\ind_{[0,1]}(|\bm{s}| )
, \; \bm{s} \in S_q .
\end{equation}
Thus, the expected value in \eqref{eq:G_UNjM} boils down to
\begin{align*}
\G_{U_{N,\ell}^{(M)}}  (t)  & = \E \left[ t^{\sum_{i \geq 1} \ind \left( \sum_{m=1}^M  A_{m+N,i,\ell}' >0 \right)  }  \right] = 
\E \left[  \prod_{i \geq 1}  \E \left[ t^{\ind \left( \sum_{m=1}^M  A_{m+N,i,\ell}' >0 \right)}  \mid \bm{\mu}'\right] \right]\\
& = \E \left[ \prod_{i \geq 1} \left(t+(1-t) \prod_{m=1}^M \P (A_{m+N,i,\ell}' =0| \bm{\mu}')\right) \right]\\
& = \E \left[ \prod_{i \geq 1} (t+(1-t)(1-\rho_{i,\ell}')^M) \right].\\
\end{align*}
where we used the fact that each $A_{m+N,i,\ell}'$ is a Bernoulli random variable with parameter $\rho_{i,\ell}'$, conditionally on the random measure
$\bm{\mu}'$, and in addition these random variables are conditionally independent.  We now exploit the Laplace functional of the multivariate CRM 
$\bm{\mu}'$ to obtain
\begin{align}
\G_{U_{N,\ell}^{(M)}}  (t)  & = \E \left[ \exp \left\{ \sum_{i \geq 1} \log (t+ (1-t)  (1-\rho_{i,\ell}')^M) \right\} \right] \nonumber\\
&= \exp \left\{- (1-t)\int_{S_q}  [1-(1-s_\ell)^M]   \lambda_{(q)}' (\bm{s}) \de s_1 \cdots \de s_q\right\}  \nonumber\\
& = \exp \left\{ (1-t)\sum_{m=1}^M (-1)^m \binom{M}{m}\int_{S_q} s_\ell^{m}   \lambda_{(q)}' (\bm{s}) \de s_1 \cdots \de s_q\right\}  \label{eq:integrals_multi}
\end{align}
where $\lambda_{(q)}' $ has been specified in \eqref{eq:SBD_process_posterior} and we exploited the following formula
\begin{equation}
[1-(1-s_\ell)^M]  = 1- \sum_{m=0}^M (-1)^m \binom{M}{m} s_\ell^m  = -\sum_{m=1}^M (-1)^m \binom{M}{m} s_\ell^m .
\label{eq:formula_bino}
\end{equation}
The integrals over $S_q$ in \eqref{eq:integrals_multi} may be  easily evaluated
(see, e.g., \cite[Formula 4.635.2]{tables}) to get
\[
\int_{S_q} s_\ell^{m}   \lambda_{(q)}' (\bm{s}) \de s_1 \cdots \de s_q =  \vartheta  \frac{(\gamma_\ell)_m}{(|\bm{\gamma}|)_{m}}  \cdot  B (m-\alpha, N+\alpha+\kappa) .
\]
By substituting the previous expression in \eqref{eq:integrals_multi}, we obtain 
\begin{align*}
\G_{U_{N,\ell}^{(M)}}  (t)  & =  \exp \left\{ (t-1)\sum_{m=1}^M (-1)^{m+1} \binom{M}{m} \vartheta  \frac{(\gamma_\ell)_m}{(|\bm{\gamma}|)_{m}}  \cdot  B (m-\alpha, N+\alpha+\kappa)   \right\} 
\end{align*}
which is exactly the probability generating function of a Poisson random variable with parameter
\[
\sum_{m=1}^M (-1)^{m+1} \binom{M}{m} \vartheta  \frac{(\gamma_\ell)_m}{(|\bm{\gamma}|)_{m}}  \cdot  B (m-\alpha, N+\alpha+\kappa) .
\]
\qed
\end{proof}
As  a consequence of Theorem \ref{thm:multi_condiment_prediction_NON-scaled}, one can define a BNP estimator of $U_{N,\ell}^{(M)}$ with respect to a squared  loss function as follows:
\begin{equation}
\label{eq:est_multi_NON-scaled}
\hat{U}_{N,\ell}^{(M)} =  \vartheta  \sum_{m=1}^M (-1)^{m+1} \binom{M}{m}
B (m-\alpha, N+\alpha+\kappa)  \frac{(\gamma_\ell)_m}{(|\bm{\gamma}|)_m}.
\end{equation} 
We point out that for computational convenience one may write
\begin{equation} \label{eq:est_rewritten_multi}
\hat{U}_{N,\ell}^{(M)} =  \vartheta B(1-\alpha, N+\alpha+\kappa) \E_{(X,Y)} \left[ \frac{1-(1-XY)^M}{Y} \right]
\end{equation}
where the expected value is taken with respect to the two independent random variables with the following beta distributions
\[
X \sim {\rm Beta} (\gamma_\ell, |\bm{\gamma}|-\gamma_\ell), \quad Y \sim {\rm Beta} (1-\alpha, N+\alpha+\kappa).
\]
The equality \eqref{eq:est_rewritten_multi} may be easily proved by observing that
\[
\E_X [X^m]=  \frac{(\gamma_\ell)_m}{(|\bm{\gamma}|)_m} \quad \text{and} \quad B(m-\alpha, N+\alpha+\kappa) =  \E_Y [Y^{m-1}] B (1-\alpha, N+\alpha+\kappa).
\]

\subsection{Scaled stable-Beta-Dirichlet prior for multinomial processes}
\label{sec:multi_scaled}

From Theorems \ref{thm:multi_predictive}-\ref{thm:multi_condiment_prediction_NON-scaled}, it is apparent that, under the stable-Beta-Dirichlet process, the conditional distribution of a statistic involving hitherto unobserved features, depends on the initial sample $\bm{Z}_{1:N}$ only trough the sample size $N$ and not on other sample statistics. This behavior resembles what happens for the Bernoulli process model described in the main paper when the prior $\zeta$ in \eqref{exch_mod} is a CRM. We then introduce a multivariate analogue of the stable-Beta scaled prior, that will be termed \textit{scaled stable-Beta-Dirichlet process} with the goal to enrich the predictive structure.  We introduce a discrete random measure depending on the random jump $\Delta_{1,h_{c, \beta}}$, that has been defined in the main paper as a  polynomial-exponential tilting of the density function \eqref{eq:stable_largest_jump}, whose density equals
\begin{equation}
\label{eq:density_Dhcbeta}
f_{\Delta_{1, h_{c, \beta}}} (a) =  \frac{\sigma \beta^{c+1}}{\Gamma(c+1)} a^{-\sigma(c+1)-1}\exp\left\{-\beta a^{-\sigma}\right\}\ind_{\R_+} (a)
\end{equation}
as shown in \eqref{eq:stable_mixing}. The scaled stable-Beta-Dirichlet random measure is an almost surely discrete random measure that can be represented as
\[
\bm\mu_{\Delta_{1, h_{c, \beta}}}  = \sum_{i \geq 1} \bm{\rho}_i  \delta_{W_i}, \quad \bm{\rho}_i = (\rho_{i,1},\ldots, \rho_{i,q})
\]
and consisting of $q$ components
\[
\mu_{\Delta_{1, h_{c, \beta}}, j} = \sum_{i \geq 1}\rho_{i,j} \delta_{W_i} \quad \text{as } j= 1, \ldots , q.
\]
Conditionally on the jump $\Delta_{1, h_{c, \beta}}$, the multivariate random measure $\bm\mu_{\Delta_{1, h_{c, \beta}}}$ is completely
random with L\'evy intensity $\lambda_{(q),\Delta_{1, h_{c, \beta}}} (\bm{s}) \de s_1 \cdots \de s_q P (\de p)$ with the specification
\begin{equation}
\label{eq:scaled_SBP}
\lambda_{(q),\Delta_{1, h_{c, \beta}}} (\bm{s}) = \frac{ \Gamma (|\bm{\gamma}|)}{\prod_{j=1}^q  \Gamma (\gamma_j)} 
\sigma \Delta_{1, h_{c, \beta}}^{-\sigma} 
 |\bm{s}|^{-\sigma-|\bm{\gamma}|} \prod_{j=1}^q s_j^{\gamma_j-1}\ind_{[0,1]}(|\bm{s}| ) 
, \; \bm{s}  \in S_q
\end{equation}
where  $0 <\sigma < 1$ and $\gamma_j >0$ for any $j=1, \ldots , q$. 
We write $\bm{\mu}_{\Delta_{1,h}} \sim \text{\rm S-mSBD} (\sigma, \bm{\gamma}; h_{c, \beta})$. A remarkable property of this model is that
$\sum_{j=1}^q \mu_{\Delta_{1, h_{c, \beta}}, j}$ is distributed as the stable-Beta scaled process prior, i.e., $|\bm{\mu}_{\Delta_{1, h_{c, \beta}}}|
\sim {\rm \text{SB-SP}} (\sigma,c, \beta)$. Such a property may be easily proved by means of the Laplace functionals. Note that one could potentially  introduce an additional mass parameter in the model, but this is irrelevant to carry out posterior inference in the stable case.

\subsubsection{Posterior Analysis}  \label{sec:multi_posterior_analysis}

We now provide posterior, predictive and marginal characterizations for the multivariate model \eqref{eq:model_multivariate} under the scaled stable-Beta-Dirichlet process prior specification for $\mathscr{Z}$. The results we present here may be proved by exploiting \cite[Section 5]{james2017bayesian}, conditionally on $\Delta_{1, h_{c, \beta}}$ and then by marginalizing over the mixing distribution \eqref{eq:density_Dhcbeta}.  We omit the details.

\begin{theorem} \label{thm:multi_posterior_scaled}
For any $N \geq 1$, let $\bm{Z}_{1:N}$ be a random sample modeled as the BNP simple multinomial process model \eqref{eq:model_multivariate}, with $\bm{\zeta} \sim 
\text{\rm S-mSBD} (\sigma, \bm{\gamma}; h_{c, \beta}) $. If $\bm{Z}_{1:N}$ displays $K_N=k$ distinct features, labeled by $W_1^*, \ldots  , W_{K_N}^*$, with condiment-specific frequencies $(M_{N,1,j}, \ldots , M_{N,K_N,j})= (m_{1,j}, \ldots , m_{k,j})$, for any $j=1, \ldots , q$, then the
 conditional distribution of $\Delta_{1,h_{c, \beta}}$ given $\bm{Z}_{1:N}$, coincides with the distribution of
\begin{equation}
\label{eq:multi_jump}
\Delta_{1, h_{c, \beta}}^{-\sigma} \sim {\rm Gamma} (K_N+c+1, \beta+ \gamma_0^{(N)})
\end{equation}
where $\gamma_0^{(n)} = \sigma\sum_{1 \leq i \leq n} B(1-\sigma, i)$.  Moreover, the conditional distribution of $\bm{\zeta}$, given $\bm{Z}_{1:N}, \Delta_{1, h_{c, \beta}}$, coincides with the distribution of 
\begin{equation}
\label{eq_multi_posterior_scaled}
\bm{\zeta} | (\bm{Z}_{1:N}, \Delta_{1, h_{c, \beta}}) \stackrel{d}{=} \bm{\mu}_{\Delta_{1, h_{c, \beta}}}' +\sum_{i=1}^{K_N} \bm{J}_{i} \delta_{W_i^*}
\end{equation}
where:
\begin{itemize}
\item[i)] $\bm{\mu}_{\Delta_{1, h_{c, \beta}}}'$ is a discrete multivariate random measure with L\'evy intensity 
\begin{equation}
\label{eq:levy_multi}
\begin{split}
&\nu_{\Delta_{1, h_{c, \beta}}}' (\de s_1, \ldots , \de s_q , \de w)= \frac{\Gamma (|\bm{\gamma}|)}{\prod_{j=1}^q  \Gamma (\gamma_j)} \\
& \qquad \times
  |\bm{s}|^{-\sigma-|\bm{\gamma}|}  (1-\bm{s})^{N}\prod_{j=1}^q s_j^{\gamma_j-1}\ind_{[0,1]}(|\bm{s}| ) \sigma  \Delta_{1, h_{c, \beta}}^{-\sigma} \de s_1 \cdots \de s_q \, P(\de w);
\end{split}
\end{equation} 
\item[ii)]  $\bm{J}_{1:K_N}$ is a vector of independent random jumps such that each $\bm{J}_{ i} = 
(J_{ 1}, \ldots , J_{ q})$ has a Beta-Dirichlet distribution, i.e., 
\begin{equation}
\label{eq:multi_discrete_J_scaled}
\bm{J}_{i} | \Delta_{1, h_{c, \beta}}  \sim \mathscr{B}\mathscr{D}(m_i-\sigma, N-m_i+1 ; \bm{\gamma}+ \bm{m}_i)
\end{equation}
 where we put $\bm{m}_i := (m_{i,1}, \ldots , m_{i,q})$ and $m_i = \sum_{j=1}^q m_{i,j}= |\bm{m}_i|$ for any $i=1, \ldots , K_N$.
\end{itemize}
\end{theorem}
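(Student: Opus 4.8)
The plan is to follow the same two-stage strategy as in the proof of Theorem \ref{prop:stable_posterior}: first work conditionally on the largest jump $\largestjump{1}{h_{c, \beta}}=a$, where by construction $\bm{\mu}_{\largestjump{1}{h_{c, \beta}}}$ is a genuine multivariate CRM with L\'evy intensity \eqref{eq:scaled_SBP}, so that the multivariate posterior and marginal machinery of \cite[Section 5]{james2017bayesian} applies directly; and then marginalise the resulting conditional likelihood against the mixing density \eqref{eq:density_Dhcbeta} of $\largestjump{1}{h_{c, \beta}}$ via Bayes' theorem. This is exactly the template used in the univariate SB-SP case, and the only genuinely new ingredient is the presence of the condiment directions on the simplex $S_q$.

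For the posterior of $\largestjump{1}{h_{c, \beta}}$ in \eqref{eq:multi_jump}, I would write the conditional likelihood of $\bm{Z}_{1:N}$ given $\largestjump{1}{h_{c, \beta}}=a$ in the multivariate exchangeable-feature-probability form: a void factor $\exp\{-\Psi_N(a)\}$ with $\Psi_N(a)=\int_{S_q}[1-(1-|\bm{s}|)^N]\lambda_{(q),\largestjump{1}{h_{c, \beta}}}(\bm{s})\,\de s_1\cdots\de s_q$, times a product over the $K_N$ observed features of $\int_{S_q}\prod_{j=1}^q s_j^{m_{i,j}}(1-|\bm{s}|)^{N-m_i}\lambda_{(q),\largestjump{1}{h_{c, \beta}}}(\bm{s})\,\de s_1\cdots\de s_q$. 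The key device is the radial--Dirichlet change of variables $s_j=u v_j$ with $u=|\bm{s}|$, under which $\lambda_{(q),\largestjump{1}{h_{c, \beta}}}$ projects to the univariate stable radial intensity $\sigma a^{-\sigma}u^{-1-\sigma}\ind_{(0,1)}(u)$; this is precisely the statement, recorded just before the theorem, that $|\bm{\mu}_{\largestjump{1}{h_{c, \beta}}}|\sim\text{SB-SP}(\sigma,c,\beta)$. Carrying this out gives $\Psi_N(a)=a^{-\sigma}\stablenews{0}{N}$ and reduces each per-feature integral to $\sigma a^{-\sigma}B(m_i-\sigma,N-m_i+1)\,\frac{\Gamma(|\bm{\gamma}|)\prod_{j=1}^q\Gamma(\gamma_j+m_{i,j})}{\prod_{j=1}^q\Gamma(\gamma_j)\,\Gamma(|\bm{\gamma}|+m_i)}$. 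Since every condiment-dependent factor is free of $a$, the $a$-dependence collapses to $\exp\{-a^{-\sigma}\stablenews{0}{N}\}(a^{-\sigma})^{K_N}$, just as in the univariate computation, and multiplying by \eqref{eq:density_Dhcbeta} and normalising yields $\largestjump{1}{h_{c, \beta}}^{-\sigma}\mid\bm{Z}_{1:N}\sim\Gammad(K_N+c+1,\beta+\stablenews{0}{N})$, i.e.\ \eqref{eq:multi_jump}, in agreement with \eqref{eq:stable_largest_posteriormain}.

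For the posterior of $\bm{\zeta}$ in \eqref{eq_multi_posterior_scaled}, conditionally on $\largestjump{1}{h_{c, \beta}}=a$ I would invoke the multivariate CRM posterior of \cite[Section 5]{james2017bayesian} directly. The unobserved part $\bm{\mu}_{\largestjump{1}{h_{c, \beta}}}'$ inherits the prior intensity \eqref{eq:scaled_SBP} tilted by the void probability $(1-|\bm{s}|)^N$, which is exactly \eqref{eq:levy_multi}; and the fixed jump at each observed feature $W_i^*$ has density on $S_q$ proportional to $\prod_{j=1}^q s_j^{m_{i,j}}(1-|\bm{s}|)^{N-m_i}\lambda_{(q),\largestjump{1}{h_{c, \beta}}}(\bm{s})$, i.e.\ proportional to $|\bm{s}|^{-\sigma-|\bm{\gamma}|}(1-|\bm{s}|)^{N-m_i}\prod_{j=1}^q s_j^{\gamma_j+m_{i,j}-1}$. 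Reading off the exponents and matching against the Beta--Dirichlet density \eqref{eq:betaDir_finiteD} gives $\bm{J}_i\mid\largestjump{1}{h_{c, \beta}}\sim\mathscr{B}\mathscr{D}(m_i-\sigma,N-m_i+1;\bm{\gamma}+\bm{m}_i)$, which is \eqref{eq:multi_discrete_J_scaled}. Because this conditional posterior does not actually depend on the value $a$, no further marginalisation is required for this part.

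The only routine calculations are the simplex integrals, and the main obstacle is purely bookkeeping: correctly specialising the multivariate posterior and marginal formulas of \cite{james2017bayesian} to the scaled intensity \eqref{eq:scaled_SBP} and tracking which factors depend on $a$. The conceptual crux, and the step I would verify most carefully, is that the condiment-specific frequencies $\bm{m}_i$ enter the likelihood only through $a$-free constants, so that the posterior of $\largestjump{1}{h_{c, \beta}}$ depends on $\bm{Z}_{1:N}$ solely through $N$ and $K_N$ and reproduces the univariate result; the radial--Dirichlet decomposition, together with $|\bm{\mu}_{\largestjump{1}{h_{c, \beta}}}|\sim\text{SB-SP}(\sigma,c,\beta)$, is what makes this factorisation transparent.
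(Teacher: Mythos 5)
Your proposal follows exactly the route the paper indicates for this result (the paper itself omits the details, saying only that the theorem follows from \cite[Section 5]{james2017bayesian} conditionally on $\Delta_{1,h_{c,\beta}}$ and then marginalizing against the mixing density \eqref{eq:density_Dhcbeta}), and your filled-in computations --- the radial--Dirichlet factorization yielding $\Psi_N(a)=a^{-\sigma}\stablenews{0}{N}$, the $a$-free per-feature simplex integrals, the Gamma update for $\Delta_{1,h_{c,\beta}}^{-\sigma}$, the $(1-|\bm{s}|)^N$-tilted intensity \eqref{eq:levy_multi}, and the exponent-matching against \eqref{eq:betaDir_finiteD} for the jumps --- are all correct. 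One trivial caveat: your closing remark that ``no further marginalisation is required'' holds because the theorem states the law of $\bm{\zeta}$ conditionally on $\Delta_{1,h_{c,\beta}}$ (the CRM part \eqref{eq:levy_multi} does depend on $\Delta_{1,h_{c,\beta}}$ through $\Delta_{1,h_{c,\beta}}^{-\sigma}$; only the jump densities are $a$-free), but this does not affect the validity of the argument.
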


\begin{theorem} \label{thm:multi_predictive_scaled}
For any $N \geq 1$, let $\bm{Z}_{1:N}$ be a random sample modeled as the BNP simple multinomial process model \eqref{eq:model_multivariate}, with $\bm{\zeta} \sim 
\text{\rm S-mSBD} (\sigma, \bm{\gamma}; h_{c, \beta}) $. If $\bm{Z}_{1:N}$ displays $K_N=k$ distinct features, labeled by $W_1^*, \ldots  , W_{K_N}^*$, with condiment-specific frequencies $(M_{N,1,j}, \ldots , M_{N,K_N,j})= (m_{1,j}, \ldots , m_{k,j})$, for any $j=1, \ldots , q$, then the
 conditional distribution of $\Delta_{1,h_{c, \beta}}$ given $\bm{Z}_{1:N}$, coincides with \eqref{eq:multi_jump}. Moreover, the 
 conditional distribution of $\bm{Z}_{N+1}$, given $\bm{Z}_{1:N}, \Delta_{1, h_{c, \beta}}$, coincides with the distribution of 
\begin{equation}
\label{eq_multi_pred_scaled}
\bm{Z}_{N+1} | (\bm{Z}_{1:N}, \Delta_{1, h_{c, \beta}}) \stackrel{d}{=} \bm{Z}_{N+1}' +\sum_{i=1}^{K_N} \bm{A}_{N+1,i} \delta_{W_i^*}
\end{equation}
where:
\begin{itemize}
\item[i)] $\bm{Z}_{N+1}'$ is such that $\bm{Z}_{N+1}' | \Delta_{1, h_{c, \beta}}= \sum_{i \geq 1}\bm{A}_{N+1,i}' \delta_{W_i'} \sim \mathsf{MP} 
(\bm{\mu}_{\Delta_{1, h_{c, \beta}}}')$  and $\bm{\mu}_{\Delta_{1, h_{c, \beta}}}'| \Delta_{1, h_{c, \beta}}$ is the completely random measure having the L\'evy intensity \eqref{eq:levy_multi};
\item[ii)] $\bm{A}_{N+1, 1:K_N}$ is a collection of independent simple multinomial random variables with parameters
$\bm{J}_{1:K_N}$, each one distributed according to Equation \eqref{eq:multi_discrete_J_scaled}.
\end{itemize}
\end{theorem}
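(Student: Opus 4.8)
The plan is to deduce Theorem~\ref{thm:multi_predictive_scaled} directly from the posterior characterization of Theorem~\ref{thm:multi_posterior_scaled}, mirroring the way Proposition~\ref{prop:stable_predictive} is obtained from the SB-SP posterior in the univariate setting. Since the conditional law of $\largestjump{1}{h_{c,\beta}}$ given $\bm{Z}_{1:N}$ in \eqref{eq:multi_jump} is already supplied by Theorem~\ref{thm:multi_posterior_scaled}, it remains only to identify the conditional law of $\bm{Z}_{N+1}$ given $(\bm{Z}_{1:N},\largestjump{1}{h_{c,\beta}})$. The starting observation is that, by the construction of the scaled stable-Beta-Dirichlet process in Section~\ref{sec:multi_scaled}, conditionally on $\largestjump{1}{h_{c,\beta}}=a$ the scaled prior $\bm{\zeta}=\bm{\mu}_{\largestjump{1}{h_{c,\beta}}}$ is a genuine multivariate completely random measure with L\'evy intensity \eqref{eq:scaled_SBP}; this places us squarely within the framework of \citet[Section~5]{james2017bayesian}, so that no further marginalization over the mixing density \eqref{eq:density_Dhcbeta} is needed for the \emph{conditional} predictive statement.

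First I would invoke the model \eqref{eq:model_multivariate}, according to which $\bm{Z}_{N+1}\mid\bm{\zeta}\sim\mathsf{MP}(\bm{\zeta})$, and substitute the posterior representation \eqref{eq_multi_posterior_scaled} of $\bm{\zeta}$ given $(\bm{Z}_{1:N},\largestjump{1}{h_{c,\beta}})$, namely $\bm{\zeta}\stackrel{d}{=}\bm{\mu}_{\largestjump{1}{h_{c,\beta}}}'+\sum_{i=1}^{K_N}\bm{J}_i\delta_{W_i^*}$, where $\bm{\mu}_{\largestjump{1}{h_{c,\beta}}}'$ is the conditionally completely random measure with intensity \eqref{eq:levy_multi} and the $\bm{J}_i$ are the independent Beta--Dirichlet jumps \eqref{eq:multi_discrete_J_scaled} attached to the observed atoms $W_i^*$. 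The crucial structural fact I would then use is the superposition property of the simple multinomial process: a draw from $\mathsf{MP}(\cdot)$ assigns, independently across the atoms of its parameter measure, a simple multinomial indicator to each atom. Hence drawing $\bm{Z}_{N+1}$ from $\mathsf{MP}(\bm{\zeta})$ splits into independent contributions, the atoms of $\bm{\mu}_{\largestjump{1}{h_{c,\beta}}}'$ producing $\bm{Z}_{N+1}'\sim\mathsf{MP}(\bm{\mu}_{\largestjump{1}{h_{c,\beta}}}')$ (part~(i)), and each fixed atom $W_i^*$ producing $\bm{A}_{N+1,i}\sim\mathsf{M}(1,\bm{J}_i)$ (part~(ii)). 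Collecting these contributions yields exactly \eqref{eq_multi_pred_scaled}. This is the multivariate analogue---with Bernoulli indicators replaced by simple multinomial indicators---of the decomposition underlying Proposition~\ref{prop:general_pred}.

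The step I expect to require the most care is the justification that the simple multinomial process factorizes in this way across a parameter measure mixing a diffuse completely random part with finitely many atoms at prescribed locations, together with the verification that the ``new-feature'' intensity \eqref{eq:levy_multi} and the ``old-feature'' jump laws \eqref{eq:multi_discrete_J_scaled} are the correct posterior updates. Both are inherited, once we condition on $\largestjump{1}{h_{c,\beta}}$, from the multivariate completely random measure posterior of \citet[Section~5]{james2017bayesian}: the surviving CRM for unseen features is thinned by the factor $(1-|\bm{s}|)^{N}$ recording $N$ absences, producing \eqref{eq:levy_multi}, while the jump at an observed feature with condiment-specific counts $\bm{m}_i$ acquires a likelihood factor $\prod_{j=1}^{q}s_j^{m_{i,j}}(1-|\bm{s}|)^{N-m_i}$, which multiplied against \eqref{eq:scaled_SBP} gives precisely a $\mathscr{B}\mathscr{D}(m_i-\sigma,\,N-m_i+1;\,\bm{\gamma}+\bm{m}_i)$ density, matching \eqref{eq:multi_discrete_J_scaled}. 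Notably, no integration over the simplex $S_q$ enters the conditional predictive statement itself; such simplex integrals are needed only to recover the unconditional law \eqref{eq:multi_jump} of $\largestjump{1}{h_{c,\beta}}$ and the marginal distribution of the sample, which are already established in Theorem~\ref{thm:multi_posterior_scaled}.
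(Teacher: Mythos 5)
Your proposal is correct and follows essentially the same route as the paper, which proves Theorem~\ref{thm:multi_predictive_scaled} (with details omitted) by noting that, conditionally on $\Delta_{1,h_{c,\beta}}$, the scaled stable-Beta-Dirichlet process is a multivariate completely random measure, so the predictive characterization follows from \citet[Section~5]{james2017bayesian} together with the posterior representation in Theorem~\ref{thm:multi_posterior_scaled}. Your explicit verification that the thinning factor $(1-|\bm{s}|)^{N}$ yields \eqref{eq:levy_multi} and that the likelihood factor $\prod_{j=1}^{q}s_j^{m_{i,j}}(1-|\bm{s}|)^{N-m_i}$ tilts \eqref{eq:scaled_SBP} into the $\mathscr{B}\mathscr{D}(m_i-\sigma,\,N-m_i+1;\,\bm{\gamma}+\bm{m}_i)$ law of \eqref{eq:multi_discrete_J_scaled}, as well as your observation that no marginalization over the mixing density \eqref{eq:density_Dhcbeta} is needed for the conditional statement, simply fills in the details the paper leaves to the reader.
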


\begin{theorem}
\label{thm:multi_marginal_scaled}
For any $N \geq 1$, let $\bm{Z}_{1:N}$ be a random sample modeled as the BNP simple multinomial process model \eqref{eq:model_multivariate}, with $\bm{\zeta} \sim 
\text{\rm S-mSBD} (\sigma, \bm{\gamma}; h_{c, \beta}) $. The probability that  $\bm{Z}_{1:N}$ displays a particular feature allocation of $K_N=k$ distinct features with condiment-specific frequencies $(M_{N,1,j}, \ldots , M_{N,K_N,j})= (m_{1,j}, \ldots , m_{k,j})$, for any $j=1, \ldots , q$, equals
\begin{equation}
\label{eq:multi_marginal_scsled}
\begin{split}
p_k^{(N)} (\bm{m}_1, \ldots , \bm{m}_k)  &= \prod_{i=1}^k \left\{ B (m_i-\sigma, N-m_i+1)  \frac{\prod_{j=1}^q (\gamma_j)_{m_{i,j}}}{(|\bm{\gamma}|)_{m_i}}\right\}\\
& \qquad\qquad \times \frac{\Gamma (k+c+1)}{\Gamma (c+1)} \cdot \frac{\sigma^k \beta^{c+1}}{(\beta+\gamma_{0}^{(N)})^{k+c+1}} .
\end{split}
\end{equation}
\end{theorem}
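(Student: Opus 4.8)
The plan is to mirror the univariate derivation of Proposition~\ref{prop:stable_marginal}: condition on the largest jump $\Delta_{1,h_{c,\beta}}$, exploit the fact that the S-mSBD is then completely random, apply the multivariate exchangeable feature probability function of \cite{james2017bayesian}, and finally integrate out $\Delta_{1,h_{c,\beta}}$ against the tilted density \eqref{eq:density_Dhcbeta}. Throughout I abbreviate $\de\bm s=\de s_1\cdots\de s_q$ and $m_i=|\bm m_i|$.

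First I would condition on $\Delta_{1,h_{c,\beta}}=a$, so that $\bm\mu_{\Delta_{1,h_{c,\beta}}}\mid\{\Delta_{1,h_{c,\beta}}=a\}$ is a multivariate CRM with L\'evy intensity $\lambda_{(q),a}$ as in \eqref{eq:scaled_SBP}. The multivariate analogue of Proposition~\ref{prop:general_marginal}, obtained from \cite[Section~5]{james2017bayesian}, then gives the conditional feature-allocation probability
\begin{equation*}
p_k^{(N)}(\bm m_1,\ldots,\bm m_k\mid\Delta_{1,h_{c,\beta}}=a)=\exp\Big\{-\sum_{n=1}^N\phi_n(a)\Big\}\prod_{i=1}^k\int_{S_q}\prod_{j=1}^q s_j^{m_{i,j}}\,(1-|\bm s|)^{N-m_i}\,\lambda_{(q),a}(\bm s)\,\de\bm s,
\end{equation*}
where $\phi_n(a)=\int_{S_q}|\bm s|\,(1-|\bm s|)^{n-1}\,\lambda_{(q),a}(\bm s)\,\de\bm s$.

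The key computation is to collapse the integrals over $S_q$ to one-dimensional Beta integrals. Writing $s_j=u\theta_j$ with radial part $u=|\bm s|\in(0,1)$ and angular part $(\theta_1,\ldots,\theta_q)$ on the unit simplex (Jacobian $u^{q-1}$), the factor $|\bm s|^{-\sigma-|\bm\gamma|}\prod_j s_j^{\gamma_j-1}$ in $\lambda_{(q),a}$ separates the two. For the $i$-th feature integral the powers of $u$ collapse to $m_i-\sigma-1$, yielding $B(m_i-\sigma,N-m_i+1)$; the simplex integral $\int\prod_j\theta_j^{m_{i,j}+\gamma_j-1}\de\theta$ combines with the normalizing constant $\Gamma(|\bm\gamma|)/\prod_j\Gamma(\gamma_j)$ of \eqref{eq:scaled_SBP} to give $\prod_j(\gamma_j)_{m_{i,j}}/(|\bm\gamma|)_{m_i}$, while the prefactor $\sigma a^{-\sigma}$ survives. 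The same substitution applied to $\phi_n$ produces the power $-\sigma$ in $u$, hence $\phi_n(a)=\sigma a^{-\sigma}B(1-\sigma,n)$ and $\sum_{n=1}^N\phi_n(a)=a^{-\sigma}\stablenews{0}{N}$, exactly as in the univariate case; equivalently these are the Dirichlet integrals of \cite[Formula~4.635.2]{tables} already used in the proof of Theorem~\ref{thm:multi_condiment_prediction_NON-scaled}.

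Collecting terms, the conditional marginal equals $(\sigma a^{-\sigma})^k\exp\{-a^{-\sigma}\stablenews{0}{N}\}\prod_{i=1}^k B(m_i-\sigma,N-m_i+1)\,\prod_j(\gamma_j)_{m_{i,j}}/(|\bm\gamma|)_{m_i}$. Marginalizing against \eqref{eq:density_Dhcbeta} leaves the scalar integral
\begin{equation*}
\frac{\sigma^{k+1}\beta^{c+1}}{\Gamma(c+1)}\int_0^\infty a^{-\sigma(k+c+1)-1}\exp\{-a^{-\sigma}(\beta+\stablenews{0}{N})\}\,\de a,
\end{equation*}
and the substitution $x=a^{-\sigma}$ turns this into a Gamma integral equal to $\Gamma(k+c+1)/[\sigma(\beta+\stablenews{0}{N})^{k+c+1}]$; multiplying by the feature factors yields exactly \eqref{eq:multi_marginal_scsled}. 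The only delicate step---and the main obstacle---is the simplex integration: one must track the powers of $u$ and the $\gamma_j$-dependent Dirichlet normalizations so that the $|\bm\gamma|$-terms cancel and leave precisely $\prod_j(\gamma_j)_{m_{i,j}}/(|\bm\gamma|)_{m_i}$. Everything downstream (the Gamma integral in $a$ and the collection of constants) is identical to the univariate computation in Proposition~\ref{prop:stable_marginal}.
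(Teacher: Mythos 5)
Your proposal is correct and follows exactly the route the paper indicates for this result (Appendix \ref{sec:multi_posterior_analysis} states the proof proceeds by conditioning on $\Delta_{1,h_{c,\beta}}$, applying the multivariate results of \citet[Section 5]{james2017bayesian}, and marginalizing over the tilted density \eqref{eq:density_Dhcbeta}, with details omitted). Your filled-in details check out: the radial--angular decomposition of the $S_q$ integrals correctly yields the exponent $m_i-\sigma-1$ on the radial part and the Dirichlet factor $\prod_j(\gamma_j)_{m_{i,j}}/(|\bm{\gamma}|)_{m_i}$, the exponent $\phi_n(a)=\sigma a^{-\sigma}B(1-\sigma,n)$ matches \eqref{eq:phi_stable}, and the final Gamma integral in $x=a^{-\sigma}$ reproduces \eqref{eq:multi_marginal_scsled}.
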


\subsubsection{Estimation of the unseen features with a condiment}
\label{appendix_multi_estimation_scaled}

For any $N \geq 1$, let $\bm{Z}_{1:N}$ be an observable sample modeled as the simple multinomial model in \eqref{eq:model_multivariate}, with 
$\bm{\zeta} \sim \text{\rm S-mSBD} (\sigma, \bm{\gamma}; h_{c, \beta})$. Moreover, under the same model, for $M \geq 1$ let $\bm{Z}_{N+1:N+M}= (\bm{Z}_{N+1}, \ldots , \bm{Z}_{N+M})$ be an additional and unobserved sample. Under this model, we now determine the posterior distribution of the sample statistic  $U_{N,\ell}^{(M)}$ in 
\eqref{eq:new_condiment_def}, counting the number of hitherto unobserved feature with condiment $\ell \in \{ 1, \ldots , q\}$  that will be recorded out of the additional sample.
\begin{theorem}  \label{thm:multi_condiment_prediction-scaled}
For any $N \geq 1$, let $\bm{Z}_{1:N}$ be a random sample modeled as the BNP simple multinomial process model \eqref{eq:model_multivariate}, with $\bm{\zeta} \sim \text{\rm S-mSBD} (\sigma, \bm{\gamma}; h_{c, \beta})$. Suppose that $\bm{Z}_{1:N}$ displays $K_N=k$ distinct features with condiment-specific frequencies $(M_{N,1,j}, \ldots , M_{N,K_N,j})= (m_{1,j}, \ldots , m_{k,j})$, for any $j=1, \ldots , q$. Then, 
the posterior distribution of $U_{N,\ell}^{(M)}$, given $\bm Z_{1:N}$, coincides with the distribution of
\begin{equation}
\label{eq:posterior_condiment_j_scaled}
U_{N,\ell}^{(M)} | \bm Z_{1:N} \sim  {\rm NegativeBinomial}  \left( K_N+c+1, \frac{\psi_{N,\ell}^{(M)}}{\psi_{N,\ell}^{(M)} + \gamma_0^{(N)} +\beta}\right)
\end{equation}
where we defined
\[
\psi_{N,\ell}^{(M)}  := \sigma  \sum_{m=1}^M \binom{M}{m} (-1)^{m+1} \frac{(\gamma_\ell)_{m}}{(|\bm{\gamma}|)_{m}}  B (m-\sigma, N+1).
\]
\end{theorem}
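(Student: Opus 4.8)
The plan is to follow the same two-stage strategy used in the proofs of Theorem \ref{thm:stable_news} and Theorem \ref{thm:multi_condiment_prediction_NON-scaled}: compute the probability generating function (PGF) of $U_{N,\ell}^{(M)}$ first conditionally on the largest jump $\Delta_{1,h_{c,\beta}}$, and then marginalize over its posterior law \eqref{eq:multi_jump}. This works because, conditionally on $\Delta_{1,h_{c,\beta}}$, the scaled stable-Beta-Dirichlet process is an honest multivariate CRM, so the entire conditional computation is a verbatim rerun of the argument in Theorem \ref{thm:multi_condiment_prediction_NON-scaled}.

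First I would invoke the scaled predictive characterization of Theorem \ref{thm:multi_predictive_scaled}. Conditionally on $\bm{Z}_{1:N}$ and $\Delta_{1,h_{c,\beta}}$, the ``new'' part $\bm{Z}_{N+1}'$ is governed by the CRM $\bm{\mu}_{\Delta_{1,h_{c,\beta}}}'$ with Lévy intensity \eqref{eq:levy_multi}, whence
\[
U_{N,\ell}^{(M)} \mid (\bm{Z}_{1:N}, \Delta_{1,h_{c,\beta}}) \stackrel{d}{=} \sum_{i \geq 1} \ind\!\left( \sum_{m=1}^M A_{N+m,i,\ell}' > 0 \right),
\]
with the $A_{N+m,i,\ell}'$ independent Bernoulli$(\rho_{i,\ell}')$ given $\bm{\mu}_{\Delta_{1,h_{c,\beta}}}'$. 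Exactly as in Theorem \ref{thm:multi_condiment_prediction_NON-scaled}, the Laplace functional of $\bm{\mu}_{\Delta_{1,h_{c,\beta}}}'$ together with the binomial identity \eqref{eq:formula_bino} gives, with $\lambda'_{(q),\Delta_{1,h_{c,\beta}}}$ denoting the density in \eqref{eq:levy_multi},
\[
\E\!\left[ t^{U_{N,\ell}^{(M)}} \mid \bm{Z}_{1:N}, \Delta_{1,h_{c,\beta}} \right] = \exp\!\left\{ (1-t) \sum_{m=1}^M (-1)^m \binom{M}{m} \int_{S_q} s_\ell^m\, \lambda'_{(q),\Delta_{1,h_{c,\beta}}}(\bm{s})\, \de s_1 \cdots \de s_q \right\}.
\]
Second I would evaluate the simplex integrals by the same formula \citep[4.635.2]{tables} used in the non-scaled case, now with $\sigma$ in place of $\alpha$, the factor $\sigma\Delta_{1,h_{c,\beta}}^{-\sigma}$ in place of $\vartheta$, and mass term $(1-|\bm{s}|)^N$, obtaining
\[
\int_{S_q} s_\ell^m\, \lambda'_{(q),\Delta_{1,h_{c,\beta}}}(\bm{s})\, \de s_1 \cdots \de s_q = \sigma\,\Delta_{1,h_{c,\beta}}^{-\sigma}\, \frac{(\gamma_\ell)_m}{(|\bm{\gamma}|)_m}\, B(m-\sigma, N+1),
\]
so the conditional PGF collapses to $\exp\{-(1-t)\Delta_{1,h_{c,\beta}}^{-\sigma}\,\psi_{N,\ell}^{(M)}\}$ with $\psi_{N,\ell}^{(M)}$ as in the statement.

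Finally I would marginalize. By \eqref{eq:multi_jump} the posterior law of $\Delta_{1,h_{c,\beta}}^{-\sigma}$ given $\bm{Z}_{1:N}$ is $\Gammad(K_N+c+1, \beta+\gamma_0^{(N)})$, so the unconditional PGF is the same Gamma integral that closes the proof of Theorem \ref{thm:stable_news}:
\[
\G_{U_{N,\ell}^{(M)}}(t) = \left( \frac{\beta+\gamma_0^{(N)}}{\beta+\gamma_0^{(N)} + (1-t)\,\psi_{N,\ell}^{(M)}} \right)^{K_N+c+1},
\]
which I recognize as the PGF of a $\mathrm{NegativeBinomial}(K_N+c+1,\, \psi_{N,\ell}^{(M)}/(\psi_{N,\ell}^{(M)}+\gamma_0^{(N)}+\beta))$ law, as claimed. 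Honestly, there is no deep obstacle here: the argument is a transparent adaptation of the two earlier proofs, and the only genuinely new piece is the simplex integral. The main point requiring care is therefore bookkeeping the exponents correctly --- verifying that the factor $(1-|\bm{s}|)^N$ produces $B(m-\sigma, N+1)$ rather than the $B(m-\alpha, N+\alpha+\kappa)$ of the non-scaled case, and that the Dirichlet direction integrates cleanly to the Pochhammer ratio $(\gamma_\ell)_m/(|\bm{\gamma}|)_m$.
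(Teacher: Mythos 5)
Your proposal is correct and follows essentially the same route as the paper's own proof: conditioning on $\Delta_{1,h_{c,\beta}}$ to reduce to a multivariate CRM computation via Theorem \ref{thm:multi_predictive_scaled}, evaluating the conditional PGF through the Laplace functional with the binomial identity \eqref{eq:formula_bino} and the simplex integral \citep[Formula 4.635.2]{tables} to obtain $\exp\{-(1-t)\Delta_{1,h_{c,\beta}}^{-\sigma}\psi_{N,\ell}^{(M)}\}$, and then marginalizing against the posterior $\Gammad(K_N+c+1,\beta+\gamma_0^{(N)})$ law of $\Delta_{1,h_{c,\beta}}^{-\sigma}$ from \eqref{eq:multi_jump}. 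Your bookkeeping of the exponents, in particular that $(1-|\bm{s}|)^N$ yields $B(m-\sigma,N+1)$ in place of the non-scaled $B(m-\alpha,N+\alpha+\kappa)$, matches the paper exactly.
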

\begin{proof}
The proof is based on the posterior characterization in Theorem \ref{thm:multi_posterior_scaled} and on Theorem \ref{thm:multi_predictive_scaled}.
As in the proof of Theorem \ref{thm:multi_condiment_prediction_NON-scaled} we evaluate the probability generating function of the random variable $\unseen{N,\ell}{M}$, conditionally on the sample $\bm{Z}_{1:N}$. The probability generating function is denoted as usual by 
$\G_{U_{N,j}^{(M)} } (\, \cdot \, )$. Thanks to the characterization \eqref{eq_multi_pred_scaled}, conditionally on $\bm{Z}_{1:N}, \Delta_{1, h_{c, \beta}}$, the random variable
$U_{N,\ell}^{(M)}$ may be written as
\[
U_{N,\ell}^{(M)}  | (\bm{Z}_{1:N}, \Delta_{1, h_{c, \beta}}) \stackrel{d}{=}\sum_{i \geq 1} \ind \left( \sum_{m=1}^M  A_{m+N,i,\ell}' >0 \right).
\]
Fix $t$ in a neighborhood of the origin, then one has 
\begin{equation}  \label{eq:G_UNjM_scaled}
\G_{U_{N,\ell}^{(M)}}  (t)  =  \E \left[  t^{U_{N,\ell}^{(M)}}  \mid \bm{Z}_{1:N} \right]  
= \E \left[ \E  \left[  t^{U_{N,\ell}^{(M)}}  \mid \bm{Z}_{1:N}  , \Delta_{1, h_{c, \beta}} \right] \mid  \bm{Z}_{1:N}  \right]  
\end{equation}
by an application of the tower property. We  now focus on the evaluation of the inner expected value in \eqref{eq:G_UNjM_scaled}:
\begin{align*}
& \E  \left[  t^{U_{N,\ell}^{(M)}}  \mid \bm{Z}_{1:N}  , \Delta_{1, h_{c, \beta}} \right]  = \E \left[   t^{ \sum_{m=1}^M  A_{m+N,i,\ell}' }  \right]\\
 & \qquad = \E \left[ \prod_{i \geq 1} \E [1 \cdot \P (\sum_{m=1}^M A_{m+N,i,\ell}' =0)+ t \cdot \P (\sum_{m=1}^M  A_{m+N,i,\ell}'  >0)] \right].
\end{align*}
From Theorem \ref{thm:multi_predictive_scaled}, the $ A_{m+N,i,\ell}'$s are independent random variables as $m=1, \ldots ,M$, and  each one $A_{N+m,i,\ell}'$ is a Bernoulli with parameter $\rho_{i,\ell}'$, conditionally  on the random measure
$\bm{\mu}_{\Delta_{1, h_{c, \beta}}}'= \sum_{i \geq 1} \bm{\rho}_i' \delta_{W_i'}$ with L\'evy intensity \eqref{eq:levy_multi}.
As a consequence we obtain
\begin{align*}
 \E  \left[  t^{U_{N,\ell}^{(M)}}  \mid \bm{Z}_{1:N}  , \Delta_{1, h_{c, \beta}} \right] & = \E \left[ \prod_{i \geq 1} \left[ 
t + (1-t) (1-\rho_{i,\ell}')^M \right]\right].
\end{align*}
Proceeding along the same lines as in the proof of Theorem \ref{thm:multi_condiment_prediction_NON-scaled} we have that
\begin{align*}
 &\E  \left[  t^{U_{N,\ell}^{(M)}}  \mid \bm{Z}_{1:N}  , \Delta_{1, h_{c, \beta}} \right]  =  \E \left[ \exp \left\{ \sum_{i \geq 1}
 \log  (t+(1-t)(1-\rho_{i,\ell})^M) \right\} \right]\\
 & \qquad =  \exp \left\{ - (1-t)\int_{\mathds{W}} \int_{S_q}   [1- (1-s_\ell)^M] \nu_{\Delta_{1, h_{c, \beta}}}' (\de s_1, \ldots , \de s_q , \de w) \right\} .
\end{align*}
Now define 
\[
\lambda_{(q),\Delta_{1, h_{c, \beta}}}' (\bm{s}):= \frac{\Gamma (|\bm{\gamma}|)}{\prod_{j=1}^q  \Gamma (\gamma_j)} 
  |\bm{s}|^{-\sigma-|\bm{\gamma}|}  (1-\bm{s})^{N}\prod_{j=1}^q s_j^{\gamma_j-1}\ind_{[0,1]}(|\bm{s}| ) \sigma  \Delta_{1, h_{c, \beta}}^{-\sigma} 
\]
thus, the conditional expected value under study may be written as
\begin{align}
 &\E  \left[  t^{U_{N,\ell}^{(M)}}  \mid \bm{Z}_{1:N}  , \Delta_{1, h_{c, \beta}} \right] 
  =  \exp \left\{ - (1-t) \int_{S_q}   [1- (1-s_\ell )^M] \lambda_{(q),\Delta_{1, h_{c, \beta}}}' (\bm{s}) \de s_1, \ldots , \de s_q  \right\}  \nonumber \\
  & \qquad = \exp \left\{ - (1-t)   \sum_{m=1}^M   \binom{M}{m} (-1)^{m+1}\int_{S_q} s_\ell^m \lambda_{(q),\Delta_{1, h_{c, \beta}}}' (\bm{s}) \de s_1, \ldots , \de s_q \right\} \label{eq:integral_scaled}
\end{align} 
where we applied \eqref{eq:formula_bino}. The integral over $S_q$ appearing in \eqref{eq:integral_scaled} may be evaluated resorting to 
\cite[Formula 4.635.2]{tables}, therefore
\[
\begin{split}
&\int_{S_q} s_\ell^m \lambda_{(q),\Delta_{1, h_{c, \beta}}}'  (\bm{s}) \de s_1, \ldots , \de s_q  \\
& \qquad \qquad= 
\sigma \Delta_{1, h_{c, \beta}}^{-\sigma} \frac{\Gamma (|\bm{\gamma}|)}{\prod_{j=1}^q \Gamma (\gamma_j)}\int_{S_q}  s_\ell^m (1-|\bm{s}|)^{N}
|\bm{s}|^{-\sigma-|\bm{\gamma}|}\prod_{j =1}^q s_j^{\gamma_j-1}   \de s_1 \cdots \de s_q\\
&\qquad\qquad=  \sigma\Delta_{1, h_{c, \beta}}^{-\sigma}   \frac{(\gamma_\ell)_m}{(|\bm{\gamma}|)_m}  B (m-\sigma, N+1).
\end{split}
\]
Thus, by substituting the previous expression in \eqref{eq:integral_scaled} one obtains
\begin{equation}
\label{eq:multi_cond_scaled_U}
\E  \left[  t^{U_{N,\ell}^{(M)}}  \mid \bm{Z}_{1:N}  , \Delta_{1, h_{c, \beta}} \right] 
  = \exp \left\{ - (1-t)\Delta_{1, h_{c, \beta}}^{-\sigma} \psi_{N, \ell}^{(M)}   \right\}
\end{equation}
where we recall that $\psi_{N, \ell}^{(M)}$ has been defined as follows
\[
\psi_{N, \ell}^{(M)} = \sigma\sum_{m=1}^M   \binom{M}{m} (-1)^{m+1}    \frac{(\gamma_\ell)_m}{(|\bm{\gamma}|)_m}  B (m-\sigma, N+1).
\]
As a consequence,  the probability generating function in \eqref{eq:G_UNjM_scaled} equals
\begin{equation*}
\G_{U_{N,\ell}^{(M)}}  (t)  \stackrel{\eqref{eq:multi_cond_scaled_U}}{=}  \E \left[ \exp \left\{ - (1-t)\Delta_{1, h_{c, \beta}}^{-\sigma} \psi_{N, \ell}^{(M)}   \right\}\mid \bm{Z}_{1:N}\right].
\end{equation*}
The conclusion follows by a marginalization w.r.t. the posterior distribution of $\Delta_{1, h_{c, \beta}}^{-\sigma}$ which is a gamma random variable (see \eqref{eq:multi_jump}):
\begin{align*}
\G_{U_{N,\ell}^{(M)}}  (t)  & = \int_0^\infty e^{-(1-t) \psi_{N, \ell}^{(M)}x}  \cdot \frac{(\beta+\gamma_0^{(N)})^{K_N+c+1}}{\Gamma (K_N+c+1)}
x^{K_N+c} e^{-x (\gamma_0^{(N)}+\beta)} \de x\\
 & = \frac{(\beta+\gamma_0^{(N)})^{K_N+c+1}}{\Gamma (K_N+c+1)} \int_0^\infty e^{-[ (\gamma_0^{(N)}+\beta)+(1-t) \psi_{N, \ell}^{(M)}]x} 
x^{K_N+c+1-1} \de x\\
 & = \frac{(\beta+\gamma_0^{(N)})^{K_N+c+1}}{[ (\gamma_0^{(N)}+\beta)+(1-t) \psi_{N, \ell}^{(M)}]^{K_N+c+1}}\\
 & = \left( \frac{\beta+\gamma_0^{(N)}}{ \gamma_0^{(N)}+\beta+\psi_{N, \ell}^{(M)}-t \psi_{N, \ell}^{(M)}} \right)^{K_N+c+1}
\end{align*}
which is the probability generating function of a negative binomial distribution as in the statement.\\
\qed
\end{proof}
As a consequence of Theorem \ref{thm:multi_condiment_prediction-scaled}, the BNP estimator of $U_{N, \ell}^{(M)}$ under a squared loss function equals
\begin{equation}
\label{eq:mutil_NB_estimator}
\hat{U}_{N,\ell}^{(M)} =   (K_N+c+1)\frac{\psi_{N, \ell}^{(M)}}{\gamma_0^{(N)}+\beta} .
\end{equation}
For computational purposes, we finally note that the parameter $\psi_{N, \ell}^{(M)}$ in the posterior representations may be computed as
\[
\psi_{N, \ell}^{(M)}=  B (1-\sigma, N+1)  \E_{(X,Y)}\left[ \frac{1-(1-XY)^M}{Y} \right]
\]
where the expected value is made w.r.t. two independent random variables $X$ and $Y$ having beta distributions as follows
\[
X \sim {\rm Beta } (\gamma_\ell, |\bm{\gamma}|-\gamma_\ell) \quad  \text{and} \quad  Y \sim {\rm Beta} (1-\sigma, N+1).
\]

\section{Synthetic experiments from the model} \label{sec:app_synthetic_model}
We now analyze empirically the properties of the SB-SP-Bernoulli model used in Section \ref{sec:theory1}. We will use the acronym SSB for brevity in the captions. I.e., we consider the hierarchical model detailed in \eqref{exch_mod}, with $\mu\sim\text{\rm SB-SP}(\sigma,c,\beta)$. The predictive characterization detailed in Proposition \ref{prop:stable_predictive}, together with Equation \eqref{eq:stable_news}, provides an algorithm to sample $N$ observations from the model: given $\beta>0, \sigma \in (0,1), c>0$,
\begin{itemize}
    \item at every step $n = 1, \ldots, N$, conditionally on the previous $n-1$ samples $Z_{1:n-1}$ showing $K_{n-1}$ distinct features, each feature $k = 1,\ldots,m_{K_{n-1}}$ with frequency $m_k$, sample 
    \begin{itemize}
        \item a random number of new features observed:
            \[
                U_{n-1}^{(1)} \mid Z_{1:n-1} \sim \NegBin \left(K_{n-1}+c+1, \frac{\stablenews{n-1}{1}}{\beta + \stablenews{0}{n}}\right);
            \]
        \item for previously observed feature $i = 1,\ldots,K_{n-1}$:
            \[
                A_{n,i} \mid Z_{1:n-1} \sim \Bern\left(m_i - \sigma, N - m_i + 1\right);
            \]
    \end{itemize}
\end{itemize}
In particular, for the first step, $K_0 = 0$.

\subsection{Predictive behavior of the number of new features from the prior}

First, we investigate the predictive behavior of the model as we vary the hyperparameters of the process --- $\beta, \sigma, c$. Because our interest is in understanding the coverage properties of the posterior predictive distribution induced by the model, we report, together with the predictive mean, also posterior predictive credible intervals. In this first set of simulations reported in \Cref{sec:app_synthetic_sigma,sec:app_synthetic_c,sec:app_synthetic_beta}, we assume the hyperparameter $\sigma, c, \beta$ to be known.

\subsubsection{The role of $\sigma$} \label{sec:app_synthetic_sigma}

We start by analyzing the role of $\sigma$ in \Cref{fig:app_synth_sigma}. As suggested by the asymptotic behavior analyzed in Theorem \ref{thm:stable_conv}, $\sigma$ directly controls the asymptotic rate of growth of the number of distinct features: as $\sigma$ increases, the expected number of variants increases, approaching a linear behavior as $\sigma \to 1$. We notice that this behavior is reminiscent of the tail parameter of the stable beta-Bernoulli process \citep{teh2009indian,broderick2012beta}.

\begin{figure}
    \centering
      \centering \includegraphics[width=\textwidth,height=\textheight,keepaspectratio]{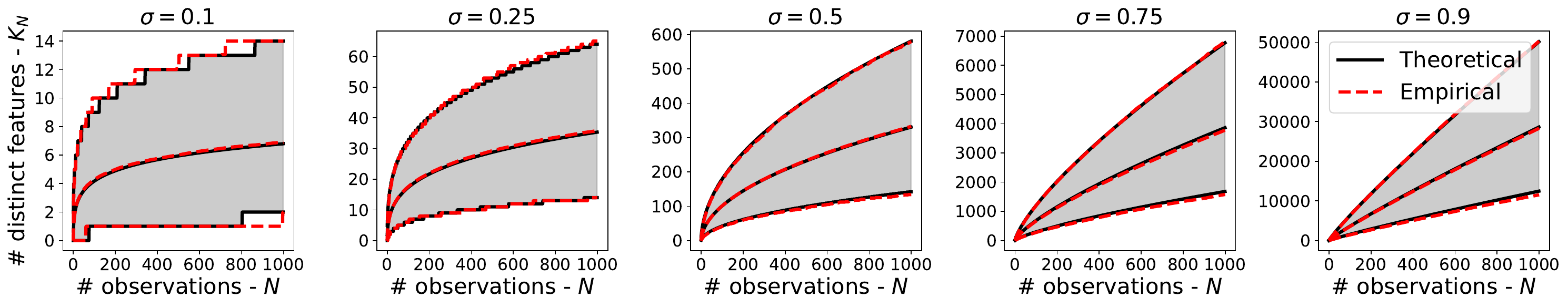}
    \caption{\footnotesize{$90\%$ centered credible interval for the number of distinct features $K_N$ ($y$-axis) as a function of the sample size $N$ ($x$-axis). We fix $\beta = 1$, $c = 5$, and vary $\sigma$ across subplots. For the $5\%, 50\%, 95\%$ quantiles, we compare the theoretical value (solid black lines) to empirical result  (dashed red lines), obtained by drawing $N_{MC} = 1000$ different datasets with the same parameter specification.}}
    \label{fig:app_synth_sigma}
\end{figure}

\subsubsection{The role of c} \label{sec:app_synthetic_c}

We now move to the analysis of the polynomial tilting parameter, $c$. As suggested by the predictive distribution given in \Cref{eq:stable_news}, $c$ acts as a ``prior'' number of features. That is, in the prior, the expected number of features to be observed from $N$ samples is a Negative Binomial random variable with parameters $c+1, \stablenews{0}{N}/(\beta+\stablenews{0}{N})$, i.e.\ with expectation given by 
\[
    \E[U_0^N] = (c+1) \left( \frac{\stablenews{0}{N}}{\beta} \right).
\]
Again, larger values of $c$ induce a higher rate of growth in the number of features, as showed in Figure \ref{fig:app_synth_tilting}.
\begin{figure}
    \centering
      \centering \includegraphics[width=\textwidth,height=\textheight,keepaspectratio]{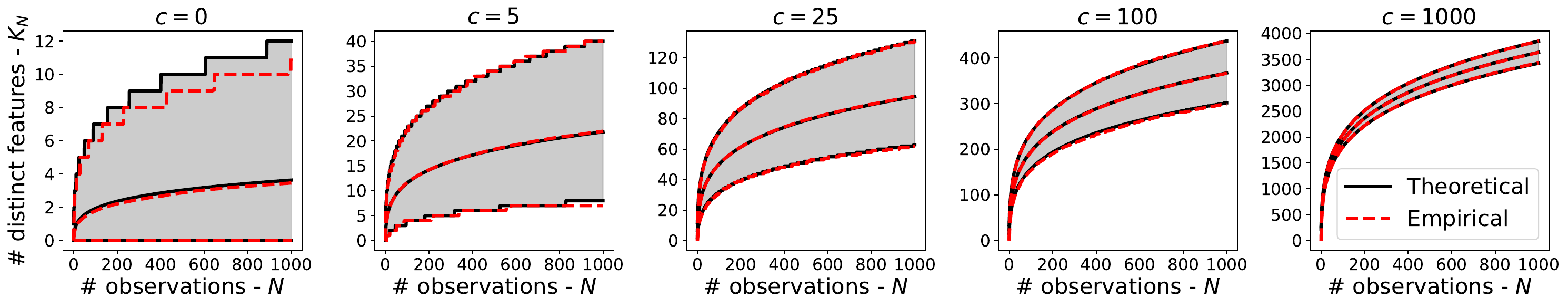}
    \caption{\footnotesize{$90\%$ centered credible interval for the number of distinct features $K_N$ ($y$-axis) as a function of the sample size $N$ ($x$-axis). We repeat the same experiments as in \Cref{fig:app_synth_sigma}, but now fix $\beta = 1$, $\sigma = 0.2$, and vary $c$ across subplots.}}
    \label{fig:app_synth_tilting}
\end{figure}

\subsubsection{The role of $\beta$} \label{sec:app_synthetic_beta}

Last, we analyze the role of the exponential tilting parameter, $\beta$. Inspecting again the predictive distribution \Cref{eq:stable_news}, $\beta$ affects the number of new variants thorugh the success probability of the negative binomial --- for fixed $c,\sigma,N,M,Z_{1:N}$, the expected number of new variants $U_N^{(M)}\mid Z_{1:N}$ depends inversely on the parameter $\beta$. We verify this empirically in \Cref{fig:app_synth_beta}.
\begin{figure}
      \centering \includegraphics[width=\textwidth,height=\textheight,keepaspectratio]{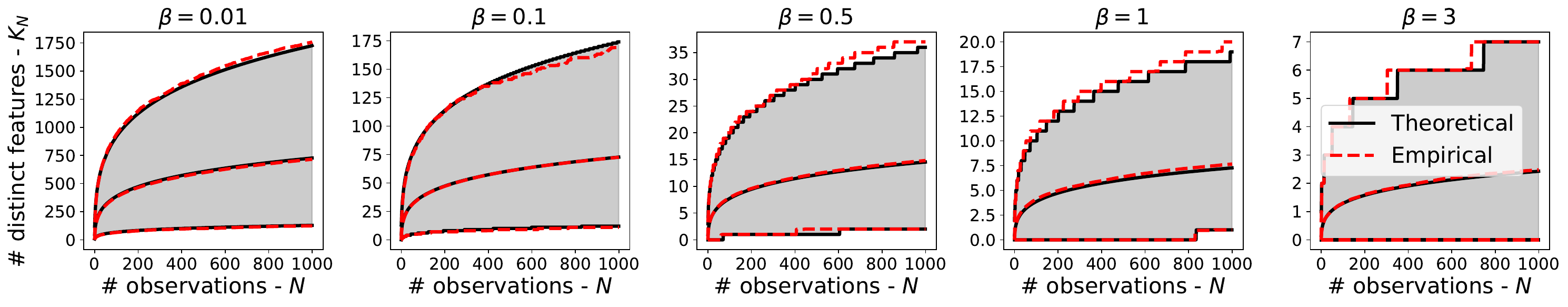}
    \caption{\footnotesize{$90\%$ centered credible interval for the number of distinct features $K_N$ ($y$-axis) as a function of the sample size $N$ ($x$-axis). We repeat the same exercise as in Figures \ref{fig:app_synth_sigma}-\ref{fig:app_synth_tilting} but now $c = 1$, $\sigma = 0.2$, and vary $\beta$ across subplots.}}
    \label{fig:app_synth_beta}
\end{figure}
\subsection{Predictive behavior of the number of new features from the posterior}

Next, we perform a slightly different exercise from the one described above. We still assume the parameters to be known, and we investigate how the posterior predictive behavior varies as we change the number of training samples $N$ with respect to a total sampling ``capacity'' $L$. Intuitively, for a fixed value of this ``sampling capacity'', $N+M=L$, the expected number of observed features from the model should be the independent of the choice of $N, M$. However, we expect the distribution (e.g., the posterior variance), to concentrate as $N$ increases relative to $M$. To perform this experiment, we do as follow: we fix $\beta, c, \sigma$ and, for each $\ell = 1,\ldots, 2000$, we let $K_\ell = U_0^{(\ell)}$. Next, for $N \in \{50,100,500,1000\}$, we compute $U_N^{(M)} \mid Z_{1:N}$, where we condition on the number of observed variants as given by the curve $\{K_\ell\}_{\ell = 1,\ldots,2000}$. As displayed in  \Cref{fig:app_synth_width} and \Cref{fig:app_synth_width_2}, the width of the credible intervals shrinks with increasing training sizes $N$.
\begin{figure}
      \centering \includegraphics[width=\textwidth,height=\textheight,keepaspectratio]{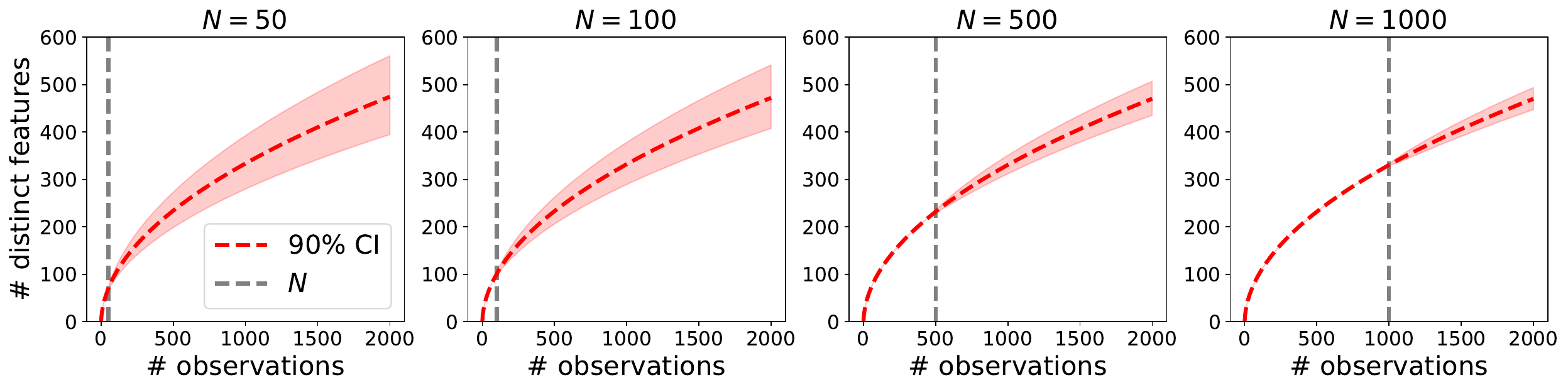}
    \caption{\footnotesize{$90\%$ centered credible interval for the expected number of distinct features $\E[U_N^{(M)} ~\mid~ Z_{1:N}]$ ($y$-axis) as a function of the sample size $N$ ($x$-axis). We fix $\beta = 1$, $c = 5$, $\sigma = 0.5$, and total sequencing capacity $L = 2000$. In different subplots, we show $\E[U_N^{(M)} \mid Z_{1:N}]$ for different values of $N$. Here, the first $N$ samples display exactly $K_N = \E[U_0^{(N)}]$ distinct features.}}
    \label{fig:app_synth_width}
\end{figure}

\begin{figure}
      \centering \includegraphics[width=\textwidth,height=\textheight,keepaspectratio]{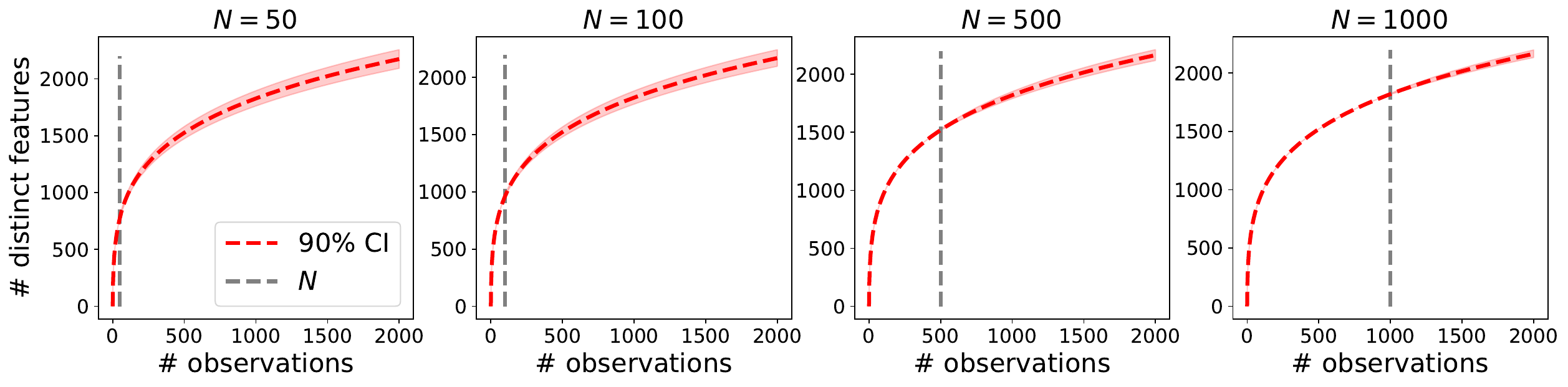}
    \caption{\footnotesize{$90\%$ centered credible interval for the expected number of distinct features $\E[U_N^{(M)} ~\mid~ Z_{1:N}]$ ($y$-axis) as a function of the sample size $N$ ($x$-axis). We repeat the same exercise as in \Cref{fig:app_synth_width} but now fix $\beta = 2$, $c = 1000$, $\sigma = 0.2$.}}
    \label{fig:app_synth_width_2}
\end{figure}

\subsection{Estimation of the parameters}

Next, we move to the more interesting scenario in which the parameters are unknown and need to be inferred from the data. The natural way to estimate the unknown parameters is to maximize a likelihood criterion, such as the marginal distribution of the feature counts $m_1,\ldots,m_K$, given in Equation \eqref{eq:stable_marginal}. We found this method to work well both on real data, as displayed in Section \ref{sec:exp}, and on synthetic data. We here report some results in \Cref{fig:app_synth_learned_1,fig:app_synth_learned_2}. In general, and not surprisingly, the precision of our estimates increases with larger sample sizes.
\begin{figure}
      \centering \includegraphics[width=\textwidth,height=\textheight,keepaspectratio]{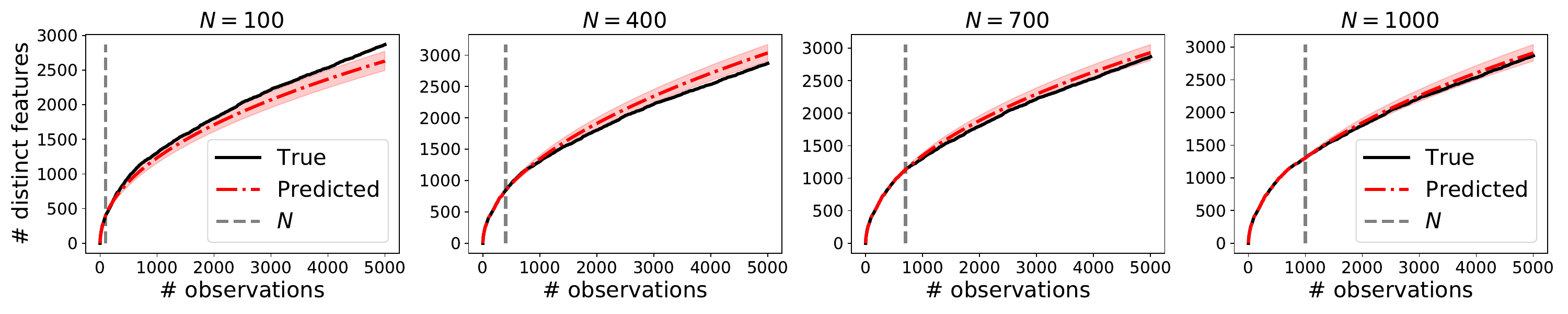}
    \caption{\footnotesize{$99\%$ credible interval centered around the posterior predictive mean (dashed red line) of the number of distinct features $U_N^{(M)} \mid Z_{1:N}$ ($y$-axis) as a function of the sample size $N$ ($x$-axis). We fix $\beta = 1$, $c = 20$, $\sigma = 0.5$, and learn the parameters for different training size $N \in \{100,400,700,1000\}$ across subplots for a total sequencing capacity $L = 5000$.}}
    \label{fig:app_synth_learned_1}
\end{figure}

\begin{figure}
      \centering \includegraphics[width=\textwidth,height=\textheight,keepaspectratio]{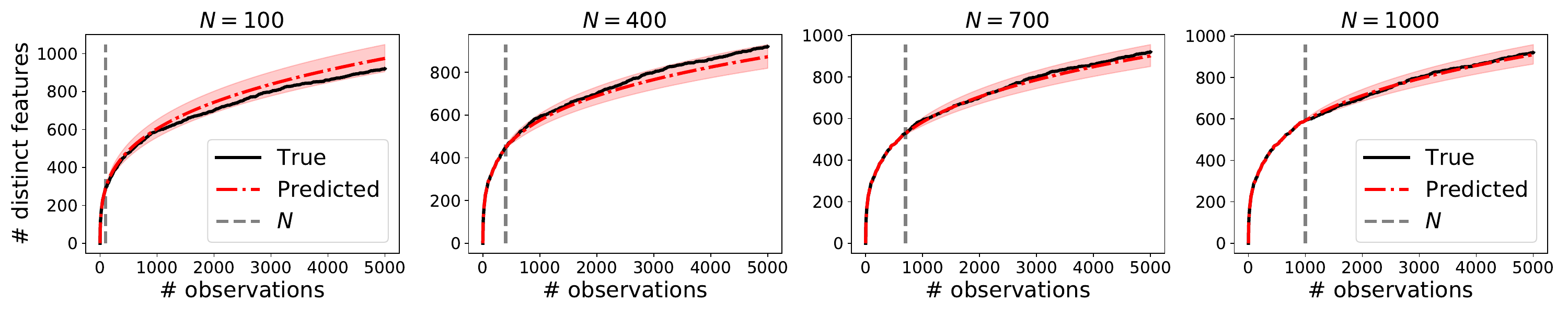}
    \caption{\footnotesize{$99\%$ credible interval centered around the posterior predictive mean (dashed red line) of the number of distinct features $U_N^{(M)} \mid Z_{1:N}$ ($y$-axis) as a function of the sample size $N$ ($x$-axis). We repeat the same experiment as in \Cref{fig:app_synth_learned_1}, but now for $\beta = 1$, $c=100$, $\sigma = 0.25$.}}
    \label{fig:app_synth_learned_2}
\end{figure}

In our synthetic experiments, as expected, the values maximizing the marginal likelihood converge to the underlying true values of the data generating process as the sample size $N \to \infty$. By performing a visual investigation, we find that indeed the negativd marginal likelihood is a convex function in each argument, with a unique, well-defined minimum (see \Cref{fig:app_synth_learned_3,fig:app_synth_learned_4,fig:app_synth_learned_5}).

\begin{figure}
      \centering \includegraphics[width=\textwidth,height=\textheight,keepaspectratio]{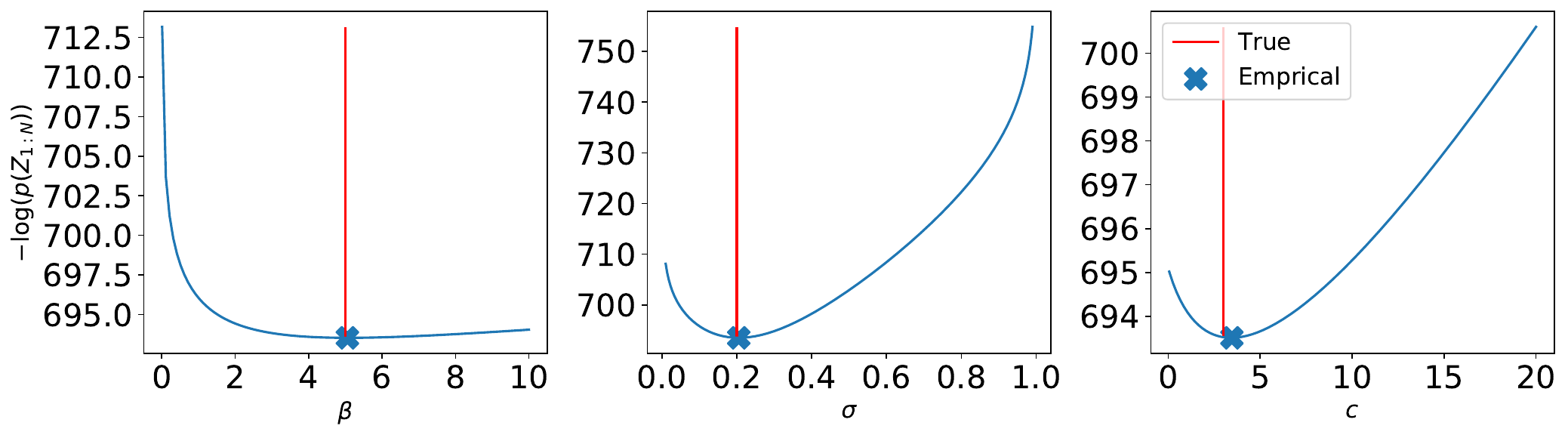}
    \caption{\footnotesize{We draw a synthetic dataset of size $N = 10{'}000$ from a SSB with parameters $ \beta = 5, \sigma = 0.1, c = 3$. In the left subplot, we plot the value of the negative marginal likelihood (vertical axis) as we vary the value of $\beta$ (horizontal axis), keeping $\sigma=0.1$ and $c=3$ fixed at the true value. We repeat the same procedure, now varying $\sigma$ and keeping $\beta=5$, $c=3$ fixed at their true value in the central subplot. Last, in the right subplot, we inspect the marginal likelihood as we vary the value of $c$, keeping $\beta=  5$, $\sigma = 0.1$. We then minimize numerically the negative log-likelihood, and report in each subplot with a blue cross the numerical value of the corresponding hyperparameter (horizontal axis) together with the corresponding marginal likelihood value (vertical axis).}}
    \label{fig:app_synth_learned_3}
\end{figure}

\begin{figure}
      \centering \includegraphics[width=\textwidth,height=\textheight,keepaspectratio]{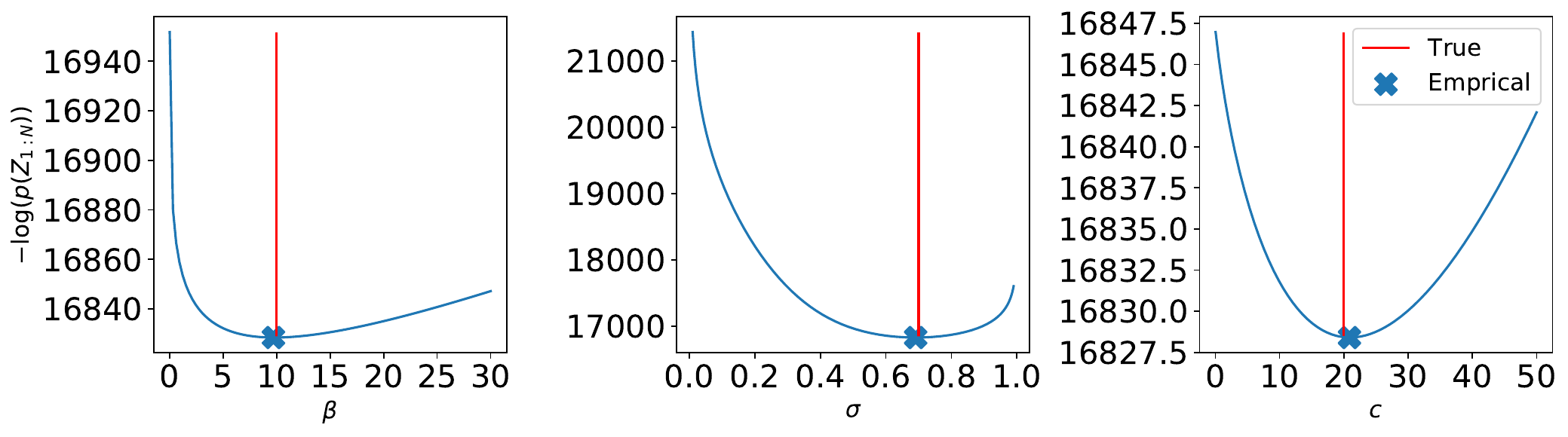}
    \caption{\footnotesize{We repeat the same exercise of \Cref{fig:app_synth_learned_3} for $N=1{'}000$, $\beta = 10$, $\sigma = 0.7$, $c = 20$.}}
    \label{fig:app_synth_learned_4}
\end{figure}

\begin{figure}
      \centering \includegraphics[width=\textwidth,height=\textheight,keepaspectratio]{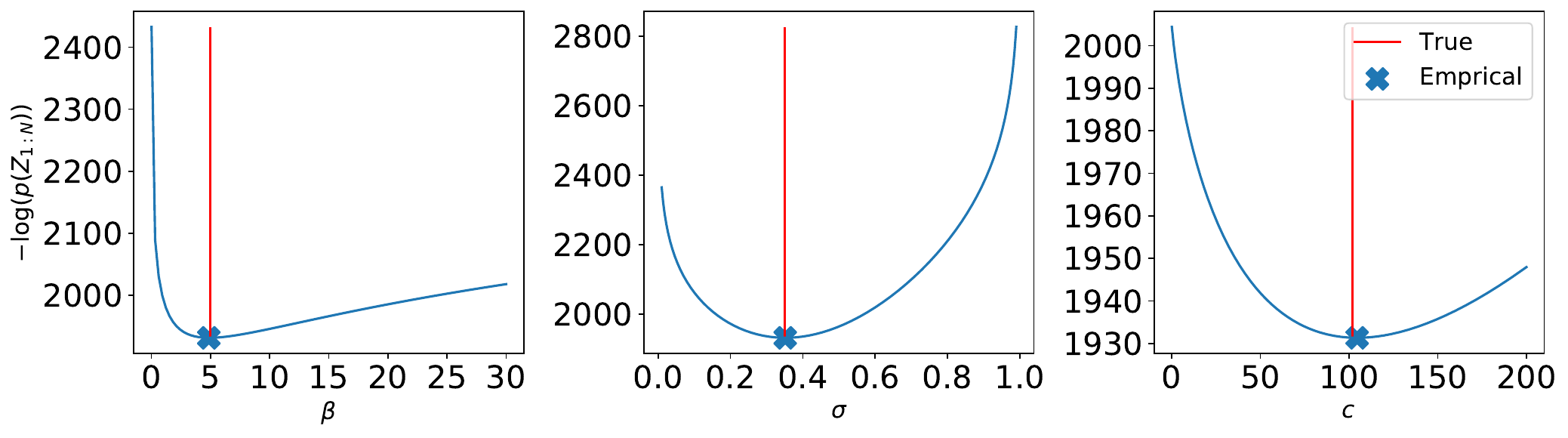}
    \caption{\footnotesize{We repeat the same exercise as in \Cref{fig:app_synth_learned_3,fig:app_synth_learned_4} for $N=100$, $\beta =5 $, $\sigma = 0.35$ and $c = 102$.}}
    \label{fig:app_synth_learned_5}
\end{figure}

When most of the features are very rare (e.g., they appear once or twice in the sample), we found that an alternative empirical Bayes approach, akin to the one adopted in \citet{masoero2019more}, worked better, as further discussed in \Cref{sec:app_synthetic_zipf}.

\section{Synthetic experiments from Zipf distributions} \label{sec:app_synthetic_zipf}
To compare the predictive performance of the SB-SP-Bernoulli process proposed in \Cref{sec:stable} to existing competing methods, we also consider synthetic data from Zipf-distributed frequencies (see \Cref{fig:app_zipf_freqs}). That is to say, we imagine that there exists a countable number of features in the population, and that, for some $\xi>0$, feature $k$ is observed independently of any other feature with probability $\pi_k = (k+1)^{-\xi}$. An observation $X_\ell$ then a binary vector, in which, conditionally on the frequencies $\pi = (\pi_1,\pi_2,\ldots)$ the $k$-th coordinate is a Bernoulli random variable:
\begin{align}
   X_{\ell,k} \mid \pi \overset{i.i.d.}{\sim} \Bern(\pi_k), \quad \pi = \{(k+1)^{-\xi}\}_{k\ge 1}. \label{eq:zipf_model}
\end{align}
We perform our simulations as follows: we fix a total sequencing capacity of $L = 2000$, and draw $L$ i.i.d.\ samples from the model, following the recipe given in \Cref{eq:zipf_model}. For simulation purposes, we only consider the first $K = 10^5$ features to have non-zero probability, i.e.\, $\pi_k = 0,$ for all $k > K$. 
\begin{figure}
      \centering \includegraphics[width=\textwidth,height=\textheight,keepaspectratio]{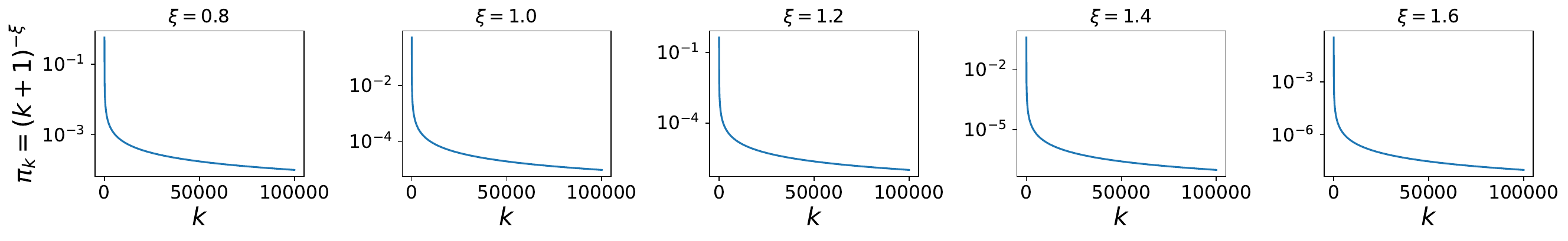}
    \caption{\footnotesize{Frequencies distribution for different choices of the parameter $\xi$.}}
    \label{fig:app_zipf_freqs}
\end{figure}
We compare the estimates of our proposed SB-SP-Bernoulli model (\Cref{sec:stable}), to the stable beta-Bernoulli process [3BP], the linear program of \citet{zou2016quantifying}, the first four orders of the Jackknife estimator originally proposed in \citet{burnham1978estimation} and recently employed in the genomics context by \citet{gravel2014predicting}, and the Good-Toulmin estimator, recently used in \citet{chakraborty2019somatic}, with the two alternative smoothing choices described in \citet{orlitsky2016optimal}. Estimates for Bayesian methods are obtained by using the posterior predictive mean for the number of new variants conditionally on the observed sample, with hyper-parameters learned by numerically maximizing the marginal distribution (EFPF) of the features counts, as described in \Cref{sec:learning_EFPF}.

As expected, we find the nonparametric Bayesian estimators to do particularly well for larger values of the exponent $\xi$ --- that is when most features are exceedingly rare. The SB-SP-Bernoulli and the SB-SP-Bernoulli-parameter beta-Bernoulli processes performed comparably on these datasets, both in terms of estimation accuracy and uncertainty quantification, as displayed in \Cref{fig:app_zipf_BNP_preds}.
\begin{figure}
      \centering \includegraphics[width=\textwidth,height=\textheight,keepaspectratio]{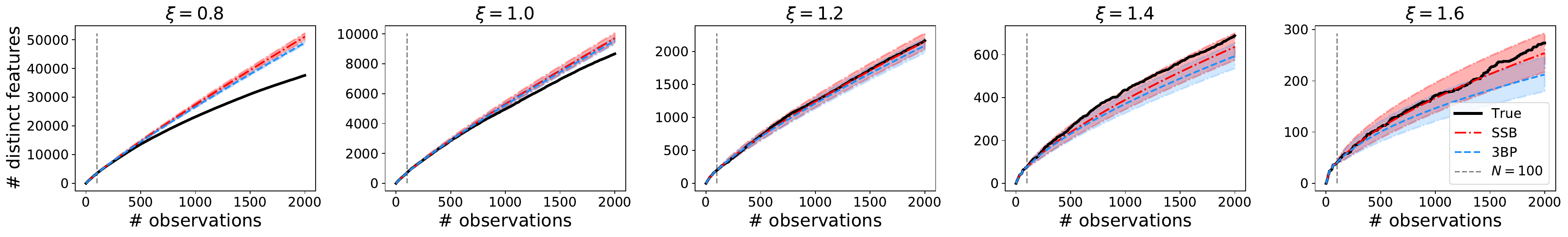}
    \caption{\footnotesize{Estimates for the number of new features for the SB-SP-Bernoulli (red) and the stable beta-Bernoulli (blue) processes as the exponent $\xi$ varies across subplots. Shaded regions cover a 95\% credible interval around the predictive mean. The solid black line represents the true counts. Here, the training is done using the first $N=100$ observations, and extrapolating up to the remaining $M=1900$ observations.}}
    \label{fig:app_zipf_BNP_preds}
\end{figure}

To better asses the predictive quality of the different methods, we ran extensive simulation experiments; for each value of $\xi \in \{0.8, 1, 1.2, 1.4, 1.6\}$, we generated $S = 100$ datasets of size $L=2000$, and for each value of $N \in \{ 10, 50, 100, 200 \}$ we trained each method, and extrapolated to predict the number of new variants to be observed up to $M = L-N \in \{1990, 1950, 1900, 1800\}$ remaining samples. We report as measure of accuracy the percentage accuracy incurred by each estimation method $v_{N,a}^{(M)}$, defined in \Cref{eq:percentage_accuracy}, at the largest extrapolation level $M = L-N$, across different values of $N$ and all $S = 100$ simulation studies. Results are reported via boxplots in \Cref{fig:app_zipf_all_methods_1,fig:app_zipf_all_methods_2,fig:app_zipf_all_methods_3}. While all methods improve their performance with larger sample sizes, we find that the BNP estimators (SSP, 3BP) provide relatively more accurate results for smaller samples sizes (e.g., $N=10, N=50$ in \Cref{fig:app_zipf_all_methods_1,fig:app_zipf_all_methods_2}). The performance of the BNP methods exceed those of competing methods for larger values of the exponent ($\xi \in \{ 1.2, 1.4, 1.6 \}$), while higher order Jackknife and linear programs tend to do better for smaller values of the exponent ($\xi \in \{ 0.8, 1\}$).
\begin{figure}
      \centering \includegraphics[width=\textwidth,height=\textheight,keepaspectratio]{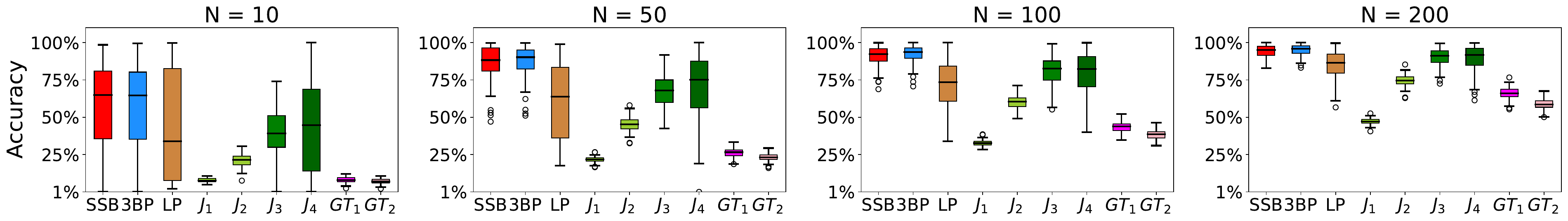}
    \caption{\footnotesize{Accuracy of the competing methods (SB-SP-Bernoulli [SSP], stable beta-Bernoulli [3BP], Jackknife [J], linear program [LP], Good-Toulmin [GT]) on simulated data from a Zipf model (\Cref{eq:zipf_model}) with parameter $\xi = 1.2$.  For $L = 2000$, we report  $v_{N,a}^{(M)}$ as $N$ increases, for $M=L-N$. For each $N$, results across $S = 100$ datasets are reported in the boxplots.}}
    \label{fig:app_zipf_all_methods_1}
\end{figure}
\begin{figure}
      \centering \includegraphics[width=\textwidth,height=\textheight,keepaspectratio]{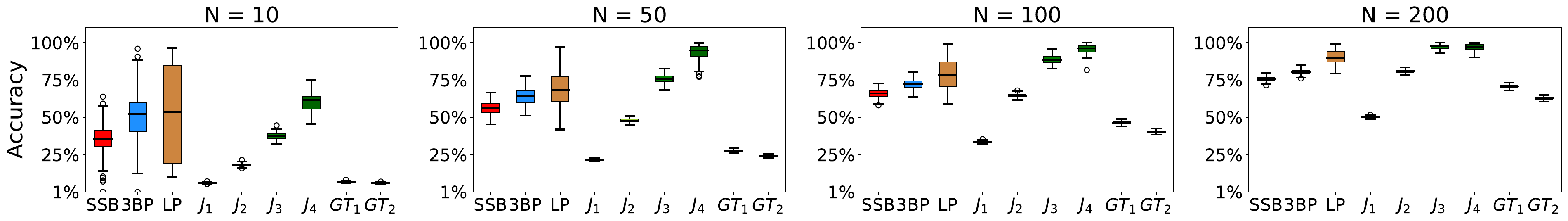}
    \caption{\footnotesize{Accuracy of the competing methods (SB-SP-Bernoulli [SSP], stable beta-Bernoulli [3BP], Jackknife [J], linear program [LP], Good-Toulmin [GT]) on simulated data from a Zipf model (\Cref{eq:zipf_model}) with parameter $\xi = 0.8$.  For $L = 2000$, we report  $v_{N,a}^{(M)}$ as $N$ increases, for $M=L-N$. For each $N$, results across $S = 100$ datasets are reported in the boxplots.}}
    \label{fig:app_zipf_all_methods_2}
\end{figure}
\begin{figure}
      \centering \includegraphics[width=\textwidth,height=\textheight,keepaspectratio]{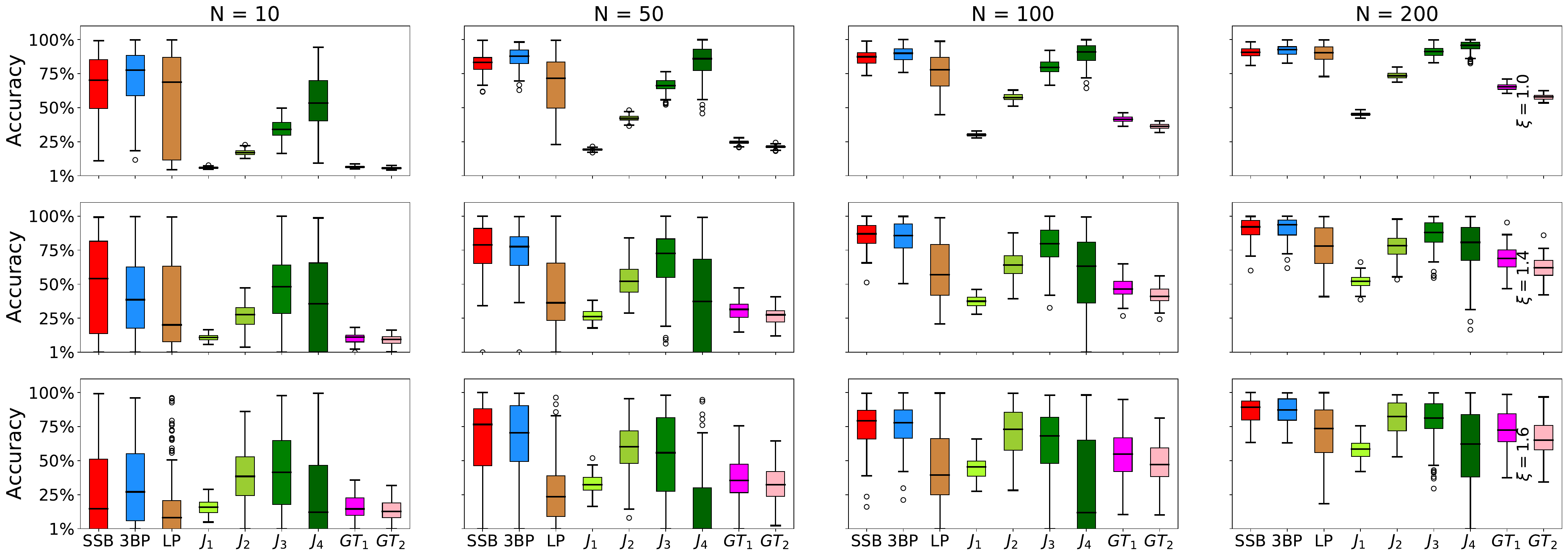}
    \caption{\footnotesize{Accuracy of the competing methods (SB-SP-Bernoulli [SSP], stable beta-Bernoulli [3BP], Jackknife [J], linear program [LP], Good-Toulmin [GT]) on simulated data from a Zipf model (\Cref{eq:zipf_model}) with parameter $\xi \in \{1, 1.4, 1.6\}$ (top row, center row, bottom row).  For $L = 2000$, we report  $v_{N,a}^{(M)}$ as $N$ increases, for $M=L-N$. For each $N$, results across $S = 100$ datasets are reported in the boxplots.}}
    \label{fig:app_zipf_all_methods_3}
\end{figure}

\section{Additional experiments on the gnomAD dataset} \label{sec:gnomAD}

\subsection{Experimental setup}
\label{sec:exp_setup_gnomAD}

In order to run our experiments, we use data from the gnomAD (genome aggregation dataset) discovery project \citep{karczewski2019variation}, the largest and most comprehensive publicly available human genome dataset. We follow the same experimental setup adopted in \citet{masoero2019more}. We briefly summarize this setup in this section. The gnomAD dataset contains 125{'}748 exomes sequences (i.e.\, protein-coding regions of the genome), from 8 main populations. Sample size varies widely across sub populations, e.g.\ the ``Other'' subgroup counts about 3{'}000 observations, while ``South Easy Asian'' contains almost 16{'}000 individuals (see \citet{karczewski2019variation} for additional details).

For privacy reasons not all individual sequences are accessible. Hence, in order to run our analysis we generate synthetic data which closely resembles the true data as follows. For every subpopulation with $N$ individuals and every position $j=1,\ldots,K$ in the exome, we have access to the total number of individuals $N_j$ showing variation at position $j$. We compute the empirical frequency of variation at site $j$, $\hat{\theta}_j := N_j/N$ for all $j=1,\ldots,K$. Our data is then generated by sampling independent Bernoulli random vectors $X_1,\ldots,X_N$, with $X_n = [x_{n,1},\ldots,x_{n,K}]$. The entries in the vector are independent Bernoulli random variables, $x_{n,j} \sim \Bern(\hat{\theta}_j)$. 

\subsection{Results from the gnomAD data}

For each of eight subpopulations in the data, we performed the following experiment. Let  $\hat{\theta}=[\hat{\theta}_1,\ldots,\hat{\theta}_{K_{\max}}] \subseteq[0,1]$ denote the ``genetic signature'' of the population, with $\hat{\theta}_k=N_k/N$, with $N_{tot}$ the total number of individuals in the population and $N_k$ the number of individuals in the population displaying such variant, $1\le N_k \le N_{tot}$. Then, for each population, we generate $S=50$ datasets by drawing $N_{tot}$ i.i.d.\ binary random vectors of length $K_{\max}$ as described above, with biases given by $\hat{\theta}$. We then retain for each dataset $N\in\{50,100\}$ observations for training, and try to predict the number of new variants that are going to be observed if we were to sample additional $M = N_{tot} - N$ observations.

In a nutshell, also on this data, the findings are similar to the results obtained on the MSK-IMPACT cancer data. In particular, we find that when the sample size $N$ is small, the proposed SB-SP Bernoulli model leads to predictions that are often comparable or more accurate than competing methods. 

First, we report the accuracy metric $\predaccuracy{N}{M}$ for eight subpopulations in gnomAD, Afroamerican (Amr.), South East Asian (SE. As.), Other East Asian (Ot. E. As.), Finnish (Fin.), South European (S. Eu.), Swedish (Swe.), South Asian (S. As.) and the remaining Other. In \Cref{fig:app_gnomAD_accuracy_50} we show results (over $S=50$ Monte-Carlo re-draws of the data from the estimated frequencies $\hat{\theta}$) of retaining $N = 50$ datapoints for training, and extrapolating to the largest available sample size $M$. In \Cref{fig:app_gnomAD_accuracy_100} we report results for the same metric, with training performed by retaining $N=100$ datapoints.

\begin{figure}
    \centering
      \centering \includegraphics[width=\textwidth,height=\textheight,keepaspectratio]{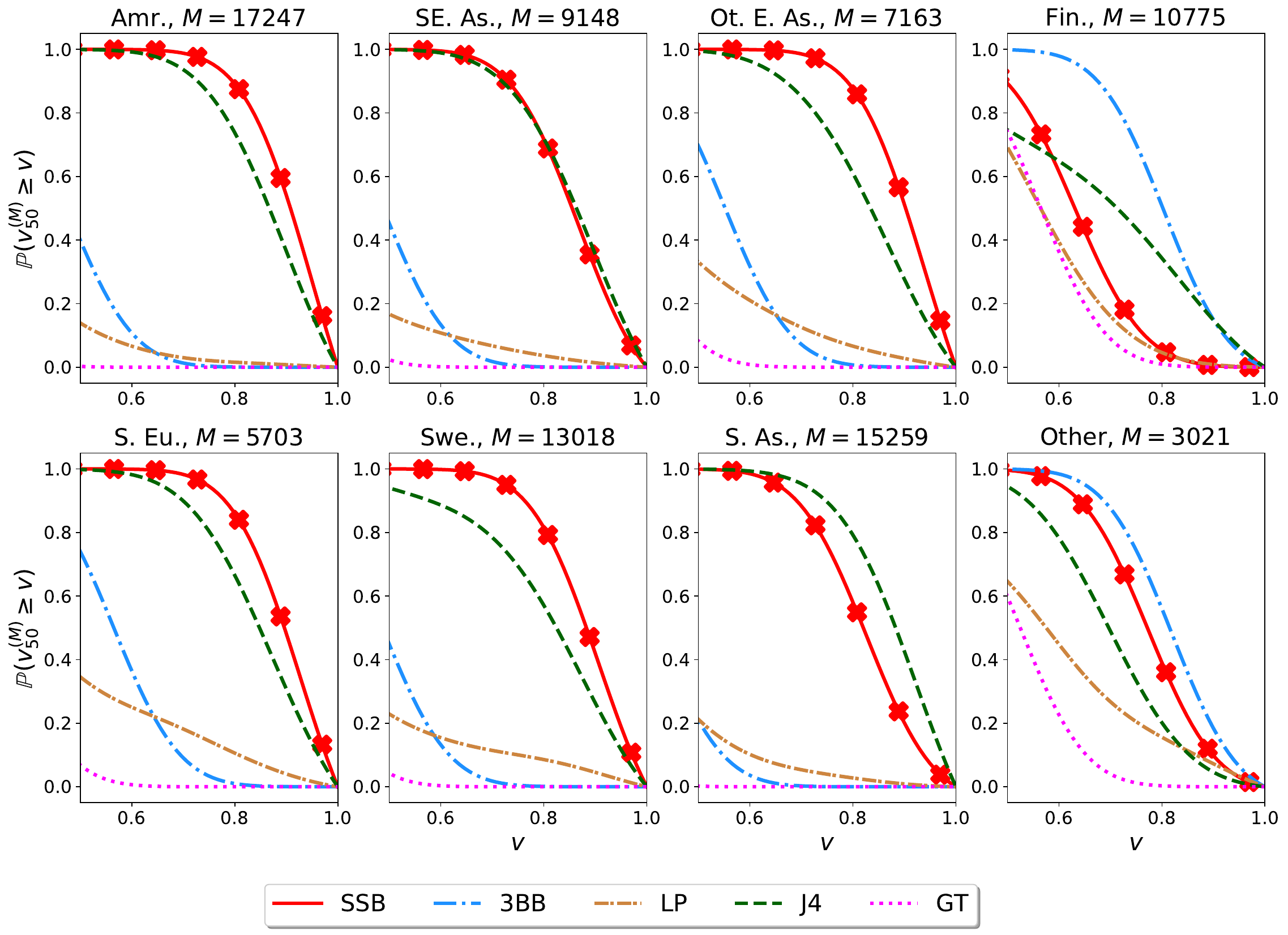}
    \caption{\footnotesize{Accuracy metric $\predaccuracy{50}{M}$ for eight subpopulations in the gnomAD dataset. For each subpopulation we retain $N=50$ observations for training, and extrapolate to the largest possible value $M$. Results are over $S=50$ Monte-Carlo draws of the data, as described in \Cref{sec:exp_setup_gnomAD}}}
    \label{fig:app_gnomAD_accuracy_50}
\end{figure}

\begin{figure}
    \centering
      \centering \includegraphics[width=\textwidth,height=\textheight,keepaspectratio]{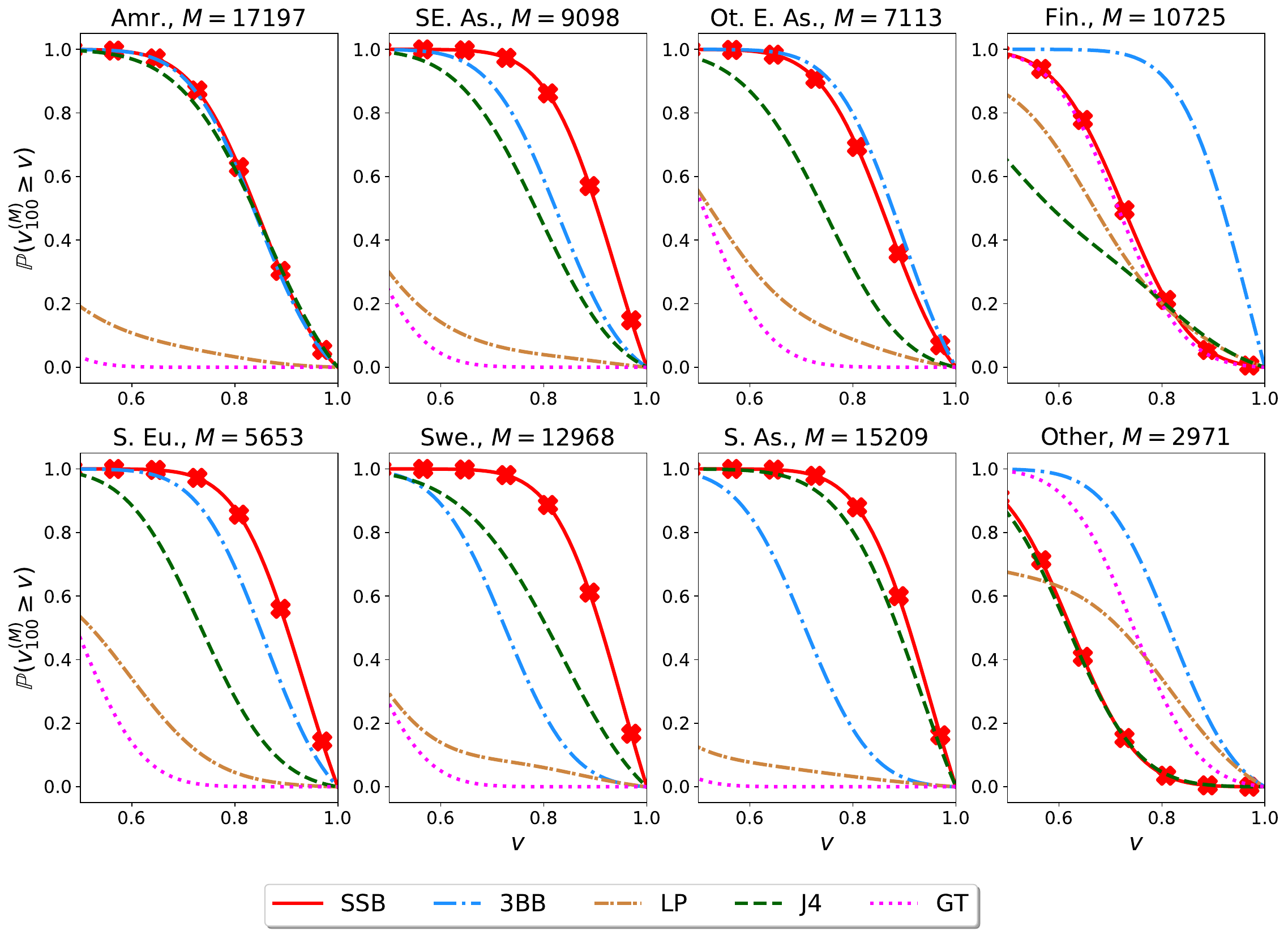}
    \caption{\footnotesize{Same setup as in \Cref{fig:app_gnomAD_accuracy_50}, now for $N=100$.}}
    \label{fig:app_gnomAD_accuracy_100}
\end{figure}

Next, we provider boxplots that report the (aggregated) accuracy of the metric $\predaccuracy{N}{M}$ across all the eight populations, and all the $S=50$ Monte-Carlo draws (so that each boxplot reports the accuracy of a total of $50 \times 8 = 400$ accuracy values), for $N=50$ (\Cref{fig:app_gnomAD_boxplot_50}) as well as $N=100$ (\Cref{fig:app_gnomAD_boxplot_100}).

\begin{figure}
    \centering
      \centering \includegraphics[width=\textwidth,height=\textheight,keepaspectratio]{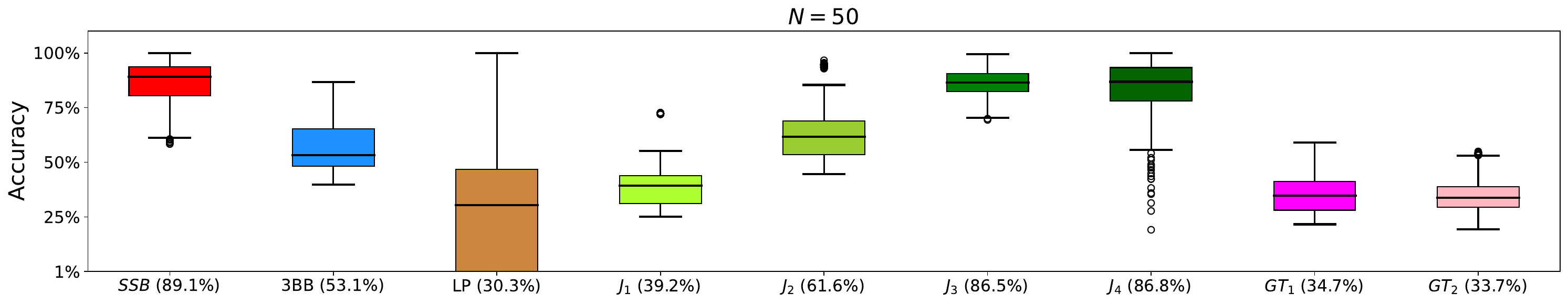}
    \caption{\footnotesize{Accuracy of the compared methods, now over all the eight subpopulations and over $50$ Monte Carlo draws for each population. $N=50$, and $M$ is set to be the largest possible extrapolation size for each subpopulation.}}
    \label{fig:app_gnomAD_boxplot_50}
\end{figure}

\begin{figure}
    \centering
      \centering \includegraphics[width=\textwidth,height=\textheight,keepaspectratio]{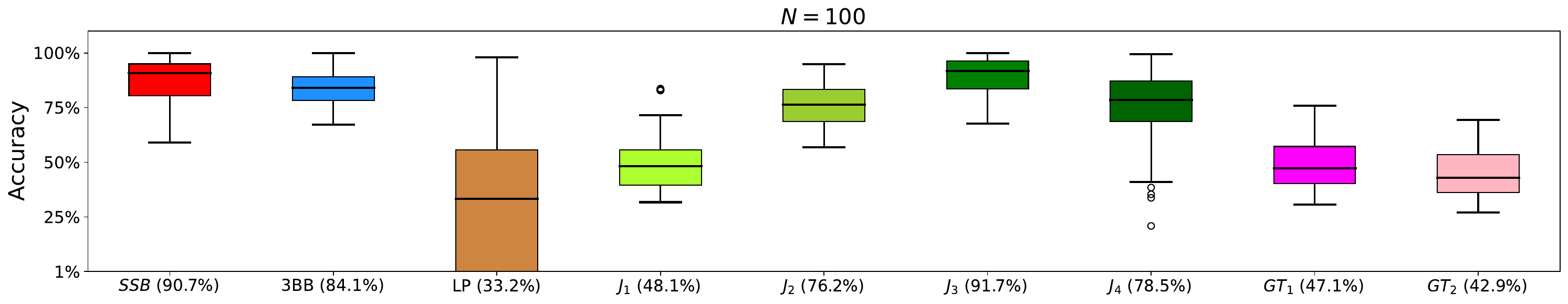}
    \caption{\footnotesize{Same setup as in \Cref{fig:app_gnomAD_boxplot_50}, now for $N=100$.}}
    \label{fig:app_gnomAD_boxplot_100}
\end{figure}

\clearpage

\subsection{Additional boxplots}

Since in \Cref{fig:app_gnomAD_boxplot_50} and \Cref{fig:app_gnomAD_boxplot_100} we are aggregating result in which $N$ is consistent for all populations, but $M$ differs, we also report boxplots of each subpopulation individually.

\begin{figure}
    \centering
      \centering \includegraphics[width=\textwidth,height=\textheight,keepaspectratio]{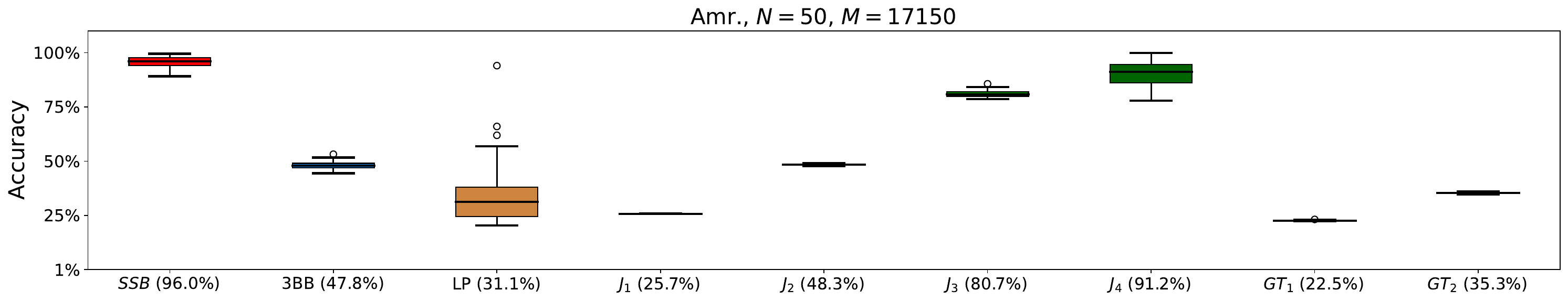}
    \caption{\footnotesize{Same setup as in \Cref{fig:app_gnomAD_boxplot_50}, but only for the American subpopulation.}}
\end{figure}

\begin{figure}
    \centering
      \centering \includegraphics[width=\textwidth,height=\textheight,keepaspectratio]{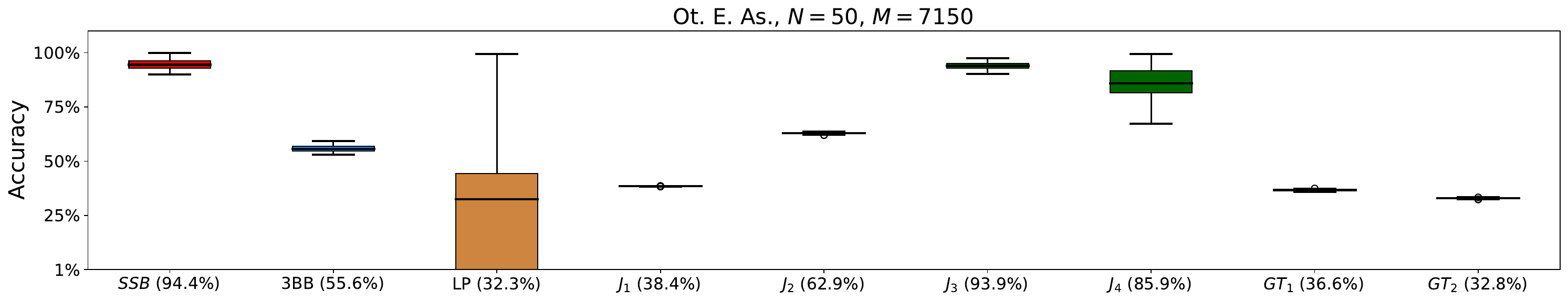}
    \caption{\footnotesize{Same setup as in \Cref{fig:app_gnomAD_boxplot_50}, but only for the Other East Asian subpopulation.}}
\end{figure}

\begin{figure}
    \centering
      \centering \includegraphics[width=\textwidth,height=\textheight,keepaspectratio]{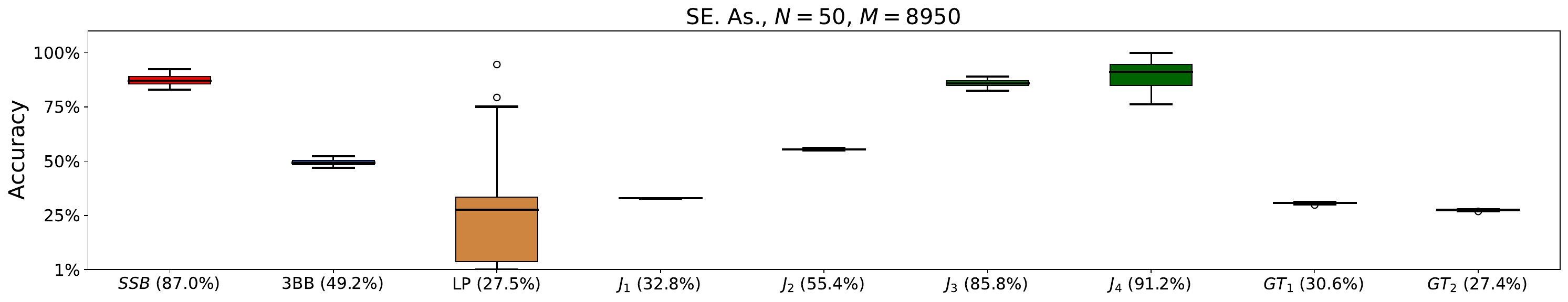}
    \caption{\footnotesize{Same setup as in \Cref{fig:app_gnomAD_boxplot_50}, but only for the East Asian subpopulation.}}
\end{figure}

\begin{figure}
    \centering
      \centering \includegraphics[width=\textwidth,height=\textheight,keepaspectratio]{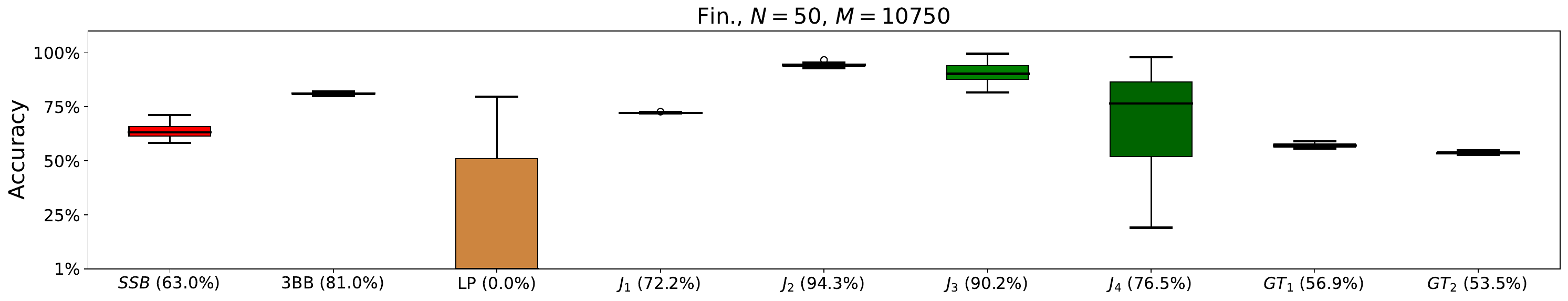}
    \caption{\footnotesize{Same setup as in \Cref{fig:app_gnomAD_boxplot_50}, but only for the Finnish subpopulation.}}
\end{figure}

\begin{figure}
    \centering
      \centering \includegraphics[width=\textwidth,height=\textheight,keepaspectratio]{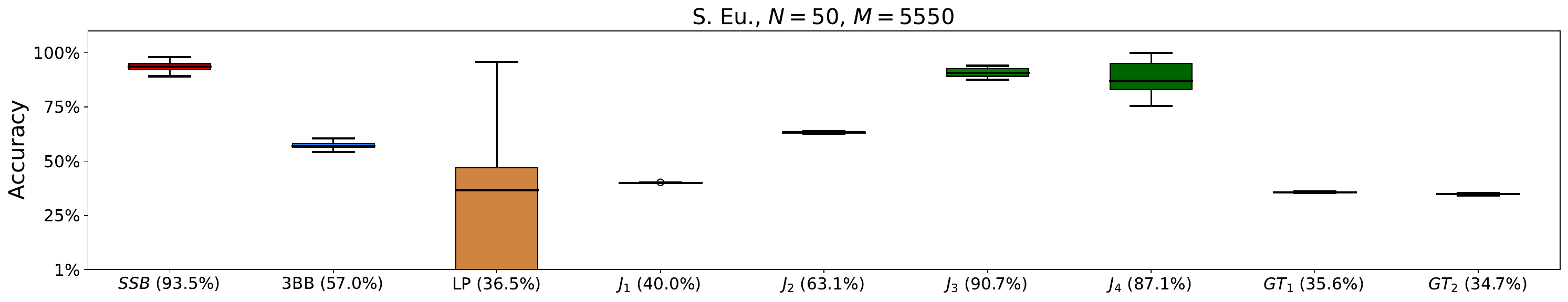}
    \caption{\footnotesize{Same setup as in \Cref{fig:app_gnomAD_boxplot_50}, but only for the Southern European subpopulation.}}
\end{figure}

\begin{figure}
    \centering
      \centering \includegraphics[width=\textwidth,height=\textheight,keepaspectratio]{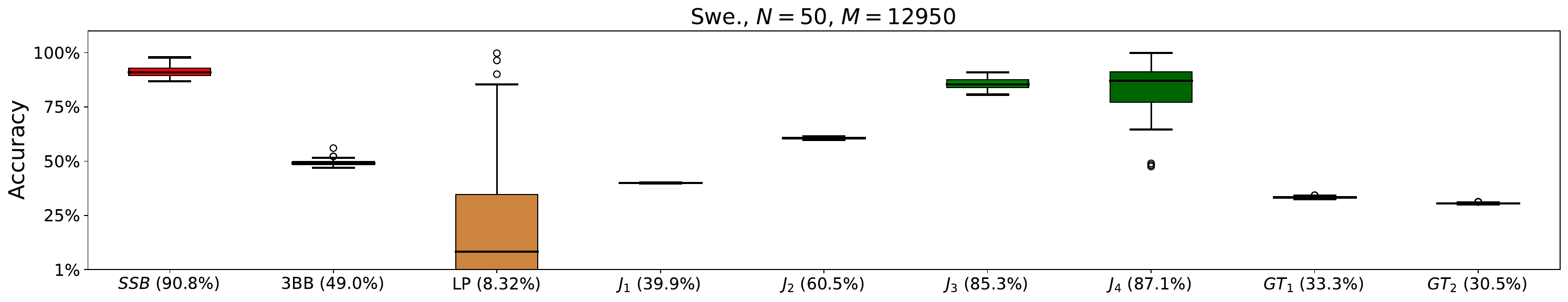}
    \caption{\footnotesize{Same setup as in \Cref{fig:app_gnomAD_boxplot_50}, but only for the Swedish subpopulation.}}
\end{figure}

\begin{figure}
    \centering
      \centering \includegraphics[width=\textwidth,height=\textheight,keepaspectratio]{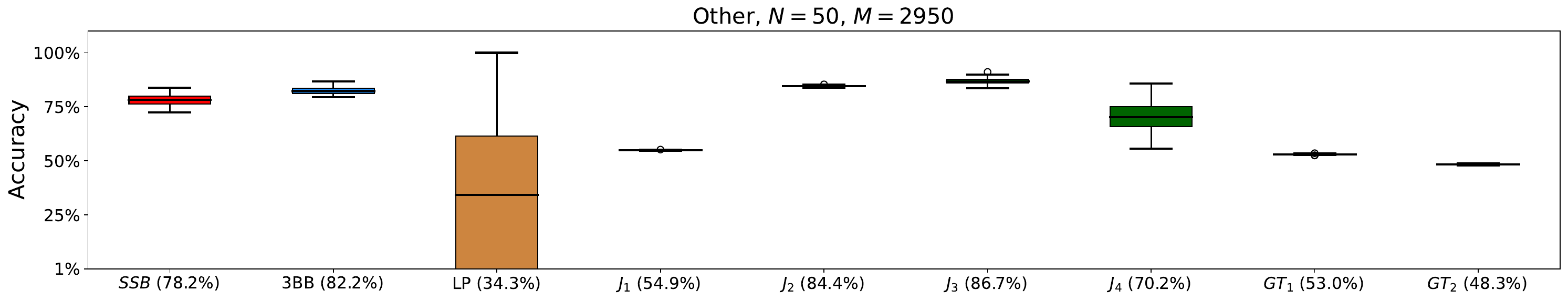}
    \caption{\footnotesize{Same setup as in \Cref{fig:app_gnomAD_boxplot_50}, but only for the ``Other'' subpopulation.}}
\end{figure}

\begin{figure}
    \centering
      \centering \includegraphics[width=\textwidth,height=\textheight,keepaspectratio]{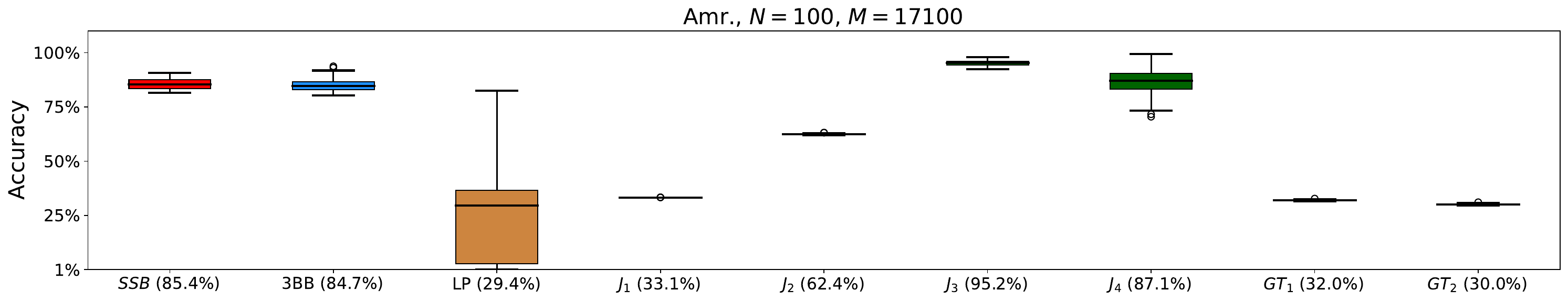}
    \caption{\footnotesize{Same setup as in \Cref{fig:app_gnomAD_boxplot_100}, but only for the American subpopulation.}}
\end{figure}

\begin{figure}
    \centering
      \centering \includegraphics[width=\textwidth,height=\textheight,keepaspectratio]{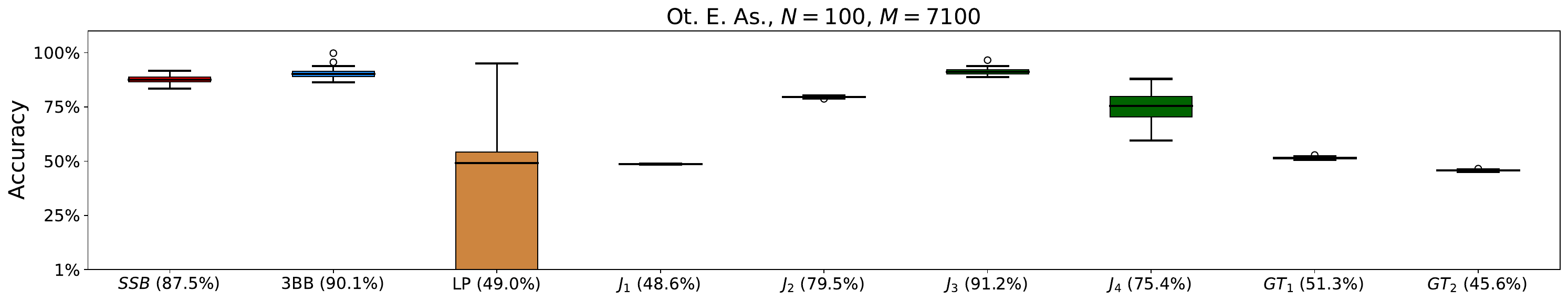}
    \caption{\footnotesize{Same setup as in \Cref{fig:app_gnomAD_boxplot_100}, but only for the Other East Asian subpopulation.}}
\end{figure}

\begin{figure}
    \centering
      \centering \includegraphics[width=\textwidth,height=\textheight,keepaspectratio]{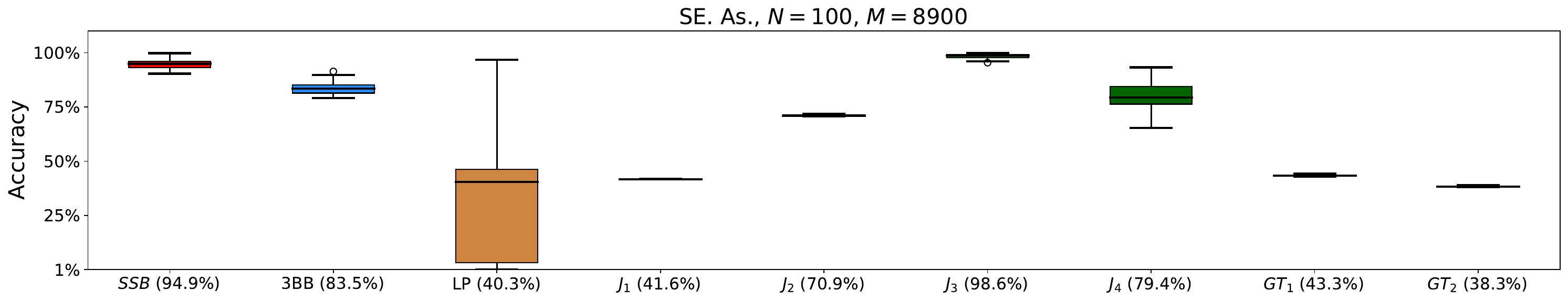}
    \caption{\footnotesize{Same setup as in \Cref{fig:app_gnomAD_boxplot_100}, but only for the East Asian subpopulation.}}
\end{figure}

\begin{figure}
    \centering
      \centering \includegraphics[width=\textwidth,height=\textheight,keepaspectratio]{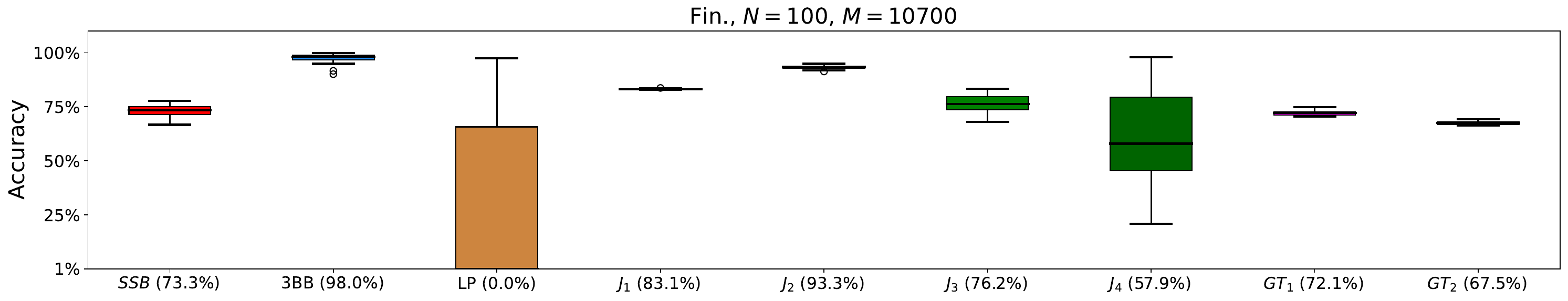}
    \caption{\footnotesize{Same setup as in \Cref{fig:app_gnomAD_boxplot_100}, but only for the Finnish subpopulation.}}
\end{figure}

\begin{figure}
    \centering
      \centering \includegraphics[width=\textwidth,height=\textheight,keepaspectratio]{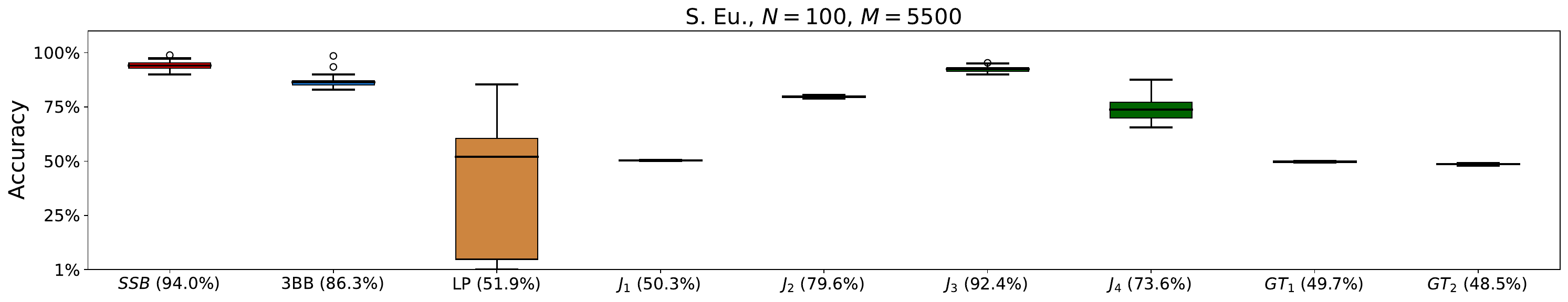}
    \caption{\footnotesize{Same setup as in \Cref{fig:app_gnomAD_boxplot_100}, but only for the Southern European subpopulation.}}
\end{figure}

\begin{figure}
    \centering
      \centering \includegraphics[width=\textwidth,height=\textheight,keepaspectratio]{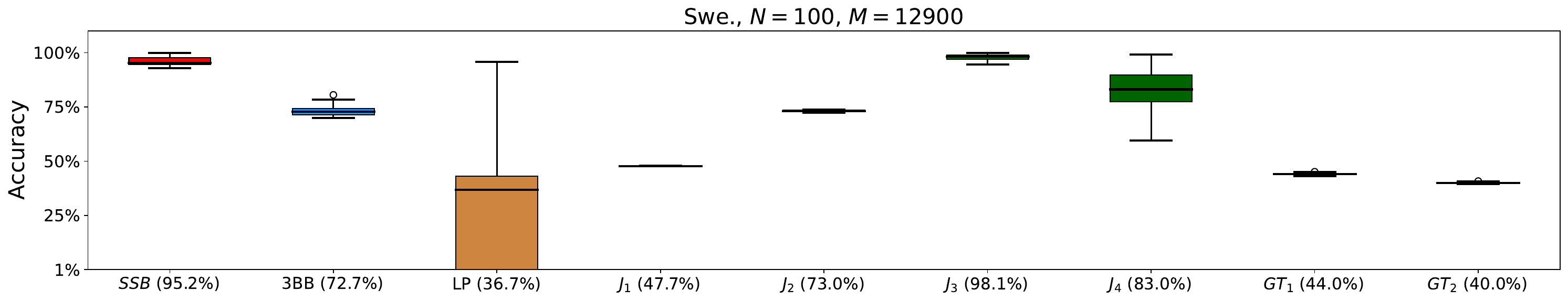}
    \caption{\footnotesize{Same setup as in \Cref{fig:app_gnomAD_boxplot_100}, but only for the Swedish subpopulation.}}
\end{figure}

\begin{figure}
    \centering
      \centering \includegraphics[width=\textwidth,height=\textheight,keepaspectratio]{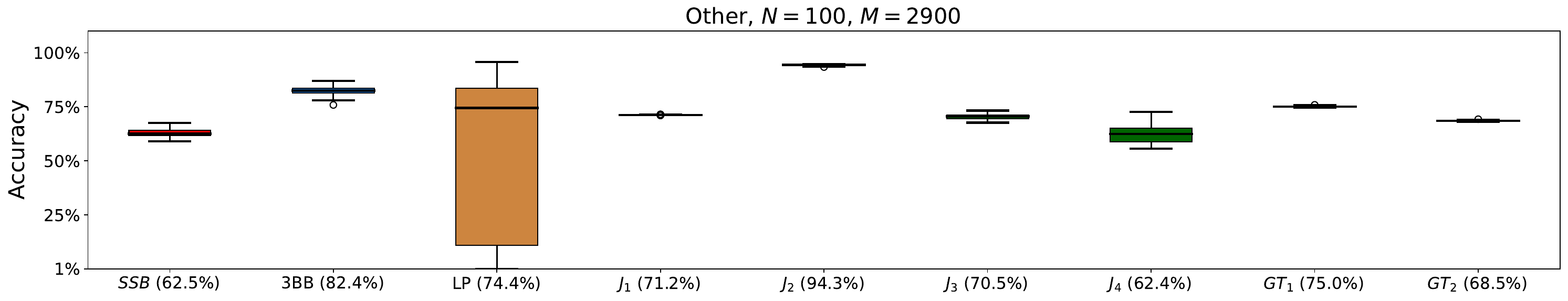}
    \caption{\footnotesize{Same setup as in \Cref{fig:app_gnomAD_boxplot_100}, but only for the ``Other'' subpopulation.}}
\end{figure}

\clearpage

\bibliographystyle{abbrvnat}
\bibliography{references}

\end{document}